\DeclareMathOperator*{\argmin}{argmin}
\DeclareMathOperator*{\argmax}{argmax}
\begin{document}
\raggedbottom

\title{Taxonomization of Combinatorial Optimization Problems in Fourier Space}

\titlerunning{Taxonomization of Combinatorial Optimization Problems in Fourier Space}        

\author{Anne Elorza					\and
				Leticia Hernando    \and
        Jose A. Lozano
}


\institute{A. Elorza \at
             Intelligent Systems Group \\
             Department of Computer Science and Artificial Intelligence\\
             University of the Basque Country UPV/EHU\\
             20018 San Sebasti\'an, Spain\\
             \email{anne.elorza@ehu.eus}           
           \and%
			L. Hernando \at
             Intelligent Systems Group \\
             Department of Applied Mathematics and Statistics and Operational Research\\
            University of the Basque Country UPV/EHU\\
             48940 Leioa, Spain\\
              \email{leticia.hernando@ehu.eus}           
           \and
           J. A. Lozano \at
		Basque Center for Applied Mathematics (BCAM) \\
		48009 Bilbao, Spain \\
             Intelligent Systems Group \\
             Department of Computer Science and Artificial Intelligence\\
            University of the Basque Country UPV/EHU\\
             20018 San Sebasti\'an, Spain\\
              \email{ja.lozano@ehu.eus}
              }


\maketitle

%
%
%
%
%
%
%
%

%
%

\begin{abstract}

We propose and develop a novel framework for analyzing permutation-based combinatorial optimization problems, which could eventually be extended to other types of problems. Our approach is based on the decomposition of the objective functions via the generalized Fourier transform. We characterize the Fourier coefficients of three different problems: the Traveling Salesman Problem, the Linear Ordering Problem and the Quadratic Assignment Problem. This implies that these three problems can be viewed in a homogeneous space, such as the Fourier domain. Our final target would be to create a taxonomy of problem instances, so that functions which are treated similarly under the same search algorithms are grouped together. For this purpose, we simplify the representations of the objective functions by considering them as permutations of the elements of the search space, and study the permutations that are associated with different problems.



\keywords{Combinatorial Optimization Problems \and Fourier Transform \and Rankings  \and Permutations}
\end{abstract}

%
%

\section{Introduction}\label{intro}

The field of combinatorial optimization (CO) has mainly developed by providing new algorithms. Even though more theoretical advances are undeniable, many fundamental questions remain still unanswered. The publication of the ``no free lunch theorem'' \cite{wolpert1997no} made clear the necessity of designing and employing problem-specific algorithms, since no algorithm performs better than a random search over the whole set of problems. Consequently, the main target would be to find an association between problems and algorithms, in such a way that, once we know the problem we want to solve, we can use the most efficient algorithm for it. \\   

A first step in this direction would be te create a taxonomy, so that problems showing similar behaviours under the same algorithms are grouped together. The literature already provides several taxonomies based on different criteria. The most recurrent would probably be the one given by complexity theory, which broadly divides the combinatorial optimization problems (COPs) in $P$ and $NP$-hard problems \cite{cook2006p}. A finer classification is given by parametrized complexity, where the complexity of the problem is based on a parameter \cite{downey2012parameterized}. Even so, we could go a step forward and use a  more fine-grain view: we would like to be able to select the best algorithm given a specific instance of a combinatorial optimization problem, as two different instances of the same problem can have different characteristics that make them amenable to different algorithms. \\

When facing such a task, one immediately encounters two obstacles: the immensity of the space of permutation-based functions, as well as the fact that the formulation of the different problems has different meanings. 
For instance, although the linear ordering problem (LOP), the traveling salesman problem (TSP), the quadratic assignment problem (QAP) and the permutation flowshop scheduling problem (PFSP) are defined by means of matrices, the meaning of the elements of these matrices are different according to the problem. In this sense, a problem addresses distances between cities, it copes with flows between factories, or it is about tasks on machines. To overcome this limitation, we propose and develop a novel working framework: viewing the COPs in the Fourier domain. Similarly to elementary landscape analysis \cite{chicano2011methodology}, Fourier analysis produces a decomposition of a given function into a sum of smaller pieces, from which the original function can easily be recovered. Whereas its real-line counterpart is universally-known in the area of applied mathematics \cite{korner1989fourier}, at the present time, this is not the case of the Fourier transform (FT) over permutations \cite{terras1985harmonic,terras1999fourier,terras2012harmonic}. Despite being substantially less noted than its real analogue, the FT over permutations has recently been gaining attention in the computer science field, giving rise to more proposals of applications and algorithms \cite{huang2009fourier,kondor2008group}, such as object tracking or analyzing ranking data. A survey with applications of the group FT, can be found at \cite{rockmore2004recent}. Even though its employment has also reached the area of CO \cite{kondor2010fourier,rockmore2002fast} and its boolean twin, the Walsh transform, has also been considered \cite{christie2016role,goldberg1989genetic}, the use of Fourier analysis has still remained limited. However, the FT seems to offer a suitable space for COPs to be treated in an homogeneous way, via their Fourier decomposition. \\ 

The other difficulty, the arduousness of dealing with a space as big as the permutation-based functions, can be overcome by treating functions as permutations of the elements of the search space. It suffices to observe that most heuristic algorithms do not use the value of the objective functions in the optimization process, but they only need to compare the objective function value of several solutions. Some examples are evolutionary algorithms that use tournament or ranking selection operators, or local search based-algorithms such as tabu search, variable neighborhood search, iterated local search, greedy randomized adaptive search, etc. These algorithms behave the same in two functions $f$ and $g$ such that for all $x$ and $y$ if $f(x) > (<) f(y)$ then $g(x) > (<) g(y)$, i.e. the behaviour of the algorithms only depends on the ranking of the solutions imposed by the function. So real-valued functions can be simplified and treated as rankings (or permutations) of elements of the search space. Thus, we limit the infinite number of possible objective functions to a finite number of permutations. \\


Under this working framework, our first contribution consists in characterizing the Fourier coefficients of functions representing LOP and TSP problems, respectively, and is materialized in two theorems and a number of experiments. We demonstrate very specific properties that the Fourier coefficients of any objective function associated with these problems must satisfy. We also shed light on the opposite path by making experimentation that strongly suggests what kind of coefficients would lead to TSP and LOP problems. Finally, we experimentally analyze general features of the Fourier coefficients of the QAP.  \\


Our second contribution is related to the idea that we adopt of regarding functions as permutations or, in other words, viewing them as rankings of the elements of the search space, where the first item is the solution with the optimal value of the function and the rest are sorted in descending or ascending order (depending on whether the objective is to maximize or minimize, respectively). When one adopts this approach, a range of issues arise. Probably, the most primary one would be: given a specific problem, can we characterize all the rankings that its instances can generate? If not, can we discard any of them? In the case of the Quadratic Assignment Problem, we have answered to this question for low dimensions, and for the Linear Ordering Problem, we have found a restrictive bound on the number of different rankings that it can generate for any dimension. \\

The rest of the paper is organized as follows: Sections \ref{sec::FT_background} and \ref{subsec::COPs_background} present the necessary background related to the Fourier transform and combinatorial optimization problems, respectively. Sections \ref{sec::StructureCOPcoeff} and \ref{sec::rankings} show our contributions, the former regarding the Fourier characterizations of COPs and the latter regarding the study of COPs when considered as rankings. Section \ref{sec::Conclusion} concludes the paper, including proposals for future work. We have also added an appendix, where the most specific and technical aspects are gathered. Appendix \ref{appendix::proofs} contains the proofs of the theorems that are presented in Section \ref{sec::StructureCOPcoeff}, while the other two appendices, Appendix \ref{app:isLOP_isTSP} and Appendix \ref{appendix::impossibleRanking} explain how certain methods and algorithms of Section \ref{sec::rankings} have been developed and implemented.\\

\section{Fourier transform on the symmetric group}\label{sec::FT_background}

The next lines offer a brief overview of the Fourier transform. The interested reader may refer to \cite{sagan2013symmetric} for a deeper insight into the precise algebraic concepts. Since this is a general introduction to the subject, the most technical details, which are needed to prove the theorems of section \ref{sec::StructureCOPcoeff}, can be found in appendix \ref{app:Theorems_Background}.\\

The Fourier transform on the symmetric group comes from a generalization of the well-known transform on the real line to finite groups. Thus, in an initial encounter, it may be useful to understand it in contrast with the real case. Given a function $f:\Sigma_n\longrightarrow º\mathbb{R}$, the Fourier transform decomposes $f$ into certain coefficients by means of a set of base functions (on the real line, sines and cosines). When working with $\Sigma_n$, the base is composed by the \emph{irreducibles} of $\Sigma_n$, a set of functions whose image are invertible matrices. These are formally defined in the context of the representation theory. 

\begin{definition}[Representation]  
$\rho:\Sigma_n\longrightarrow GL_m$ is a representation if it preserves the group product, that is, $\rho(\sigma_1\sigma_2) = \rho(\sigma_1)\cdot\rho(\sigma_2)$ for all $\sigma_1, \sigma_2$. $GL_m$ is the set of invertible complex-valued matrices. The size $m$ of the matrices is also called the dimension of the representation.
\end{definition}

\begin{definition}[Equivalence]
Two representations, $\rho_1$ and $\rho_2$ are equivalent if there exists an invertible matrix $C$ such that
$$\rho_2(\sigma) = C^{-1}\cdot\rho_1(\sigma)\cdot C  \qquad \text{for all } \sigma\in\Sigma_n,$$

which is denoted by $\rho_1 \equiv \rho_2$.
\end{definition}

\begin{definition}[Direct sum]
Given two representations, $\rho_1$ and $\rho_2$, their direct sum is the representation $\rho_1 \oplus \rho_2$ that satisfies
\[
\renewcommand\arraystretch{1.3}
\rho_1\oplus\rho_2(\sigma) =\mleft[
\begin{array}{c|c}
  \rho_1(\sigma) & 0 \\
  \hline
  0 & \rho_2(\sigma)
\end{array}
\mright]
\]

\end{definition}

\begin{definition}[Reducibility]  
A representation $\rho$ is reducible if there exist two representations $\rho_1$ and $\rho_2$ such that $\rho = \rho_1\oplus\rho_2$. Otherwise, it is irreducible.
\end{definition}

For a finite group $G$, the set of irreducible representations (up to equivalence) is finite, implying that one can select a finite number of representations and "build up" the rest from them. For $\Sigma_n$, it has been proved that the number of inequivalent irreducible representations is the number of partitions of $n$.

\begin{definition}[Partition of a number]  
$\lambda = (\lambda_1, \cdots, \lambda_k)$ is a partition of $n$, denoted by $\lambda\vdash n$, if $\sum_{i=1}^k\lambda_i = n$.
\end{definition}

Therefore, if one chooses a set of (inequivalent) irreducible representations of $\Sigma_n$, also called system of irreps, it can be indexed by the partitions of $n$. The specific set or irreducible representations that is used in this paper is the canonical one, which is called Young's Orthogonal Representations (YOR) \cite{kondor2010fourier}. The interesting property of these representations is that the matrices are real-valued and orthogonal. Once a set of irreducibles $\{\rho_{\lambda} : \lambda\vdash n  \}$ is established, we can project the function $f$ onto this base and define, thus, the Fourier transform and Fourier coefficients.

\begin{definition}[Fourier transform]  
The Fourier transform of a function \newline$f:\Sigma_n\longrightarrow \mathbb{R}$ at a representation $\rho$ is defined as $$\hat{f}_\rho = \sum_{\sigma} f(\sigma)\rho(\sigma).$$
\end{definition}

\begin{definition}[Fourier coefficients]  
Given a system of irreps, $\{\rho_{\lambda} : \lambda\vdash n  \}$, the Fourier coefficients of a function $f$ are defined as the collection of the Fourier coefficients at each of the irreducibles of the system
$$\{\hat{f}_{\rho_\lambda} : \lambda\vdash n \}.$$
\end{definition}

Note that, unlike the real case, the coefficients are not simply numerical values, they are stored in matrices, instead. Even so, various of the properties on the real line, such as invertibility (Theorem \ref{theo::invertibility}), linearity and the convolution theorem still hold for the symmetric group. \\

\begin{theorem}[Inverse Fourier transform]\label{theo::invertibility}
\cite{huang2009fourier} Working under YOR, a function $f$ can be computed in terms of its Fourier coefficients according to the following formula
$$f(\sigma) = \dfrac{1}{|\Sigma_n|}\sum_\lambda d_{\rho_\lambda} Tr\left[\hat{f}_{\rho_\lambda}^T \cdot \rho_\lambda(\sigma)  \right],$$ 
where $d_{\rho_\lambda}$ is the dimension of the representation $\rho_\lambda$.
\end{theorem}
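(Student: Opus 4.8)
The plan is to verify the formula by substituting the definition of the Fourier transform into the right-hand side and collapsing everything by means of the orthogonality relations satisfied by YOR. First I would expand the trace in terms of matrix entries. Using $Tr[\hat{f}_{\rho_\lambda}^T \cdot \rho_\lambda(\sigma)] = \sum_{i,j} (\hat{f}_{\rho_\lambda})_{ji}\, (\rho_\lambda(\sigma))_{ji}$ and then inserting the definition $(\hat{f}_{\rho_\lambda})_{ji} = \sum_\tau f(\tau)(\rho_\lambda(\tau))_{ji}$, the right-hand side becomes a finite sum over $\lambda$, over the matrix indices $i,j$, and over the group element $\tau$. Interchanging these finite sums so that $\sum_\tau f(\tau)$ sits on the outside, it remains to prove that, for every pair $\sigma,\tau\in\Sigma_n$,
$$\frac{1}{|\Sigma_n|}\sum_\lambda d_{\rho_\lambda}\sum_{i,j}(\rho_\lambda(\tau))_{ji}(\rho_\lambda(\sigma))_{ji} = \delta_{\sigma,\tau},$$
since the whole expression then reduces to $\sum_\tau f(\tau)\,\delta_{\sigma,\tau} = f(\sigma)$.

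Next I would exploit the fact that YOR matrices are real and orthogonal, so that $\rho_\lambda(\tau)^T = \rho_\lambda(\tau^{-1})$, i.e. $(\rho_\lambda(\tau))_{ji} = (\rho_\lambda(\tau^{-1}))_{ij}$. Substituting this and contracting the indices converts the inner double sum into a single trace:
$$\sum_{i,j}(\rho_\lambda(\tau^{-1}))_{ij}(\rho_\lambda(\sigma))_{ji} = Tr[\rho_\lambda(\tau^{-1})\rho_\lambda(\sigma)] = Tr[\rho_\lambda(\tau^{-1}\sigma)] = \chi_\lambda(\tau^{-1}\sigma),$$
where the middle equality uses the homomorphism property of $\rho_\lambda$ and $\chi_\lambda$ denotes its character. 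The target identity therefore amounts to showing $\frac{1}{|\Sigma_n|}\sum_\lambda d_{\rho_\lambda}\,\chi_\lambda(\tau^{-1}\sigma) = \delta_{\tau^{-1}\sigma,\,e} = \delta_{\sigma,\tau}$.

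The final and central step is this character identity, which I expect to be the main obstacle, since it carries the representation-theoretic content rather than mere bookkeeping. It follows from decomposing the regular representation of $\Sigma_n$ into irreducibles, each occurring with multiplicity equal to its own dimension; equivalently, its character satisfies $\chi_{\mathrm{reg}}(g) = \sum_\lambda d_{\rho_\lambda}\,\chi_\lambda(g)$. Since the regular representation sends $g$ to a permutation matrix whose trace is $|\Sigma_n|$ when $g=e$ and $0$ otherwise, one obtains $\sum_\lambda d_{\rho_\lambda}\,\chi_\lambda(g) = |\Sigma_n|\,\delta_{g,e}$, which, evaluated at $g=\tau^{-1}\sigma$, closes the argument. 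Throughout, the only ingredients used are the homomorphism property and the real-orthogonality of YOR together with the regular-representation decomposition; the latter can either be cited from \cite{huang2009fourier} or derived from the Schur orthogonality relations, which is the natural fallback should the decomposition not be taken as given.
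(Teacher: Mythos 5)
Your proposal is correct. Note that the paper itself gives no proof of this statement: Theorem \ref{theo::invertibility} is simply imported from \cite{huang2009fourier}, so there is no internal argument to compare against. Your derivation is the standard inversion proof, and every step checks out: the entrywise expansion $Tr[\hat{f}_{\rho_\lambda}^T\rho_\lambda(\sigma)]=\sum_{i,j}(\hat{f}_{\rho_\lambda})_{ji}(\rho_\lambda(\sigma))_{ji}$ is right, the interchange of the finite sums is harmless, the identity $(\rho_\lambda(\tau))_{ji}=(\rho_\lambda(\tau^{-1}))_{ij}$ is exactly where the real-orthogonality of YOR enters (for a general complex system one would need a conjugate transpose here, so it is good that you flagged this hypothesis explicitly), and the contraction $\sum_{i,j}(\rho_\lambda(\tau^{-1}))_{ij}(\rho_\lambda(\sigma))_{ji}=Tr[\rho_\lambda(\tau^{-1}\sigma)]$ together with the regular-representation character identity $\sum_\lambda d_{\rho_\lambda}\chi_\lambda(g)=|\Sigma_n|\,\delta_{g,e}$ closes the argument. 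One small point worth making explicit: the final step silently uses that the sum over $\lambda$ ranges over a \emph{complete} system of irreps (one representative of every equivalence class, here indexed by all partitions of $n$); the regular-representation decomposition fails without completeness, and the character identity is insensitive to the particular choice of representatives only because characters are class-invariant under equivalence. Since the paper's Definition of a system of irreps guarantees completeness, your proof is sound as written, and your fallback to Schur orthogonality would also work if one did not wish to take the regular-representation decomposition as given.
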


\section{Combinatorial optimization problems}\label{subsec::COPs_background}

In the field of permutation-based combinatorial optimization, the aim is to optimize an \emph{objective function} $f:\Sigma_n\longrightarrow \mathbb{R}$, that is,to find 
$$\argmin\limits_{\sigma\in\Sigma_n} f(\sigma),\, \text{ if we are minimizing}$$ 
or $$\argmax\limits_{\sigma\in\Sigma_n} f(\sigma),\, \text{ if we are maximizing.}$$\\

In the next sections, we present the three different problems that we are studying: the Linear Ordering Problem, the Traveling Salesman Problem and the Quadratic Assignment Problem.\\

\subsection{Linear Ordering Problem}

The linear ordering problem (LOP) is given by a square matrix $A = [a_{ij}]$ of size $n$, and consists of maximizing the sum of the upper-diagonal elements of $A$, or equivalently, minimizing the sum of the lower-diagonal elements of $A$, among all the possible reorderings of rows and columns. Although the version of maximization is the one usually found in the literature, in this
paper, we consider the version of minimization, as we will refer to all minimization problems (TSP and QAP). So, the problem can be expressed as 
$$\argmin\limits_{\sigma\in\Sigma_n}\ \sum_{j=1}^{n-1} \sum_{i=j+1}^n\!\! a_{\sigma(i)\sigma(j)}$$

\subsection{Traveling Salesman Problem}

The traveling salesman problem (TSP) consists of finding the shortest route that a traveling salesperson should take to visit a number of cities and then come back to the starting point. 
The distances between the cities are given by a square matrix $D = [d_{ij}]$ of size $n$, where $n$ is the number of cities. The problem can be formulated as
$$\argmin\limits_{\sigma\in\Sigma_n}\ d_{\sigma(n)\sigma(1)} + \sum_{i=1}^{n-1} d_{\sigma(i)\sigma(i+1)}$$

\subsection{Quadratic Assignment Problem}

In the Quadratic Assignment Problem (QAP), a set of $n$ facilities has to be assigned to $n$ locations. The aim is to reduce the cost of the flow between the facilities, which depends on the distances between their locations. If $A=[a_{ij}]$ is the distance matrix, which measures the distance between locations, and $A' = [a'_{ij}]$ is the flow matrix, which measures the flow between facilities, then the QAP can be expressed as
$$\argmin\limits_{\sigma\in\Sigma_n}\ \sum_{i=1}^{n-1} \sum_{j=i+1}^n\!\! a_{\sigma(i)\sigma(j)}\cdot a'_{ij} $$

The LOP and the TSP can be reformulated as particular cases of the QAP, if the values of matrix $A'$ are properly fixed, which is a property that is analyzed in more detail in Appendix \ref{appendix::proofs}.\\

%
%

\section{Characterizing problems in Fourier domain}\label{sec::StructureCOPcoeff}

Finding an accurate characterization of the Fourier coefficients of the COPs is one of the fundamental goals of our research. This means that we are interested in describing the properties that all the objective functions associated with a certain problem, such as the LOP, the TSP or the QAP, have in common. Characterizing the problems in the Fourier space could be interesting in the pursuit of our taxonomization, since we could compare different problems, with disparate definitions, by looking at the similarities and differences between their Fourier coefficients.\\

It has already been proved that the QAP has at most four non-zero Fourier coefficients \cite{kondor2010fourier,rockmore2002fast}. This property is undoubtedly restrictive. The number of Fourier coefficients is the number of partitions of $n$, so it depends on $n$ and, what is more, it grows with $n$, too. This implies that, in the case of the objective function of the QAP, as $n$ increases, there is also a higher proportion of null Fourier coefficients. Yet, not all the functions with 4 non-zero Fourier coefficients must necessarily be QAP functions. Consequently, it would be desirable to find a more exact description of the coefficients, in order to have a bijective correspondence between QAP instances and Fourier coefficients. In such a scenery, we could work interchangeably with flow and distance matrices or in the Fourier domain.\\

In this vein, we find definite conditions that LOP and TSP functions must satisfy (Theorems \ref{theo::LOPcoeff} and \ref{theo::asymmetricTSPcoeff}), and empirical evidence suggests that the coefficients of a QAP should follow certain patterns, beyond the already proved sparsity.\\

%
%

\subsection{Characterization of the LOP}\label{subsec::coeffLOP}


Theorem \ref{theo::LOPcoeff} enumerates the properties that the Fourier coefficients of the objective function of any instance of the LOP must fulfil. Apart from having very few non-zero coefficients, the matrices associated with the non-zero coefficients are always rank one and the proportions among columns remain the same whenever the dimension $n$ is fixed. This is a necessary condition that the Fourier coefficients of an LOP function have to meet and is proved in Appendix \ref{appendix::LOPproof}.\\


\begin{theorem}[Fourier Coefficients of the Linear Ordering Problem]\label{theo::LOPcoeff}
If $f:\Sigma_n\longrightarrow \mathbb{R}$ is the objective function of a Linear Ordering Problem and $\lambda\vdash n$ is a partition, then the Fourier coefficients of $f$ have the following properties
\begin{enumerate}
\item $\hat{f}_\lambda = 0$ if $\lambda\ne (n), (n-1,1), (n-2,1,1)$
\item $\hat{f}_\lambda$ has at most rank one for $\lambda = (n-1,1), (n-2,1,1)$. Having rank one is equivalent to the fact that the matrix columns are proportional. For the mentioned partitions and a fixed dimension $n$, the proportions among the columns of $\hat{f}_\lambda$ are the same for all the instances. 
\end{enumerate}
\end{theorem}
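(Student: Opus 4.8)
The plan is to exploit that the LOP objective is a linear combination of very simple pairwise-order functions, and then translate everything into the permutation module that these functions span. First I would re-index the double sum by the elements $k=\sigma(i)$ and $l=\sigma(j)$ sitting at positions $i>j$, which turns the objective into
\[
f(\sigma)=\sum_{k\ne l} a_{kl}\, g_{kl}(\sigma), \qquad g_{kl}(\sigma):=\mathbf 1\{\sigma^{-1}(k)>\sigma^{-1}(l)\}.
\]
By linearity of the Fourier transform, $\hat f_\lambda=\sum_{k\ne l}a_{kl}\,\widehat{(g_{kl})}_\lambda$, so the problem reduces to understanding the family $\{g_{kl}\}$. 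The key structural observation is that this family is permuted by left translation: with $(\pi\cdot h)(\sigma)=h(\pi^{-1}\sigma)$ one checks $\pi\cdot g_{kl}=g_{\pi(k)\pi(l)}$, so $V:=\operatorname{span}\{g_{kl}\}$ is a left $\Sigma_n$-submodule of the functions on $\Sigma_n$, and $f\in V$ for every instance.

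For statement (1) I would identify $V$ as a $\Sigma_n$-module. The assignment $e_{kl}\mapsto g_{kl}$ is an equivariant surjection from the permutation module on ordered pairs of distinct points, $\operatorname{Ind}_{\Sigma_{n-2}}^{\Sigma_n}\mathbf 1$, onto $V$. By Young's rule this module carries the irreducibles indexed by $(n)$, $(n-1,1)$, $(n-2,2)$ and $(n-2,1,1)$, with $(n-1,1)$ occurring twice. The relation $g_{kl}+g_{lk}=1$ shows that the symmetric part (the unordered-pair module, carrying $(n),(n-1,1),(n-2,2)$) is sent entirely into the constants, so only $(n)$ survives from it, while the antisymmetric part spanned by $h_{kl}:=2g_{kl}-1$ carries $(n-1,1)\oplus(n-2,1,1)$ and maps injectively once one verifies the $h_{kl}$ are linearly independent. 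Hence $V\cong(n)\oplus(n-1,1)\oplus(n-2,1,1)$, each with multiplicity one, and in particular $(n-2,2)$ is absent. Since a function lying in a left-submodule expands only into the matrix coefficients of the irreducibles occurring in that module, orthogonality of matrix coefficients forces $\hat f_\lambda=0$ outside $\{(n),(n-1,1),(n-2,1,1)\}$, which is statement (1).

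Statement (2) I would derive from the same module picture. For the rank bound, use the standard fact that for any $f$ the multiplicity of $\rho_\lambda$ in the left-submodule generated by $f$ equals $\operatorname{rank}\hat f_\lambda$ (because $\widehat{\pi\cdot f}_\lambda=\rho_\lambda(\pi)\hat f_\lambda$ and $\{\rho_\lambda(\pi)\}$ spans all of $\operatorname{Mat}_{d_\lambda}$ by irreducibility); since the module generated by $f$ sits inside $V$, which contains $(n-1,1)$ and $(n-2,1,1)$ with multiplicity one, we get $\operatorname{rank}\hat f_\lambda\le 1$. For the fixed-proportions claim I would combine $\widehat{\pi\cdot g_{kl}}_\lambda=\rho_\lambda(\pi)\,\widehat{(g_{kl})}_\lambda$ with $\pi\cdot g_{kl}=g_{\pi(k)\pi(l)}$ to get $\widehat{(g_{\pi(k)\pi(l)})}_\lambda=\rho_\lambda(\pi)\,\widehat{(g_{kl})}_\lambda$. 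Writing each rank-one coefficient as $\widehat{(g_{kl})}_\lambda=p_{kl}q_{kl}^{\top}$, left multiplication by the invertible $\rho_\lambda(\pi)$ changes the column vector $p_{kl}$ but leaves the row vector $q_{kl}$ fixed up to scale; transitivity of $\Sigma_n$ on ordered pairs then forces $q_{kl}=c_{kl}\,q_\lambda$ for a single vector $q_\lambda$ depending only on $n$, $\lambda$ and the fixed YOR. Therefore $\hat f_\lambda=\left(\sum_{k\ne l}a_{kl}c_{kl}\,p_{kl}\right)q_\lambda^{\top}=w_A\,q_\lambda^{\top}$: every column is proportional to the instance-dependent vector $w_A$, and the ratios between columns are the fixed entries of $q_\lambda$, identical across all instances.

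I expect the main obstacle to be the module-theoretic bookkeeping of the second paragraph, namely showing that $(n-2,2)$ drops out while $(n-1,1)$ and $(n-2,1,1)$ each survive with multiplicity exactly one, because both the restricted support in (1) and the multiplicity-one input to (2) hinge on it; everything else is linear algebra driven by the translation identity. A more computational alternative, likely the one carried out in the appendix, is to evaluate $\widehat{(g_{kl})}_\lambda$ explicitly in YOR and read off the rank and column ratios directly, but the representation-theoretic route above makes the instance-independence of the column proportions transparent.
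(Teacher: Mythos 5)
Your proof is correct, but it takes a genuinely different route from the paper's. The paper reduces the theorem to two imported results: Kondor's convolution-theorem factorization $\hat f_\lambda = \frac{1}{(n-2)!}\,[\widehat{f_A}]_\lambda\,[\widehat{f_{A'}}]_\lambda^{T}$ for QAP-type functions, specialized to the LOP by taking $A'$ strictly upper triangular, together with a cited computation (from the Kendall-kernel literature, Proposition~\ref{prop::FT_IndicatorsLOP}) showing that the instance-independent factor $\widehat{f_{A'}}$ vanishes outside $(n),(n-1,1),(n-2,1,1)$ and has rank one at the latter two; statement (1) and the rank bound then follow from $\operatorname{rank}(AB)\le\min(\operatorname{rank}A,\operatorname{rank}B)$, and the fixed column proportions are read off by multiplying out the product explicitly. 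You instead work intrinsically in the module $V=\operatorname{span}\{g_{kl}\}$: Young's rule for $\operatorname{Ind}_{\Sigma_{n-2}}^{\Sigma_n}\mathbf 1$ (consistent with the paper's stated decomposition of $\tau_{(n-2,1,1)}$ in equation (\ref{eq::decompositionsPermutationRepresentation})), the relation $g_{kl}+g_{lk}=1$ collapsing the symmetric part onto the constants, and the standard rank-equals-multiplicity fact give both the support restriction and the rank bound, while left-translation covariance $\widehat{\pi\cdot f}_\lambda=\rho_\lambda(\pi)\hat f_\lambda$ plus transitivity on ordered pairs pins a common row vector $q_\lambda$ and hence the instance-independent column ratios. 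Your route buys self-containedness --- it needs neither Kondor's factorization nor the external kernel result, and it makes transparent both why $(n-2,2)$ drops out (it lives in the symmetric part, which maps to constants) and why the proportions are universal (a single $\Sigma_n$-orbit of rank-one coefficients sharing a row space); the paper's route buys uniformity, since the same factorization machinery is reused verbatim for the TSP theorem, where the analogue of your module computation must be done by hand. The two items you flagged or glossed are indeed easy to close: linear independence of $\{h_{kl}\}_{k<l}$ follows by comparing $\sum c_{kl}h_{kl}$ at a permutation and at its image under the transposition of the positions of $k$ and $l$, which flips only $h_{kl}$; and the non-vanishing of $\widehat{(g_{12})}_\lambda$ at $\lambda=(n-1,1),(n-2,1,1)$ --- needed so the proportion claim is not vacuous --- holds because the left translates of $g_{12}$ are exactly all the $g_{kl}$ and thus span $V$, so the cyclic module generated by $g_{12}$ contains both irreducibles.
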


Having found such a precise condition, one may wonder whether the reciprocal is true. If one defines a function $f$ as the inverse of certain Fourier coefficients that satisfy the conditions described in the theorem, will $f$ be an LOP? We carried out a series of experiments for lower dimensions in order to answer to this question.

Assume that $\{\hat{f}_\lambda :\lambda\vdash n\}$ is a family of Fourier coefficients that satisfies the conditions:
\begin{enumerate}
\item $\hat{f}_\lambda = 0$ if $\lambda\ne (n), (n-1,1), (n-2,1,1)$.
\item $\hat{f}_\lambda$ hast at most rank one for $\lambda = (n-1,1), (n-2,1,1)$, and the proportions among columns of $\hat{f}_\lambda$ are equal to those of $\hat{g_\lambda}$, where $g$ is an LOP function such that its coefficients $(n-1,1)$ and $(n-2,1,1)$ are non-zero.
\end{enumerate}

Under these conditions, is $f$ an LOP function? To clarify this question, we designed the following experiment for dimensions $n=3,4,5,6$:

\begin{enumerate}
\item Generate a random LOP function $g$ with non-null $(n-1,1)$ and $(n-2,1,1)$ coefficients. Then, compute $\hat{g}_\lambda$ for $\lambda = (n-1,1), (n-2,1,1)$. 
\item Generate random Fourier coefficients $\hat{f}_\lambda$ following the patterns shown in Theorem \ref{theo::LOPcoeff}. 
\begin{itemize}	
	\item For this purpose, firstly set $f_\lambda = 0$, for $\lambda\ne (n), (n-1,1),(n-2,1,1)$.
	\item Generate a uniformly random number in the interval $(-1,1)$ for $\hat{f}_{(n)}$.
	\item For $\lambda = (n-1,1),(n-2,1,1)$,  build coefficient $\hat{f}_{\lambda}$ in the following way: firstly, generate a random column of size $d_{\rho_\lambda}$ (using a uniform distribution in $(-1,1)$). This will be the first column of the matrix. The other columns of the matrix are proportional to the first, with the same proportions as in $\hat{g}_\lambda$.  
\end{itemize}
\item Define function $f$ as the inverse of the coefficient family $\{\hat{f}_\lambda :\lambda\vdash n\}$ via the FT.
\item Check whether there exists an input matrix for the Linear Ordering Problem such that the objective function associated with that matrix is $f$.  \\  
\end{enumerate}

The details for implementing the last step of the experiment are explained in Appendix \ref{app:isLOP_isTSP}, where the problem is reduced to analyzing the solvability of a certain linear system. We run 10 repetitions of the experiment for each of the dimensions $n = 3,4,5,6$. The result was, invariably, that the inverse function was an LOP.\\

This experiment indicates that, probably, for dimensions $n = 3,4,5,6$, the implication stated in Theorem \ref{theo::LOPcoeff} is actually an equivalence. We hypothesize that this could be generalized for any dimension $n$ or, in other words, that Theorem \ref{theo::LOPcoeff} gathers the properties that exactly characterize the Linear Ordering Problem in the Fourier domain.

%
%

\subsection{Characterization of the TSP}\label{subsec::coeffTSP}

As for the previous case, we prove in Theorem \ref{theo::asymmetricTSPcoeff} that the Fourier coefficients of an asymmetric TSP function have to meet very similar requirements to those of the LOP. This is a necessary condition that the Fourier coefficients of a TSP function have to meet and is proved in Appendix \ref{appendix::TSPproof}.\\

\begin{theorem}[Fourier Coefficients of the Traveling Salesman Problem]\label{theo::asymmetricTSPcoeff}
If $f:\Sigma_n\longrightarrow \mathbb{R}$ is the objective function of a Traveling Salesman Problem and $\lambda\vdash n$ is a partition, then the Fourier coefficients of $f$ have the following properties
\begin{enumerate}
\item $\hat{f}_\lambda = 0$ if $\lambda\ne (n), (n-2,2), (n-2,1,1)$
\item $\hat{f}_\lambda$ hast at most rank one for $\lambda = (n-2,2), (n-2,1,1)$. In addition, for the mentioned partitions and a fixed dimension $n$, the proportions among the columns of $\hat{f}_\lambda$ are the same for all the instances. 
\end{enumerate}
\end{theorem}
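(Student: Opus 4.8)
The plan is to exploit two symmetries of the objective function and to reduce the whole statement to a single, instance-independent matrix. First I would expand the tour cost as a weighted sum of adjacency indicators. For cities $a\neq b$ let $h_{ab}(\sigma)=1$ when $\sigma$ places $b$ immediately after $a$ in the cyclic order (that is, when there is a position $i$ with $\sigma(i)=a$ and $\sigma(i+1)=b$, indices read modulo $n$), and $h_{ab}(\sigma)=0$ otherwise. Grouping the terms of the tour length by the pair of consecutive cities gives $f=\sum_{a\neq b}d_{ab}\,h_{ab}$, and by linearity of the transform $\hat f_\lambda=\sum_{a\neq b}d_{ab}\,\widehat{(h_{ab})}_\lambda$. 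Two facts about these indicators drive everything. Relabelling the cities is a left translation, $h_{\pi(a)\pi(b)}=h_{ab}\circ\pi^{-1}$; in particular $h_{12}$ is invariant under left translation by the subgroup $\Sigma_{n-2}$ of permutations fixing the cities $1$ and $2$. And since $h_{ab}$ records only cyclic adjacency, every $h_{ab}$ is invariant under right translation by the cyclic group $C_n=\langle(1\,2\,\cdots\,n)\rangle$.

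I would then convert these invariances into matrix identities through the translation laws of the transform, namely $\widehat{(g\circ\pi^{-1})}_\lambda=\rho_\lambda(\pi)\,\hat g_\lambda$ and, for a right $H$-invariant $g$, $\hat g_\lambda\,\rho_\lambda(h)=\hat g_\lambda$ for all $h\in H$. Picking for each ordered pair a permutation $\pi_{ab}$ with $\pi_{ab}(1)=a$ and $\pi_{ab}(2)=b$, the first law yields $\widehat{(h_{ab})}_\lambda=\rho_\lambda(\pi_{ab})\,\widehat{(h_{12})}_\lambda$ (this is independent of the choice of $\pi_{ab}$ because $\widehat{(h_{12})}_\lambda$ is left $\Sigma_{n-2}$-invariant), and hence the factorization
$$\hat f_\lambda=\Bigl(\sum_{a\neq b}d_{ab}\,\rho_\lambda(\pi_{ab})\Bigr)\widehat{(h_{12})}_\lambda=M_\lambda(D)\,\widehat{(h_{12})}_\lambda.$$
Here the entire dependence on the instance $D$ sits in the left factor $M_\lambda(D)$, whereas $\widehat{(h_{12})}_\lambda$ is one fixed matrix. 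Thus the support, the rank, and the column ratios of $\hat f_\lambda$ are all inherited from those of $\widehat{(h_{12})}_\lambda$.

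The heart of the argument is the analysis of this fixed matrix. Left $\Sigma_{n-2}$-invariance forces each column of $\widehat{(h_{12})}_\lambda$ into the subspace of vectors fixed by $\rho_\lambda(\Sigma_{n-2})$, whose dimension is the multiplicity of the trivial representation of $\Sigma_{n-2}$ in the restriction of $\rho_\lambda$. By the branching rule this multiplicity is nonzero exactly when $\lambda$ is obtained by adjoining two boxes to the one-row shape $(n-2)$, i.e. for $\lambda\in\{(n),(n-1,1),(n-2,2),(n-2,1,1)\}$, and it equals $1$ for $\lambda=(n-2,2)$ and $(n-2,1,1)$. This already gives the sparsity of Property~1 together with $\operatorname{rank}\widehat{(h_{12})}_\lambda\le 1$, hence $\operatorname{rank}\hat f_\lambda\le 1$, for those two partitions. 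To eliminate $(n-1,1)$ I would invoke the right $C_n$-invariance, which confines the rows of $\widehat{(h_{12})}_\lambda$ to the trivial-isotypic subspace of the restriction of $\rho_\lambda$ to $C_n$, of dimension $\tfrac1n\sum_{k=0}^{n-1}\chi^\lambda(c^k)$. For $\lambda=(n-1,1)$ the standard-representation character gives $\chi^{(n-1,1)}(c^k)=n-1$ for $k=0$ and $-1$ otherwise, so this multiplicity is $0$ and $\widehat{(h_{12})}_{(n-1,1)}=0$; consequently $\hat f_{(n-1,1)}=0$, completing Property~1. Finally, writing the rank-one matrix as $\widehat{(h_{12})}_\lambda=u_\lambda v_\lambda^{T}$ with $v_\lambda$ fixed, the factorization becomes $\hat f_\lambda=\bigl(M_\lambda(D)u_\lambda\bigr)v_\lambda^{T}$; every column is then a multiple of the single vector $M_\lambda(D)u_\lambda$, with the proportionality constants read off from the fixed vector $v_\lambda$. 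This is precisely the statement that the columns of $\hat f_\lambda$ keep the same proportions across all instances, giving Property~2.

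I expect the main obstacle to be the representation-theoretic bookkeeping: proving that the trivial representation of $\Sigma_{n-2}$ appears with multiplicity exactly one (not merely finitely) in $\rho_{(n-2,2)}$ and $\rho_{(n-2,1,1)}$, so that the rank is genuinely at most one, and that the cyclic multiplicity vanishes for $(n-1,1)$; both are short iterated-branching and character computations but must be carried out for general $n$. A secondary point is to state the two translation laws correctly within YOR (where the $\rho_\lambda$ are real and orthogonal) and to verify that the factorization through $\widehat{(h_{12})}_\lambda$ does not depend on the chosen $\pi_{ab}$, which again follows from the left $\Sigma_{n-2}$-invariance. The whole scheme runs parallel to the proof of Theorem~\ref{theo::LOPcoeff}; the only structural difference is which small partition is annihilated by the relevant right symmetry — here $(n-1,1)$ through the cyclic shift, whereas for the LOP it is $(n-2,2)$ that drops out.
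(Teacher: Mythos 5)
Your proposal is correct, but it takes a genuinely different route from the paper's. The paper proves Theorem~\ref{theo::asymmetricTSPcoeff} in Appendix~\ref{appendix::TSPproof} by quoting Kondor's convolution-theorem factorization for the QAP (Proposition~\ref{prop::FT_QAP_Kondor}), which gives both the sparsity outside $(n),(n-1,1),(n-2,2),(n-2,1,1)$ and the splitting $\hat{f}_\lambda = \frac{1}{(n-2)!}\,\widehat{f_A}_\lambda\cdot\widehat{f_{A'}}_\lambda^T$ with the instance-independent factor being the transform of the adjacency indicator $f_{A'}(\sigma)$ of equation~(\ref{eq::graphFunction_TSP}); it then computes that fixed transform \emph{explicitly}, entry by entry: after centering $h = f_{A'} - \frac{1}{n-1}$, it counts permutations to evaluate $\hat{h}_{\tau_\lambda}$ at the permutation representations (Propositions~\ref{prop::FT_h_partition(n-1,1)}, \ref{prop::FT_h_partition(n-2,2)}, \ref{prop::FT_h_partition(n-2,1,1)}) and reads the irreducible ranks off the decompositions~(\ref{eq::decompositionsPermutationRepresentation}) --- for instance, rank one at $\rho_{(n-2,1,1)}$ because $\hat{h}_{\tau_{(n-2,1,1)}}$ has rank two and its $\rho_{(n-2,2)}$ constituent already accounts for one of those ranks. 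You replace both halves. Your factorization $\hat{f}_\lambda = M_\lambda(D)\,\widehat{(h_{12})}_\lambda$ comes straight from the left-translation law (checked correctly against the definition $\hat{f}_\rho = \sum_\sigma f(\sigma)\rho(\sigma)$, and well-defined by the left $\Sigma_{n-2}$-invariance), so you never need the convolution theorem and you re-derive the sparsity rather than citing it; and your analysis of the fixed matrix is abstract rather than computational: the branching rule confines the columns to the $\Sigma_{n-2}$-fixed subspace, whose dimensions $1,2,1,1$ at $(n),(n-1,1),(n-2,2),(n-2,1,1)$ and $0$ elsewhere match the multiplicities in the paper's decomposition of $\tau_{(n-2,1,1)}$, while the character sum $\frac{1}{n}\sum_k \chi^{(n-1,1)}(c^k) = 0$ over the cyclic group kills $(n-1,1)$ exactly where the paper instead finds $\hat{h}_{\tau_{(n-1,1)}} = 0$ by counting --- a nice conceptual explanation (cyclic symmetry of the tour) of a fact the paper obtains arithmetically. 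What each approach buys: yours is shorter, avoids all the case analysis, and isolates \emph{why} precisely those partitions survive; the paper's explicit computation additionally delivers the actual entries of the fixed factor, hence the concrete column ratios (e.g.\ $2/(3-n)$ at $(n-2,2)$ in Corollary~\ref{cor::h_(n-2,2)}), which the experiments of Section~\ref{subsec::coeffTSP} use. One minor point to note: your invariance argument alone does not exclude the degenerate possibility $\widehat{(h_{12})}_\lambda = 0$ at $(n-2,2)$ or $(n-2,1,1)$; this is harmless for the theorem as stated (``at most rank one'' and uniform proportions hold trivially then), but establishing nonvanishing --- and the proportions themselves --- still requires a small explicit computation of the kind the paper carries out.
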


As for the reciprocal implication, we have proceeded in a similar fashion to the case of the LOP. The experiment has been designed as follows:

\begin{enumerate}
\item Generate a random TSP function $g$ with non-null $(n-2,2)$ and $(n-2,1,1)$ coefficients. Then, compute $\hat{g}_\lambda$ for $\lambda = (n-2,2), (n-2,1,1)$. 
\item Generate random Fourier coefficients $\hat{f}_\lambda$ following the patterns shown in Theorem \ref{theo::asymmetricTSPcoeff}. 
\begin{itemize}
	\item For this purpose, firstly set $f_\lambda = 0$, for $\lambda\ne (n), (n-2,2),(n-2,1,1)$.
	\item Generate a uniformly random number in the interval $(-1,1)$ for $\hat{f}_{(n)}$.
	\item For $\lambda = (n-2,2),(n-2,1,1)$,  build coefficient $\hat{f}_{\lambda}$ in the following way: firstly, generate a random column of size $d_{\rho_\lambda}$ (using a uniform distribution in $(-1,1)$). This will be the first column of the matrix. The other columns of the matrix are proportional to the first, with the same proportions as in $\hat{g}_\lambda$.  
\end{itemize}
\item Define function $f$ as the inverse of the coefficient family $\{\hat{f}_\lambda :\lambda\vdash n\}$ via the FT.
\item Check whether there exists an input matrix for the Traveling Salesman Problem such that the objective function associated with that matrix is $f$.  \\  
\end{enumerate}

The details for implementing the last step of the experiment are explained in Appendix \ref{app:isLOP_isTSP}. In the same way as with the LOP, we run 10 repetitions of the experiment for each of the dimensions $n = 3,4,5,6$. The result was, invariably, that the inverse function was a TSP. Thus, our earlier conclusions for the LOP (section \ref{subsec::coeffLOP}) hold for this different problem as well. \\

\subsubsection{What happens when the TSP is symmetric?}

The previous analysis has been carried out by considering the general TSP, in which the distance matrix does not have to meet any requirements, such as being symmetric, antisymmetric or euclidean. In this section, we study the TSP in Fourier space, when the distance matrix is symmetric. With this purpose, we design two experiments. The first records the Fourier coefficients of random symmetric TSP functions, in order to discover common patterns. The second experiment works the opposite way, by checking whether random coefficients that follow the observed patterns are symmetric TSP functions or not.\\

The steps of the first experiment are the following:

\begin{enumerate}
\item Generate a symmetric distance matrix $D$ of size $n$, with uniformly random values in the interval $(-1,1)$.
\item Compute the FT of the TSP function with input matrix $D$, and store it. 
\item Repeat the first two steps a 10 times.  \\  
\end{enumerate}

We run this experiment for dimensions $n = 3,4,5,6$, and observe that, for each symmetric TSP function $f$, $\hat{f}_{(n-2,1,1)} = 0$. Matrix $\hat{f}_{(n-2,2)}$ has rank 1, and we recorded the proportions among columns for each of the dimensions $n=3,4,5,6$, which are constant whenever $n$ is fixed, as already proved in Theorem \ref{theo::asymmetricTSPcoeff}.\\

We designed the second experiment:
\begin{enumerate}
\item Generate random Fourier coefficients $\hat{f}_\lambda$ following the patterns observed in the first experiment. 
\begin{itemize}
	\item Firstly set $f_\lambda = 0$, for $\lambda\ne (n), (n-2,2)$.
	\item Generate a uniformly random number in the interval $(-1,1)$ for $\hat{f}_{(n)}$.
	\item Build coefficient $\hat{f}_{(n-2,2)}$ in the following way: firstly, generate a random column of size $d_{\rho_{(n-2,2)}}$ (using a uniform distribution in $(-1,1)$). This will be the first column of the matrix. The other columns of the matrix are proportional to the first, with the proportions that have been recorded in the previous experiment.
\end{itemize}  
\item Define function $f$ as the inverse of the coefficient family $\{\hat{f}_\lambda :\lambda\vdash n\}$ via the FT.
\item Check whether there exists a symmetric input matrix for the Traveling Salesman Problem such that the objective function associated with that matrix is $f$.  \\  
\end{enumerate}

For each of the 10 repetitions of this second experiment, the results remained positive.

%
%

\subsection{Comment on the characterization of the QAP}

Whereas the objective function of an LOP or a TSP is an addition of linear terms, the QAP goes a step further and its objective functions are composed of quadratic terms. Thus, tracing the shape of its Fourier coefficients is not as immediate. We designed a basic experiment in order to gain certain intuition. First, generate a random QAP objective function (with matrices $A$ and $A'$ generated uniformly in the interval $(-1,1)$). Then, compute its FT and record the ranks of its coefficients. The results were the following after 10 repetitions (let us recall that previous studies have demonstrated that the only non-zero coefficients are $(n)$, $(n-1,1)$, $(n-2,2)$, $(n-2,1,1)$):\\

\begin{itemize}
\item Coefficient $(n-1,1)$ is rank 3 (this holds for $n\ge 4$; for $n < 4$, it is full-rank).
\item Coefficients $(n-2,2), (n-2,1,1)$ are rank 1.
\end{itemize}

%
%

\section{Rankings associated with problems}\label{sec::rankings}

As explained in the introduction, instead of merely working with objective functions, we are viewing them as permutations of elements of the search space, by sorting the solutions depending on their objective function values. In order to further simplify our task, we have decided to impose a restriction over the type of functions that we are treating in our initial approach. We consider functions without ties between the objective values of different solutions or, in other words, injective functions. Under this condition, functions are understood as total orderings of the permutations of $\Sigma_n$. This condsideration drastically reduces the dimensionality of the space of functions, since a function $f:\Sigma_n\longrightarrow\mathbb{R}$ is converted to a ranking $\tau\in\Sigma_{n!}$ of elements of the search space, which, in our case, are permutations (so $\Sigma_{n!}$ is the space of all possible rankings of permutations of size $n$). Despite being large, $\Sigma_{n!}$ is still a finite space, contrarily to the space of all real functions over permutations.\\
 

Limiting the size of the space of objective functions enables an easier taxonomization task, which is our main target. With this purpose, a number of questions arise. Consider an optimization problem $\mathcal{P}$, we define the expansion of $P$ at dimension $n$ as the subset of permutations (or rankings) that can be generated with problem $P$ when the search space is $\Sigma_n$ and note it by $E_n(\mathcal{P})$. Then, we consider the following questions:
 
\begin{itemize}
\item Given a problem $\mathcal{P}$, what is the set of rankings that it can generate, $E_n(\mathcal{P})$?
\item Given two problems $\mathcal{P}$ and $\mathcal{Q}$, which is the set of rankings that can be generated by both problems? i.e., how is $E_n(\mathcal{P})\cap E_n(\mathcal{Q})$?
\item Some rankings could be efficiently solved for some algorithms. Therefore knowing the rankings that can be produced by a problem could give us an idea of the goodness of an algorithm for a particular problem.
\end{itemize}

The previous questions have to do with taxonomizing combinatorial optimization problems with an eye in its algorithmic solutions, however other relevant scientific questions can be asked looking at the permutation space:

\begin{itemize}
\item Is there a problem (with a closed form expression for the objective function) able to generate all the possible rankings depending on a polynomial number of parameters?

\item Given two problems $\mathcal{P}$ and $\mathcal{Q}$, assume that $\mathcal{P}$ is defined using a number of parameters $l$ and $\mathcal{Q}$ with $l'$, such that $l > l'$. Lets also assume that $|E_n(\mathcal{P})| < |E_n(\mathcal{Q})|$. Is it possible to reparameterize $\mathcal{P}$ with a lower number of parameters?
\end{itemize}

%
%

\subsection{Linear Ordering Problem}\label{subsec::rankingsLOP}

The Linear Ordering Problem has a property that can be exploited for our analysis. If $\sigma^*$ minimizes the elements below the main diagonal, that is

$$\argmin\limits_{\sigma\in\Sigma_n}\ \sum_{j=1}^{n-1} \sum_{i=j+1}^n\!\! a_{\sigma(i)\sigma(j)},$$
then $\sigma^*$ maximizes those above. Therefore, the highest value of the objective function is reached in the reverse $\sigma^r$ of the optimal (minimal) solution $\sigma^*$, i.e., $\sigma^r(i) = \sigma^*(n-i+1) , \  \forall i = 1,\ldots, n$. This property generalizes to other permutations in the search space: if $\sigma$ gets the $k$-th best objective value then the reverse of $\sigma$ reaches the $k$-th worst value. This property implies that LOP can only produce functions whose associated permutations (rankings) are symmetric, considering the symmetry with respect to a permutation and its reverse. Therefore $E_{n}(LOP) \ne \Sigma_{n!}$. Furthermore, this particular structure allows us to find an upper bound for the number of permutations that can be generated when assuming an injective function (any two different solutions of the search space have different objective function values). This number is:
$$|E_{n!}(LOP)| \le 2^{n!/2}\left (\dfrac{n!}{2}\right )!$$



%
%

\subsection{Rankings in Fourier domain: a little experiment}\label{subsec::rankingsLittleExperiment}

We have previously studied (see section \ref{sec::StructureCOPcoeff}) the characterization of certain COPs in the Fourier domain. As our final target would be to create a taxonomy by analyzing the Fourier Transform of a number of objective functions, and we are viewing them as rankings, it could also be interesting to study the relationship between the Fourier coefficients and the rankings generated by the functions. In this regard, we would like to know what happens to the rankings when certain coefficients are set to 0. More specifically, one could wonder what happens when the last coefficient is 0. Can any ranking of elements of the search space be generated with functions $f$ having the last fourier coefficient, $\hat{f}_{(1,\cdots,1)}$ set to 0? \\

For this purpose, we designed an experiment for a low dimension ($n = 3$), which could be considered as a toy example with the aim of answering to the last question. We wanted to observe what happens when the coefficient indexed by the partition $(1,1,1)$ was set to 0. What kind of rankings would appear under this condition? So, for this purpose, we generated, using the Fourier inversion formula, different functions with random Fourier coefficients for the partition $(2,1)$, but not for $(1,1,1)$, where we forced the coefficient to be null. We also set the coefficient indexed by (1) to 0, because the information encoded in this coefficient is the average of the functions values, so it does not affect the ranking. The coefficients were generated uniformly at random in the interval (-1,1). Then, for each of the functions given by the inversion theorem, we stored its ranking. The pseudocode of the experiment is presented in Algorithm \ref{Alg:Little_experiment}, where the number $reps$ of functions that we generated was 20000000. We designed the experiment to discard all the non-injective rankings, but the result was that no non-injective function appeared. When we stored a ranking, we also kept the record of the number of times that we have obtained that same ranking.\\

\begin{algorithm}
\begin{algorithmic}[]
\caption{Pseudocode of the first experiment}\label{Alg:Little_experiment}
\FOR{$i = 1, \cdots, reps$}
\STATE Choose uniformly random coefficients $\hat{f}_{(2,1)}$ for the partition $(2,1)$
\STATE Set the coefficients $\hat{f}_{(1)}$ and $\hat{f}_{(1,1,1)}$ to 0
\STATE $\hat{f} = \{ \hat{f}_{(1)}, \hat{f}_{(2,1)}, \hat{f}_{(1,1,1)} \}$
\STATE $f = Fourier\_Inversion(\hat{f})$ 
\STATE $r = ranking(f)$
\IF {$is\_injective(r)$}
\STATE $store(r)$
\ENDIF
\ENDFOR
\end{algorithmic}
\end{algorithm}

The number of different rankings that appeared was 360, while, in general, the number of possible rankings of permutations is $(n!)! = 720$, which means that we only obtained half of the rankings. In addition, the ranking that we observed the \emph{least} number of times, appeared 20605 times (so we can assume that the probability of generating the rankings that did not appear is very low, if not 0). \\

We could give a complete list of the rankings that were recorded and the ones that were not, but this list appears to be meaningless in the sense that it is difficult to appreciate, at a bare eye, a common pattern that distinguishes when a ranking should appear or not. However, we were able to find this pattern, which resulted to be directly related to the signature. Let us recall this concept. Any permutation can be expressed as a composition of transpositions (that is, permutations that interchange two elements). The decomposition of a permutation in terms of transpositions is not unique, since the number of transpositions and the type of transpositions can vary. However, for a fixed permutation, there is a common pattern among any possible decomposition: the parity of the number of transpositions is always the same. Taking this into account, the \emph{signature} of a permutation is defined as  

$$sign(\sigma) = \left\{\begin{matrix}
\hspace*{3mm} 1\qquad \text{if $\sigma$ can be decomposed in an even number of transpositions}\\ 
\hspace*{1mm} -1\qquad \text{otherwise} \hspace*{77mm}
\end{matrix}\right. $$

We observed that one can know if a ranking was generated or not by simply computing the signature of the permutations of the ranking. For example, would the following ranking appear? 

\begin{enumerate}
\item{ $[1,2,3]$}
\item{ $[1,3,2]$}
\item{ $[2,1,3]$}
\item{ $[2,3,1]$}
\item{ $[3,1,2]$}
\item{ $[3,2,1]$}
\end{enumerate}

In this ranking, the first permutation would be $[1,2,3]$ and the last one $[3,2,1]$. If we compute the signature of each of the permutations, we obtain

\begin{enumerate}
 \item{ $sign([1,2,3]) = 1$} 
\item{ $sign([1,3,2]) = -1$} 
\item{ $sign([2,1,3]) = -1$} 
\item{ $sign([2,3,1]) = 1$} 
\item{ $sign([3,1,2]) = 1$} 
\item{ $sign([3,2,1]) = -1$}
\end{enumerate}

In the case of this particular pattern of signatures, we observed that, in fact, it appeared and, what is more, that all the rankings sharing this common signature appeared too. This observation can be extended to all the patterns, which means that, if we find a particular pattern in the recorded rankings, then we can find in the list all the other different rankings associated with that same pattern.\\

 Table \ref{tab::signaturePatternsN3} summarizes the patterns of signatures of the rankings that appeared and the ones that were generated in the experiment and the ones that were not. For the sake of readability, a signature of 1 is noted by ``+'' and a signature of -1 by ``-''. So the pattern of the example could be expressed as $[+--++-]$.

\begin{table}\caption{Patterns of signatures of the rankings that were/were not generated during the experiment.}\label{tab::signaturePatternsN3}
\begin{tabular}{ c  c  c }
\hline
\rule{0pt}{4mm} \textbf{Generated} & \hspace{1mm} & \textbf{Not generated}     \vspace*{1mm}\\
\cline{1-1}\cline{3-3}
\\
$[+--++-]$ & & $[+++---]$ \\
$[-+-++-]$ & & $[---+++]$ \\
$[--+++-]$ & & $[+-++--]$ \\
$[-++-+-]$ & & $[-+--++]$ \\
$[-+++--]$ & & $[+-+-+-]$ \\
$[+--+-+]$ & & $[-+-+-+]$ \\
$[+-+--+]$ & & $[++--+-]$ \\
$[-++--+]$ & & $[--++-+]$ \\
$[++---+]$ & & $[++-+--]$ \\
$[+---++]$ & & $[--+-++]$ \\
\\
\hline\\
\end{tabular}
\end{table}

%
%

\subsection{QAP: A restricted method for finding impossible rankings}\label{subsec::rankingsQAP}

In section \ref{subsec::rankingsLOP} it was proven that the number of rankings that an LOP can generate is very limited. Bounding the number of rankings was possible thanks to the symmetry property of the LOP. The QAP does not have such a property and, in addition, its formulation is more complicated. So a more sophisticated analysis was needed. \\ 

We checked, for $n=3$, that the QAP can generate any possible ranking, when varying the flows and distances. In other words, the QAP covers up the wole space of rankings $\Sigma_{3!}$. This naturally leads to the following question: does the same happen for $n = 4,5,6,7,\cdots$?\\

Let us note that, for $n = 4$, every QAP function has one zero Fourier coefficient, the one indexed by $(1,1,1,1)$. In the previous section, it was experimentally shown that there are a number of rankings that either cannot be generated or at least are very difficult to generate when $n = 3$ and the coefficient indexed by $(1,1,1)$ is set to 0. Inspired by these results, we designed a method (see appendix \ref{appendix::impossibleRanking}) that exactly answers to the following question: ``given a specific ranking (of permutations of size $n = 4$), can it be generated with a function that has the Fourier coefficient that is indexed by $(1,1,1,1)$ set to 0?''. Following the idea of the signature patterns of table \ref{tab::signaturePatternsN3}, we input a number of permutations to our method, with very specific signature patterns and recorded the results. Tables \ref{tab:listPossibleRankings} and \ref{tab:listImpossibleRankings} show the rankings that we checked. The former gathers the rankings that can be generated without the information contained in the coefficient $(1,1,1,1)$, while the latter gathers the ones that cannot be generated. Table \ref{tab:SignaturePatterns_dim4} summarizes the results of the aforementioned tables and shows their signature patterns.\\

\begin{table*}
\begin{center}
\caption{List of rankings that we discovered that can be generated having a null $(1,1,1,1)$ coefficient.}\label{tab:listPossibleRankings} 
\begin{tabular}{c  c  c  c  c  c  c }
\\\hline
\\
\multicolumn{7}{c}{RANKINGS THAT CAN BE GENERATED}\\
\\
Rank & \hspace{1cm} & \rule{0pt}{1pt} \textbf{Ranking 1} & \hspace{1cm} & \textbf{Ranking 2} & \hspace{1cm} & \textbf{Ranking 3} \\
\\
\cline{3-3}\cline{5-5}\cline{7-7}
\\
1. & & $[2,3,4,1]$& &$[2,4,3,1]$& &$[4,3,2,1]$\\
2. & & $[1,3,4,2]$& &$[4,2,3,1]$& &$[4,2,3,1]$\\
3. & & $[4,3,2,1]$& &$[1,2,3,4]$& &$[2,3,4,1]$\\
4. & & $[4,1,2,3]$& &$[2,3,4,1]$& &$[4,3,1,2]$\\
5. & & $[1,3,2,4]$& &$[1,3,4,2]$& &$[1,4,3,2]$\\
6. & & $[4,3,1,2]$& &$[4,3,2,1]$& &$[3,1,4,2]$\\
7. & & $[1,4,2,3]$& &$[3,4,2,1]$& &$[2,4,1,3]$\\
8. & & $[2,4,1,3]$& &$[2,1,4,3]$& &$[4,1,2,3]$\\
9. & & $[4,2,3,1]$& &$[1,4,3,2]$& &$[1,2,4,3]$\\
10. & & $[1,2,3,4]$& &$[1,4,2,3]$& &$[3,2,1,4]$\\
11. & & $[2,1,4,3]$& &$[3,1,4,2]$& &$[1,3,2,4]$\\
12. & & $[3,4,1,2]$& &$[4,1,2,3]$& &$[2,1,3,4]$\\
13. & & $[3,1,2,4]$& &$[2,1,3,4]$& &$[3,4,2,1]$\\
14. & & $[4,2,1,3]$& &$[1,3,2,4]$& &$[2,4,3,1]$\\
15. & & $[3,2,4,1]$& &$[2,3,1,4]$& &$[3,2,4,1]$\\
16. & & $[2,4,3,1]$& &$[3,2,4,1]$& &$[3,4,1,2]$\\
17. & & $[3,1,4,2]$& &$[3,2,1,4]$& &$[4,1,3,2]$\\
18. & & $[1,2,4,3]$& &$[4,1,3,2]$& &$[1,3,4,2]$\\
19. & & $[1,4,3,2]$& &$[1,2,4,3]$& &$[4,2,1,3]$\\
20. & & $[2,3,1,4]$& &$[4,3,1,2]$& &$[1,4,2,3]$\\
21. & & $[2,1,3,4]$& &$[3,1,2,4]$& &$[2,1,4,3]$\\
22. & & $[4,1,3,2]$& &$[2,4,1,3]$& &$[2,3,1,4]$\\
23. & & $[3,4,2,1]$& &$[3,4,1,2]$& &$[3,1,2,4]$\\
24. & & $[3,2,1,4]$& &$[4,2,1,3]$& &$[1,2,3,4]$\\

\\
\hline\\
\end{tabular}
\end{center}
\end{table*}

\begin{table*}
\begin{center}
\caption{List of rankings that we discovered that cannot be generated having a null $(1,1,1,1)$ coefficient.}\label{tab:listImpossibleRankings} 
\begin{tabular}{c  c  c  c  c  c  c  c  c}
\\\hline
\\
\multicolumn{9}{c}{RANKINGS THAT CANNOT BE GENERATED}\\
\\
Rank & & \rule{0pt}{1pt} \textbf{Ranking 4} & \hspace{0.1cm} & \textbf{Ranking 5} & \hspace{0.1cm} & \textbf{Ranking 6} & \hspace{0.1cm}& \textbf{Ranking 7}\\
\\
\cline{3-3}\cline{5-5}\cline{7-7}\cline{9-9}\\
1. & & $[3,4,2,1]$& &$[4,2,3,1]$& &$[4,2,3,1]$& &$[3,4,2,1]$\\
2. & & $[4,2,3,1]$& &$[3,4,2,1]$& &$[1,2,3,4]$& &$[4,2,3,1]$\\
3. & & $[2,3,4,1]$& &$[2,3,4,1]$& &$[3,4,2,1]$& &$[2,3,4,1]$\\
4. & & $[4,3,1,2]$& &$[4,3,1,2]$& &$[3,1,2,4]$& &$[4,3,1,2]$\\
5. & & $[1,4,3,2]$& &$[1,4,3,2]$& &$[2,3,4,1]$& &$[1,4,3,2]$\\
6. & & $[3,1,4,2]$& &$[3,1,4,2]$& &$[2,3,1,4]$& &$[3,1,4,2]$\\
7. & & $[2,4,1,3]$& &$[2,4,1,3]$& &$[4,3,1,2]$& &$[2,4,1,3]$\\
8. & & $[4,1,2,3]$& &$[4,1,2,3]$& &$[2,1,4,3]$& &$[4,1,2,3]$\\
9. & & $[1,2,4,3]$& &$[1,2,4,3]$& &$[1,4,3,2]$& &$[1,2,4,3]$\\
10. & & $[3,2,1,4]$& &$[3,2,1,4]$& &$[1,4,2,3]$& &$[3,2,1,4]$\\
11. & & $[1,3,2,4]$& &$[1,3,2,4]$& &$[3,1,4,2]$& &$[1,3,2,4]$\\
12. & & $[2,1,3,4]$& &$[2,1,3,4]$& &$[4,2,1,3]$& &$[4,3,2,1]$\\
13. & & $[4,3,2,1]$& &$[4,3,2,1]$& &$[2,4,1,3]$& &$[2,4,3,1]$\\
14. & & $[2,4,3,1]$& &$[2,4,3,1]$& &$[1,3,4,2]$& &$[3,2,4,1]$\\
15. & & $[3,2,4,1]$& &$[3,2,4,1]$& &$[4,1,2,3]$& &$[3,4,1,2]$\\
16. & & $[3,4,1,2]$& &$[3,4,1,2]$& &$[4,1,3,2]$& &$[4,1,3,2]$\\
17. & & $[4,1,3,2]$& &$[4,1,3,2]$& &$[1,2,4,3]$& &$[1,3,4,2]$\\
18. & & $[1,3,4,2]$& &$[1,3,4,2]$& &$[3,4,1,2]$& &$[4,2,1,3]$\\
19. & & $[4,2,1,3]$& &$[4,2,1,3]$& &$[3,2,1,4]$& &$[1,4,2,3]$\\
20. & & $[1,4,2,3]$& &$[1,4,2,3]$& &$[3,2,4,1]$& &$[2,1,4,3]$\\
21. & & $[2,1,4,3]$& &$[2,1,4,3]$& &$[1,3,2,4]$& &$[2,3,1,4]$\\
22. & & $[2,3,1,4]$& &$[2,3,1,4]$& &$[2,4,3,1]$& &$[3,1,2,4]$\\
23. & & $[3,1,2,4]$& &$[3,1,2,4]$& &$[2,1,3,4]$& &$[2,1,3,4]$\\
24. & & $[1,2,3,4]$& &$[1,2,3,4]$& &$[4,3,2,1]$& &$[1,2,3,4]$\\
\\
\hline\\
\end{tabular}
\end{center}
\end{table*}

\begin{table*}
\begin{center}
\caption{Signature patterns of the rankings of Tables \ref{tab:listPossibleRankings} and \ref{tab:listImpossibleRankings}.}\label{tab:SignaturePatterns_dim4} 
\begin{tabular}{ c  c  c  c  c }
\\\hline
\rule{0pt}{30pt}& & Ranking & \hspace{1cm} & Signature patterns \\
\\
\cline{3-3}\cline{5-5}
\\
\multirow{3}{*}{\footnotesize\rotatebox{90}{\bf Possible}} & & 1 & & $[-++---+--+++++++---+-+--]$\\
& & 2 & & $[+-+-++-+-+----++-+--+-++]$\\
& & 3 & & $[+------------+++++++++++] $\\
\\\hline\\
\multirow{3}{*}{\footnotesize\rotatebox{90}{\bf Impossible}} & & 4 & & $[------------++++++++++++]$\\
& & 5 & & $[------------++++++++++++]$\\
& & 6 & & $[-+-+-+-+-+-+-+-+-+-+-+-+]$\\
& & 7 & & $[-----------+++++++++++-+]$\\
\\
\hline\\
\end{tabular}
\end{center}
\end{table*}

Since the Fourier coefficients of a QAP have not been characterized yet, we can assure that the rankings of Table \ref{tab:listImpossibleRankings} cannot possibly be generated by a QAP. However, regarding the rankings of Table \ref{tab:listPossibleRankings}, it can only be assured that they can be generated with only three Fourier coefficients, but maybe they cannot still be generated by a QAP, since the Fourier coefficients of a QAP should fulfil certain restrictions.

%
%

\section{Conclusions}\label{sec::Conclusion}

%

Broadly speaking, a new working framework, based on Fourier analysis, has been established for the theoretical study of COPs. In this context, the characterizations in the Fourier space of the LOP, the TSP and the QAP have been studied. As a result, a fairly accurate description of the LOP and the TSP has been found, and the initial steps for characterizing the QAP have been taken. As for our second line of research, the idea of representing objective functions as rankings of elements of the search space, we have restrictively bounded the rankings that an LOP can generate. We have also linked the two concepts, the FT and the idea of treating functions as rankings, by observing what type of rankings arise when one of the Fourier coefficients is set to 0. This has led to observing that not any ranking can be generated by a QAP.\\  

Future work may firstly envision reinforcing our findings, by giving a mathematical proof of the reciprocal of the theorems regarding the Fourier coefficients of the LOP and the TSP. Furthermore, we would like to find the exact characterization of the QAP, the particular case of the symmetric TSP and possibly the euclidean TSP. We would also like to study the existence of a more restrictive bound for the number of rankings that the LOP can generate, as well as continue with the research regarding the rankings of the QAP. Apart from expanding our findings, we would also like to incorporate new problems, such as the Permutation Flowshop Scheduling Problem, and consider the rankings of the TSP, which still remain unstudied. As a long-term research goal, the idea of working with partial rankings (that is, including non-injective functions) could also be regarded.

\appendix
\normalsize

\section{Proofs of section \ref{sec::StructureCOPcoeff}}\label{appendix::proofs}

This appendix contains the proofs of Theorems \ref{theo::LOPcoeff}  and \ref{theo::asymmetricTSPcoeff} that characterize the Fourier coefficients of the LOP and the TSP. Part of the mathematical background about the FT and the COPs has already been introduced in Sections \ref{sec::FT_background} and \ref{subsec::COPs_background}, respectively. However, more technical background is needed for the mathematical proof of both theorems. 

\subsection{Further mathematical notation and details on the Fourier Transform}\label{app:Theorems_Background}

The \emph{graph function} associated with a matrix $A$ is defined as
$$f_A(\sigma) = A_{\sigma(n)\sigma(n-1)}$$

Regarding the Fourier Transform, let us recall that the Fourier coefficients are indexed by the partitions of $n$. So the partitions of $n$ are used to index the Fourier coefficients of a function. In addition, for a given partition $\lambda\vdash n$, the rows and columns of $\hat{f}_\lambda$ are indexed by \emph{standard tableaus} of shape $\lambda$ (a standard tableau is a Young tableau in which numbers $1,2,\cdots, n$ are placed exactly once and in increasing order both from left to right and from top to bottom). For example, the standard tableaus of shape $\lambda = (2,2)$ are just the following two: 
$$
 \young(12,34) \qquad \young(13,24)
$$

Apart from Young tableaus, there also exist \emph{Young tabloids}, which, intuitively, are Young tableaus in which the elements in the same row are not sorted. Young tabloids have a slightly different representation:

\begin{center}
\ytableausetup
{boxsize=normal,tabloids}
\ytableaushort{
12, 34
} $\qquad$
\ytableaushort{
13, 24
}
\end{center}

We will assume that, for a given partition $\lambda\vdash n$, its Young tabloids are sorted in a certain way, $t_1, t_2, \cdots$ (to see the particular ordering we refer the reader to the appendix of \cite{huang2009fourier}). \\

The \emph{permutation representation} at partition $\lambda\vdash n$ will be denoted as $\tau_\lambda$, that is,
\begin{equation}\label{eq::definitionPermutationRepresentation}
\tau_\lambda(\sigma)_{ij} = \left\{\begin{matrix}
1\qquad \text{ if } \sigma({t_i})=t_j\\ 
0\hspace{13.5mm} \text{otherwise}
\end{matrix}\right.
\end{equation}
For a detailed definition of $\tau_\lambda$, see \cite{huang2009fourier}.\\

In addition, we have the following decompositions of the permutation representation, in terms of the irreducible representations, also refered to as \emph{irreps}, for a number of partitions:
\begin{align}
\tau_{(n)} \equiv \rho_{(n)}  \nonumber \\
\tau_{(n-1,1)} \equiv \rho_{(n)} \oplus \rho_{(n-1,1)} \nonumber \\
\tau_{(n-2,2)} \equiv \rho_{(n)} \oplus \rho_{(n-1,1)} \oplus \rho_{(n-2,2)}  \label{eq::decompositionsPermutationRepresentation}\\
\tau_{(n-2,1,1)} \equiv \rho_{(n)} \oplus \rho_{(n-1,1)} \oplus \rho_{(n-1,1)} \oplus \rho_{(n-2,2)} \oplus \rho_{(n-2,1,1)} \nonumber
\end{align}

That is, the left-sided representations are equal to the direct sums of the right side when a change of basis is applied. In matrix terminology, this is the same as saying that there exists an invertible matrix $C_\lambda$, such that 
$$ \tau_\lambda(\sigma)  = C_\lambda \left [ \bigoplus\limits_{\mu\ge\lambda} \bigoplus\limits_{i=1}^{k_{\lambda\mu}} \rho_\mu (\sigma) \right ] C_\lambda^{-1}$$

This decomposition implies that the FT of a function $f$ at $\tau_\lambda$ can be computed in terms of the FT at $\rho_\mu$, with $\mu \ge \lambda$:
$$ \hat{f}_{\tau_\lambda} = C_\lambda \left [ \bigoplus\limits_{\mu\ge\lambda} \bigoplus\limits_{i=1}^{k_{\lambda\mu}}  \hat{f}_{\rho_\lambda} \right ] C_\lambda^{-1}$$  

We will denote the Fourier transform of a function $f$ at irreducible $\rho_\lambda$ by $\hat{f}_{\rho_\lambda}$. Even so, whenever we consider that this does not lead to ambiguity, we may simplify this notation as $\hat{f}_\lambda = \hat{f}_{\rho_\lambda}$.\\

\subsection{Characterization of the LOP: Proof}\label{appendix::LOPproof}


The aim of this section is to prove Theorem \ref{theo::LOPcoeff} (stated in section \ref{subsec::coeffLOP}). The steps for proving it can be summarized as follows: 

\begin{enumerate}
\item The problem is reduced by observing that the FT of the objective function is proportional to the product of two factors. One of the factors depends on the particular input matrix, while the other remains always the same for any instance. 
\item The FT of this constant factor is analyzed. 
\item The specific shape of this factor is what gives rise to the rank properties listed in Theorem \ref{theo::LOPcoeff}. \\
\end{enumerate}

%
%

\paragraph{Reduction of the problem} The first step of the proof consists in reducing the problem. In \cite{kondor2010fourier}, Kondor proofs (by means of the convolution theorem) that the Fourier coefficients of a QAP can be expressed in terms of the distance data and the flow data. Since the LOP is a particular case of the QAP, this factorization remains valid for the LOP, too. The mentioned result is the following:

\begin{proposition}[Fourier transform of the QAP]\label{prop::FT_QAP_Kondor}
Given a function $f$ expressed as 
\begin{equation}\label{eq::QAPobjectiveFunction}
f(\sigma) = \sum_{i,j=1}^n A_{\sigma(i)\sigma(j)}A'_{ij}, 
\end{equation}
its Fourier coefficients are zero except for coefficients $(n), (n-1, 1), (n-2,2), (n-2,1,1)$. In addition, the values of these coefficients can be factored in terms of the coefficients of the graph functions of $A$ and $A'$

\begin{equation}\label{eq::coefficients_GraphCorrelation}
\hat{f}_\lambda = \dfrac{1}{(n-2)!}\hat{f_A}_\lambda\cdot\hat{f_{A'}}_\lambda^T
\end{equation}
\end{proposition}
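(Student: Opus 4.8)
The plan is to recognize the QAP objective \eqref{eq::QAPobjectiveFunction} as a \emph{correlation} of the two graph functions $f_A$ and $f_{A'}$ and then invoke the convolution theorem. First I would regroup the double sum according to the images of the last two positions. For a fixed ordered pair $i\ne j$ there are exactly $(n-2)!$ permutations $\pi\in\Sigma_n$ with $\pi(n)=i$ and $\pi(n-1)=j$, and for every such $\pi$ one has $f_A(\sigma\pi)=A_{(\sigma\pi)(n)(\sigma\pi)(n-1)}=A_{\sigma(i)\sigma(j)}$ and $f_{A'}(\pi)=A'_{ij}$. Summing over all of $\Sigma_n$ therefore yields
\begin{equation*}
\sum_{\pi\in\Sigma_n} f_A(\sigma\pi)\,f_{A'}(\pi) = (n-2)!\sum_{i\ne j} A_{\sigma(i)\sigma(j)}A'_{ij}.
\end{equation*}
The diagonal terms $\sum_i A_{\sigma(i)\sigma(i)}A'_{ii}$ are never produced by the correlation; they form a first-order function of $\sigma$ whose Fourier support lies in $(n)$ and $(n-1,1)$, so they affect neither the claimed support nor the coefficients $(n-2,2),(n-2,1,1)$, and under the usual zero-diagonal convention for the QAP the identity $f(\sigma)=\tfrac{1}{(n-2)!}\sum_\pi f_A(\sigma\pi)f_{A'}(\pi)$ is exact.

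Next I would transform this correlation. Substituting $\mu=\sigma\pi$ (so $\rho(\sigma)=\rho(\mu)\rho(\pi)^{-1}$) and noting that $\mu$ ranges over $\Sigma_n$ for each fixed $\pi$ gives $\hat{f}_\rho=\tfrac{1}{(n-2)!}\sum_\mu\sum_\pi f_A(\mu)f_{A'}(\pi)\,\rho(\mu)\rho(\pi)^{-1}$. Because YOR are real and orthogonal, $\rho(\pi)^{-1}=\rho(\pi)^T$, and the double sum factors into a product of two single sums, producing exactly $\hat{f}_\rho=\tfrac{1}{(n-2)!}\,\hat{f_A}_\rho\,\hat{f_{A'}}_\rho^T$, which is \eqref{eq::coefficients_GraphCorrelation}. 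This is precisely the convolution theorem, with the transpose introduced by orthogonality of the representation.

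Finally, for the sparsity statement I would bound the support of the graph functions. The value $f_A(\sigma)$ depends on $\sigma$ only through the ordered pair $(\sigma(n),\sigma(n-1))$, so $f_A$ lies in the span of the entries of the permutation representation $\tau_{(n-2,1,1)}$ on ordered pairs of points. By the decomposition \eqref{eq::decompositionsPermutationRepresentation}, $\tau_{(n-2,1,1)}$ decomposes into irreducibles indexed only by $(n),(n-1,1),(n-2,2),(n-2,1,1)$; hence $\hat{f_A}_\rho=0$, and likewise $\hat{f_{A'}}_\rho=0$, for every partition outside this set. The product in \eqref{eq::coefficients_GraphCorrelation} then vanishes there as well, giving the four-coefficient sparsity.

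I expect the main obstacle to be the careful bookkeeping of the two regrouping/substitution steps: justifying rigorously that $f_A$ belongs to the row space of $\tau_{(n-2,1,1)}$ (so that its irreducible support is controlled by \eqref{eq::decompositionsPermutationRepresentation}), and tracking the transpose that orthogonality of YOR introduces into the factorization. Disposing cleanly of the diagonal contribution is a minor but necessary point, since the stated double sum runs over all $i,j$ whereas the correlation reconstructs only the off-diagonal part.
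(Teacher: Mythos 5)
Your proposal is correct and follows essentially the same route as the paper, which does not prove this proposition itself but attributes it to Kondor \cite{kondor2010fourier}, whose argument ``by means of the convolution theorem'' is exactly what you reconstruct: writing the QAP objective as the correlation $f(\sigma)=\tfrac{1}{(n-2)!}\sum_{\pi} f_A(\sigma\pi)f_{A'}(\pi)$, factoring it with the substitution $\mu=\sigma\pi$ (the transpose arising from orthogonality of YOR), and deriving the four-coefficient sparsity from the fact that $f_A$ and $f_{A'}$ depend only on the ordered pair $(\sigma(n),\sigma(n-1))$, hence lie in the span of the matrix entries of $\tau_{(n-2,1,1)}$, whose irreducible decomposition contains only $(n),(n-1,1),(n-2,2),(n-2,1,1)$. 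Your caveat about the diagonal terms is also well taken: the stated double sum over all $i,j$ agrees with the correlation only under the usual zero-diagonal convention, and otherwise the diagonal contributes a first-order term supported on $(n)$ and $(n-1,1)$, so the sparsity claim survives but the exact factorization at those two coefficients would not --- a point the paper's statement glosses over.
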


\vspace{2mm}

Note that the LOP is a particular case of the QAP. By setting
$$A'_{ij} = \left\{\begin{matrix}
\hspace*{2mm} 1 \qquad \text{ if } i < j \hspace*{3mm} \\ 
\hspace*{2mm} 0\qquad \text{otherwise} 
\end{matrix}\right. $$
equation (\ref{eq::QAPobjectiveFunction}) becomes the objective function of the LOP. Then,
$$f_{A'}(\sigma) = A'_{\sigma(n)\sigma(n-1)} = \left\{\begin{matrix}
\hspace*{2mm}  1\qquad  \text{ if } \sigma(n) < \sigma(n-1)\\ 
\hspace*{2mm}  0\qquad  \text{otherwise}         \hspace*{14mm}
\end{matrix}\right. $$ 
which is the indicator function over the set $ \{\sigma:\ \sigma(n) < \sigma(n-1)\}$,
\begin{equation}\label{eq::graphFunction_LOP}
f_{A'}(\sigma) = \mathbb{1}_{ \{\sigma:\ \sigma(n) < \sigma(n-1)\} }
\end{equation}

So, for an LOP function, by taking into account equation (\ref{eq::coefficients_GraphCorrelation}), we see that the Fourier transform is proportional to (the proportionality function being $1/(n-2)!$) the product of matrix $[\widehat{f_A}]_\lambda$, whose values depend on the instance, and matrix $[\widehat{f_{A'}}]_\lambda^T$, which remains the same for any instance.\\

This implies that the Fourier coefficients of the objective function of an LOP can always be factored in the product of two matrices multiplied by $1/(n-2)!$. One of the matrices, $[\widehat{f_A}]_\lambda$, depends on the instance, whereas the other one, $[\widehat{f_{A'}}]_\lambda^T$, is fixed and, more specifically, it is the transposed of the transform of the indicator function $\mathbb{1}_{  \{ \sigma:\ \sigma(n) < \sigma(n-1)\} }$. \\

%
%

\paragraph{Analysis of $\widehat{f_{A'}}$}  The next step is to see the specific properties that $[\widehat{f_{A'}}]_\lambda$, the constant factor that the Fourier coefficients of any LOP share, fulfils, in order to deduce how this affects the whole set of LOP functions.\\

In \cite{mania2018kernel}, the author proofs certain properties that the FT of the so-called Kendall kernel satisfies. This kernel is partly related to the function $\widehat{f_{A'}}$, via certain functions $g_{ij}$, which appear in an intermediate step of the proof of the properties of the Kendall kernel. The functions $g_{ij}$ are defined as 
$$g_{ij}(\sigma) = 1-2 \cdot \mathbb{1}_{\{\sigma(i)>\sigma(j)\}}$$

As can be observed, when $i = n-1$ and $j = n$, $g_{ij}$ is a linear transformation of $f_{A'}$. Since the FT is also linear, it is immediate that $[\widehat{g_{ij}}]_\lambda$ and $[\widehat{f_{A'}}]_\lambda$ are proportional except for $\lambda = (n)$. The following proposition is an immediate result of the exposition of \cite{mania2018kernel} when one takes into account the relationship between $[\widehat{g_{ij}}]_\lambda$ and $[\widehat{f_{A'}}]_\lambda$, so the proof has not been included in this paper. To see the details, the reader is refered to the aforementioned paper.

\begin{proposition}[Fourier transform of the indicator function]\label{prop::FT_IndicatorsLOP}
The Fourier transform of $f_{A'}(\sigma) =  \mathbb{1}_{ \{\sigma:\  \sigma(n) < \sigma(n-1)\} }$ satisfies
\begin{enumerate}
	\item $[\widehat{f_{A'}}]_\lambda = 0$ if $\lambda\ne (n), (n-1,1), (n-2,1,1)$,
	\item $[\widehat{f_{A'}}]_\lambda$ has rank one for $\lambda = (n-1,1), (n-2,1,1)$. 
\end{enumerate}
\end{proposition}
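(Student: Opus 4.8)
The plan is to prove Proposition \ref{prop::FT_IndicatorsLOP} directly from two symmetries of the indicator function, rather than deducing it from the Kendall kernel. The governing observation is that $f_{A'}(\sigma)=\mathbb{1}_{\{\sigma:\ \sigma(n)<\sigma(n-1)\}}$ depends on $\sigma$ only through the ordered pair $(\sigma(n-1),\sigma(n))$, and that it is \emph{antisymmetric} when these two values are exchanged. First I would record these symmetries. Since $f_{A'}$ is unchanged when $\sigma$ is multiplied on the right by any $\pi$ fixing the positions $n-1$ and $n$, the function is right-invariant under the copy of $\Sigma_{n-2}$ acting on $\{1,\dots,n-2\}$. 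Moreover, writing $s=(n-1\ \ n)$ for the transposition of the last two positions, exactly one of $\sigma(n)<\sigma(n-1)$ and $(\sigma s)(n)<(\sigma s)(n-1)$ holds, so $f_{A'}(\sigma)+f_{A'}(\sigma s)=1$. Hence the recentred function $h:=f_{A'}-\tfrac12$ satisfies $h(\sigma s)=-h(\sigma)$, while the discarded constant $\tfrac12$ contributes only to the coefficient at $(n)$.

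Next I would turn these symmetries into linear constraints on the Fourier matrices. Substituting $\sigma\mapsto\sigma\pi$ and $\sigma\mapsto\sigma s$ into $\widehat{h}_\lambda=\sum_\sigma h(\sigma)\rho_\lambda(\sigma)$ yields $\widehat{h}_\lambda\,\rho_\lambda(\pi)=\widehat{h}_\lambda$ for every $\pi\in\Sigma_{n-2}$ and $\widehat{h}_\lambda\,\rho_\lambda(s)=-\widehat{h}_\lambda$. Consequently every row of $\widehat{h}_\lambda$ lies in the subspace of $\rho_\lambda$ on which $\Sigma_{n-2}$ acts trivially and $s$ acts by $-1$. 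Because $\Sigma_{n-2}$ and $s$ commute and together generate $\Sigma_{n-2}\times\Sigma_2$, this is precisely the isotypic subspace for the one-dimensional representation $\mathbf 1\boxtimes\mathrm{sgn}$ of $\Sigma_{n-2}\times\Sigma_2$, and $\mathrm{rank}(\widehat{h}_\lambda)$ is bounded by its dimension.

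The heart of the argument is then a branching computation. By Frobenius reciprocity, the dimension of that subspace equals the multiplicity of $\rho_\lambda$ in $\mathrm{Ind}_{\Sigma_{n-2}\times\Sigma_2}^{\Sigma_n}(\mathbf 1\boxtimes\mathrm{sgn})$, where $\mathbf 1$ is the trivial module of $\Sigma_{n-2}$ (shape $(n-2)$) and $\mathrm{sgn}$ the sign module of $\Sigma_2$ (shape $(1,1)$). By Pieri's rule this induced module is the sum of the irreducibles obtained by adjoining a vertical $2$-strip to the single row $(n-2)$; the only admissible shapes are $(n-1,1)$ and $(n-2,1,1)$, each with multiplicity one, whereas $(n)$ and $(n-2,2)$ are excluded because their two new boxes would lie in the same row. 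Thus the subspace has dimension $1$ for $\lambda\in\{(n-1,1),(n-2,1,1)\}$ and $0$ otherwise, which forces $\widehat{h}_\lambda=0$ away from these two partitions and rank at most one at them. Restoring the constant $\tfrac12$ reintroduces only the $(n)$ coefficient, giving claim (1) together with the rank bound in claim (2). Note this is exactly where the antisymmetry earns its keep: right-$\Sigma_{n-2}$-invariance alone would permit the full support $\{(n),(n-1,1),(n-2,2),(n-2,1,1)\}$ with multiplicities $1,2,1,1$, and it is the sign condition under $s$ that annihilates $(n-2,2)$ and cuts the $(n-1,1)$ multiplicity from two down to one.

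Finally, to upgrade ``rank at most one'' to ``rank exactly one'' I would show the two surviving coefficients are nonzero; since each lives in a one-dimensional space, it suffices to exhibit a single nonzero entry, or equivalently a test function in each isotypic component against which $f_{A'}$ has nonzero inner product (a first-order position–value indicator for $(n-1,1)$ and an ordered-pair analogue for $(n-2,1,1)$). I expect the branching and sign bookkeeping to be the main obstacle: keeping the left versus right action and the eigenvalue conventions consistent, so that $(n-2,2)$ is genuinely killed while the correct multiplicity one survives at $(n-1,1)$; once that is pinned down, the remainder is routine linear algebra and standard symmetric-function combinatorics.
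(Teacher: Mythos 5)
Your proposal is correct in substance, but it takes a genuinely different route from the paper, which in fact offers no self-contained proof here: it observes that $g_{n-1,n}(\sigma)=1-2\cdot\mathbb{1}_{\{\sigma(n-1)>\sigma(n)\}}$ is an affine transformation of $f_{A'}$ and then imports the statement from the analysis of the Kendall kernel in the cited work of Mania et al., so the vanishing and rank-one properties are obtained by reduction to known results rather than derived. Your argument instead proves everything from first principles: the two symmetries you isolate are right ones (right-invariance under the position copy of $\Sigma_{n-2}$, and $h(\sigma s)=-h(\sigma)$ for $h=f_{A'}-\tfrac12$ with $s=(n-1\ n)$, the constant only touching $\lambda=(n)$), the translation of these into $\hat{h}_\lambda\rho_\lambda(\pi)=\hat{h}_\lambda$ and $\hat{h}_\lambda\rho_\lambda(s)=-\hat{h}_\lambda$ is a correct change of summation variable, and the Frobenius reciprocity plus Pieri computation $\mathrm{Ind}_{\Sigma_{n-2}\times\Sigma_2}^{\Sigma_n}(\mathbf{1}\boxtimes\mathrm{sgn})\cong S^{(n-1,1)}\oplus S^{(n-2,1,1)}$ (vertical $2$-strip on a single row) is exactly right, as is your sanity check that dropping the sign condition would give multiplicities $1,2,1,1$ on $(n),(n-1,1),(n-2,2),(n-2,1,1)$ --- which matches the paper's own decomposition of $\tau_{(n-2,1,1)}$. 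What your approach buys is an explanation of \emph{why} the support is $\{(n),(n-1,1),(n-2,1,1)\}$ and why the ranks cannot exceed one; it is also structurally closer to what the paper actually does by hand for the TSP in Appendix B.3 (recentre by a constant, then control $\hat{h}$ at permutation modules), but replaces that brute-force counting with branching rules. What the paper's citation buys is brevity, and it also covers the nonvanishing.

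The one thin spot in your plan is precisely that last point: upgrading ``rank at most one'' to ``rank one'' is asserted via test functions but not executed, and the two cases are not equally easy. For $(n-1,1)$ your first-order marginal test closes the case immediately: $\sum_{\sigma:\sigma(n-1)=j}h(\sigma)=(n-2)!\left(j-1-\tfrac{n-1}{2}\right)\ne 0$ for suitable $j$, so $\hat{h}_{\tau_{(n-1,1)}}\ne 0$, and since $\tau_{(n-1,1)}\equiv\rho_{(n)}\oplus\rho_{(n-1,1)}$ with $\hat{h}_{\rho_{(n)}}=0$, the coefficient at $(n-1,1)$ is nonzero. For $(n-2,1,1)$, however, the ordered-pair marginals only give you $\hat{h}$ at $\tau_{(n-2,1,1)}$, whose decomposition contains two copies of $\rho_{(n-1,1)}$ and one of $\rho_{(n-2,2)}$ besides $\rho_{(n-2,1,1)}$; since you have already shown the $(n-1,1)$ block is nonzero, a nonzero pair marginal proves nothing by itself. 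You need a rank count: using $\hat{h}_{\rho_{(n)}}=\hat{h}_{\rho_{(n-2,2)}}=0$ (the latter from your own vanishing result), one has $\mathrm{rank}\,\hat{h}_{\tau_{(n-2,1,1)}}=2\,\mathrm{rank}\,\hat{h}_{\rho_{(n-1,1)}}+\mathrm{rank}\,\hat{h}_{\rho_{(n-2,1,1)}}$, so it suffices to verify that the matrix of pair marginals of $h$ has rank three --- a finite combinatorial computation of the same kind the paper performs for the TSP in its Corollary on $\hat{h}_{\tau_{(n-2,1,1)}}$. Alternatively, note that the right translates of $h$ span all pairwise-comparison functions, a module of dimension $\binom{n}{2}=d_{(n-1,1)}+d_{(n-2,1,1)}$, so if the projection of $h$ onto either isotypic component vanished, that component could not be spanned. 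Either patch is routine, but one of them must be supplied before claim (2) is fully proved.
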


\paragraph{Final theorem} Having stated Propositions \ref{prop::FT_QAP_Kondor} and \ref{prop::FT_IndicatorsLOP} above, the proof of Theorem \ref{theo::LOPcoeff}, which characterizes the Fourier coefficients of an LOP, follows naturally.

\begin{proof}[Theorem \ref{prop::FT_IndicatorsLOP}: Fourier coefficients of the LOP]

Taking into account equation (\ref{eq::coefficients_GraphCorrelation}), we already know that the Fourier coefficients of an LOP function are the product of two matrices multiplied by $1/(n-2)!$. One of them ($[\widehat{f_A}]_\lambda$) depends on the values of the input matrix $A$ and the other one ($[\widehat{f_{A'}}]_\lambda^T$), according to Lemma \ref{prop::FT_IndicatorsLOP}, has rank one for $\lambda = (n-1, 1), (n-2,1,1)$ and is 0 for any other partition except for $\lambda = (n)$. A basic result of linear algebra states that 
$$rank(AB) \le \min(rank(A), rank(B))$$

This means that the rank of the product of two matrices is at most the lowest of the ranks of both matrices. Hence, when multiplying matrices $[\widehat{f_A}]_\lambda$ and $[\widehat{f_{A'}}]_\lambda^T$, the resulting rank must be less or equal to 1. \\

Secondly, to prove that the proportions among columns are fixed for a given dimension $n$, observe that if $\lambda = (n-1,1)$ or $\lambda = (n-2,1,1)$, $[\widehat{f_{A'}}]_\lambda$ has rank 1 and so does its transpose $[\widehat{f_{A'}}]_\lambda^T$. This means that all the rows and columns are proportional. Thus, we can write all the columns of $[\widehat{f_{A'}}]_\lambda^T$ proportionally to the first column, which we will be denoting by $v$ and the proportions will be denoted $a_i$ for $i = 2,\cdots, d_{\rho_\lambda}$
$$[\widehat{f_{A'}}]_\lambda^T = \begin{bmatrix}
\mathbf{v} & a_2\mathbf{v}  & \cdots & a_{d_{\rho_\lambda}}\mathbf{v} 
\end{bmatrix}$$ 

So, if 
$$[\widehat{f_A}]_\lambda = \begin{bmatrix}
\mathbf{w}_1^T \\ \mathbf{w}_2^T  \\ \cdots \\ \mathbf{w}^T_{d_{\rho_\lambda}} 
\end{bmatrix}$$  

where $\mathbf{w}_1, \cdots, \mathbf{w}_{d_{\rho_\lambda}}$ are arbitrary column vectors of length $d_{\rho_\lambda}$, then\\

$ \hat{f}_\lambda = \dfrac{1}{(n-2)!}\cdot\hat{f_A}_\lambda\cdot\hat{f_{A'}}_\lambda^T =
\dfrac{1}{(n-2)!} \cdot \begin{bmatrix}
\mathbf{w}_1^T \\ \mathbf{w}_2^T  \\ \cdots \\ \mathbf{w}^T_{d_{\rho_\lambda}}. 
\end{bmatrix} \cdot 
\begin{bmatrix}
\mathbf{v} & a_2\mathbf{v}  & \cdots & a_{d_{\rho_\lambda}}\mathbf{v} 
\end{bmatrix} = $\\

\vspace{6mm}

$ = \dfrac{1}{(n-2)!} \cdot
\begin{bmatrix}
\mathbf{w}_1^T\cdot\mathbf{v} & \mathbf{w}_1^T\cdot a_2\mathbf{v} & \cdots & \mathbf{w}_1^T\cdot a_{d_{\rho_\lambda}}\mathbf{v} \\
 
\mathbf{w}_2^T\cdot\mathbf{v} & \mathbf{w}_2^T\cdot a_2\mathbf{v} & \cdots & \mathbf{w}_2^T\cdot a_{d_{\rho_\lambda}}\mathbf{v}\\
 
\vdots & \vdots &  & \vdots \\ 

\mathbf{w}_{d_{\rho_\lambda}}^T\cdot\mathbf{v} & \mathbf{w}_{d_{\rho_\lambda}}^T\cdot a_2\mathbf{v} & \cdots & \mathbf{w}_{d_{\rho_\lambda}}^T\cdot a_{d_{\rho_\lambda}}\mathbf{v} 
\end{bmatrix}  $\\

\vspace{6mm}

For $\lambda = (n-1,1), (n-2,1,1)$, all the columns of $\hat{f}_\lambda$ are proportional to the first one and its proportions are equal to the ones in $\widehat{f_{A'}}_\lambda^T$.

\end{proof}


\subsection{Characterization of the TSP: Proof}\label{appendix::TSPproof}

The aim of this section is to prove Theorem \ref{theo::asymmetricTSPcoeff} (previously stated in section \ref{subsec::coeffTSP}), following the line of thought of the proof for the LOP. This means that the steps for proving the theorem remain the same: reduction of the problem, analysis of the constant factor $[\widehat{f_{A'}]_\lambda}$ and final proof. However, the prove is not as straightforward, because whereas for the LOP the constant factor had already been analyzed in the literature, we have not found the analogue for the TSP. So, unlike with the LOP, we have carried out the entire analysis of the constant factor.     \\

\paragraph{Reduction of the problem} 

The TSP can be formulated via equation (\ref{eq::QAPobjectiveFunction}) as a particular case of the QAP by setting 

$$A'_{ij} = \left\{\begin{matrix}
\hspace*{2mm} 1\qquad \text{if }  j = i + 1  \ \text{ or }\  (i = n \text{ and } j = 1) \\ 
\hspace*{2mm} 0\qquad \text{otherwise} \hspace*{35mm}
\end{matrix}\right. $$

So the graph correlation function $f_{A'}$ of equation (\ref{eq::coefficients_GraphCorrelation}) takes the following value in the case of the TSP:
\begin{equation}\label{eq::graphFunction_TSP}
f_{A'}(\sigma) = A'_{\sigma(n)\sigma(n-1)} = \left\{\begin{matrix}
\hspace*{2mm} 1\qquad \text{if } \sigma(n-1) = \sigma(n) + 1   \qquad\\ 
\hspace*{7mm}  \text{ or }  \qquad\\
\hspace*{7mm}    (\sigma(n-1) = 1 \text{ and } \sigma(n) = n)     \vspace*{4mm}\\
 
\hspace*{2mm} 0\qquad \text{otherwise} \hspace{25mm}
\end{matrix}\right.  
\end{equation}

Or, equivalently,
$$
f_{A'}(\sigma) = \left\{\begin{matrix}
\hspace*{1mm} 1\qquad \text{if }  \sigma(n-1) = \sigma(n) + 1 \text{ (mod $n$)} \\  
\hspace*{1mm} 0\qquad \text{ otherwise}\hspace{31mm}
\end{matrix}\right.  
$$

\paragraph{Analysis of $\widehat{f_{A'}}$} 

For the sake of clarity, the analysis of $\widehat{f_{A'}}$ has been divided into a number of propositions (Propositions \ref{prop::FT_h_partition(n)}, \ref{prop::FT_h_partition(n-1,1)}, \ref{prop::FT_h_partition(n-2,2)}, \ref{prop::FT_h_partition(n-2,1,1)} and \ref{prop::FT_IndicatorsTSP}) and corollaries (Corollaries \ref{cor::h_(n-2,2)} and \ref{cor::h_(n-2,1,1)}).\\ 

Our first step in the computation of $\widehat{f_{A'}}$ consists in computing $[\widehat{f_{A'}}]_{\rho_{(n)}}$. $\rho_{(n)} = 1$, therefore,

$$[\widehat{f_{A'}}]_{\rho_{(n)}} = \sum_\sigma f_{A'}(\sigma)\cdot \rho_{(n)}(\sigma) = \sum_\sigma f_{A'}(\sigma). $$

According to the definition of $f_{A'}$ in equation (\ref{eq::graphFunction_TSP}), $\sum_{\sigma} f_{A'}(\sigma)$ is the number of permutations $\sigma\in\Sigma_n$ such that $ \sigma(n-1) = \sigma(n) + 1 $ \ or \ ($\sigma(n) = n \text{ and } \sigma(n-1) = 1$). The condition $\sigma(n-1) = \sigma(n) + 1$ \ or \ ($\sigma(n-1) = 1 \text{ and } \sigma(n) = n$) can be understood as mapping two elements ($n-1$ and $n$) to fixed values, while the rest are mapped to arbitray values. The pairs of values to which $n-1$ and $n$ can be mapped are listed:

\begin{align}\label{eq::conditionTSPgraph_unfolded}
\sigma(n) = 1,\ \sigma(n-1) = 2 \nonumber \\
\sigma(n) = 2,\ \sigma(n-1) = 3 \nonumber \\
\vdots \hspace{22mm} \\
\sigma(n) = n-1,\ \sigma(n-1) = n \nonumber \\
\sigma(n) = n,\ \sigma(n-1) = 1 \nonumber 
\end{align}

\vspace{3mm}

There are $n$ possible mappings and, in each of them, the remaining $(n-2)$ elements can have any possible order. This means that the number of permutations that satisfy system (\ref{eq::conditionTSPgraph_unfolded}) is
\begin{equation}\label{eq::FT_trivial_fA_TSP}
[\widehat{f_{A'}}]_{\rho_{(n)}} = n\cdot(n-2)!
\end{equation}

At this point and before proceeding with the exposition, let us note that the FT is linear and that, for any constant function $c$, $\hat{c}_\lambda = 0$ for each $\lambda \ne (n)$. So, when a function $f$ is translated (that is, $f$ is transformed to a function $h = f + c$, being $h$ the new function), its Fourier transform remains the same for any partition except for $\lambda = (n)$ (that is, $\hat{f}_\lambda = h_\lambda$ for $\lambda\ne (n)$). So we could translate $f_{A'}$ and its non-trivial Fourier coefficients would remain the same. \\

Instead of working with $f_{A'}$, the analysis is carried out with a translation of $f_{A'}$, $h = f_{A'} + c$. We adjust our constant $c$, such that $\hat{h}_{(n)} = 0$. Even though this choice may initially seem unclear, it will be elucidated later in Proposition \ref{prop::FT_IndicatorsTSP}. For the time being, probably it suffices to observe that it eventually makes the mathematical reasoning easier. Taking into account equation (\ref{eq::FT_trivial_fA_TSP}), it is immediate to see that the function $h$ with $\hat{h}_{(n)} = 0$ is the following:

\begin{equation}\label{def::h}
h(\sigma) = f_{A'}(\sigma) - \dfrac{1}{n-1} 
\end{equation}

Directly studying $\hat{h}_{\rho_\lambda}$ is not an easy task. Instead, we compute $\hat{h}_{\tau_\lambda}$ and deduce the properties of $\hat{h}_{\rho_\lambda}$ by means of the decompositions of equation (\ref{eq::decompositionsPermutationRepresentation}). The computation of $\hat{h}_{\tau_\lambda}$ is quite straightforward, but a bit tedious, because it requires many steps based on elementary combinatorics.\\


\paragraph{Fourier transform at $\boldsymbol{\tau_{(n)}}$}

From the fact that $\tau_{(n)} = \rho_{(n)} = 1$, it obviously follows Proposition \ref{prop::FT_h_partition(n)}.

\begin{proposition}\label{prop::FT_h_partition(n)}
The FT of $h$ at irreducible $\tau_{(n)}$ is zero.
\end{proposition}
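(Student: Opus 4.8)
The plan is to exploit the fact that the partition $(n)$ admits a single Young tabloid, so the permutation representation $\tau_{(n)}$ defined in \eqref{eq::definitionPermutationRepresentation} is the constant $1\times 1$ matrix: $\tau_{(n)}(\sigma) = 1$ for every $\sigma\in\Sigma_n$. Consequently the Fourier transform at $\tau_{(n)}$ collapses to a plain scalar sum, $\hat{h}_{\tau_{(n)}} = \sum_\sigma h(\sigma)\,\tau_{(n)}(\sigma) = \sum_\sigma h(\sigma)$, and the whole claim reduces to checking that the values of $h$ add up to zero.

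First I would substitute the definition \eqref{def::h} of $h$ and split the sum, obtaining $\sum_\sigma h(\sigma) = \sum_\sigma f_{A'}(\sigma) - \tfrac{1}{n-1}\,|\Sigma_n|$. The first term has already been computed in \eqref{eq::FT_trivial_fA_TSP} as $n\,(n-2)!$, while $|\Sigma_n| = n!$, so the second term equals $\tfrac{n!}{n-1} = n\,(n-2)!$. The two contributions cancel exactly, yielding $\hat{h}_{\tau_{(n)}} = 0$.

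There is essentially no obstacle here: the real work was done upstream, in calibrating the translation constant $c = -1/(n-1)$ precisely so that $\sum_\sigma h(\sigma)$ vanishes. The only point demanding a word of justification is the identification $\tau_{(n)} = \rho_{(n)} = 1$, which is immediate from the tabloid count for the shape $(n)$ and matches the decomposition $\tau_{(n)}\equiv\rho_{(n)}$ recorded in \eqref{eq::decompositionsPermutationRepresentation}. This also explains, \emph{a posteriori}, why the constant was chosen as it was: it is exactly the adjustment that forces the trivial coefficient to disappear, which is what makes the subsequent analysis of the higher partitions cleaner.
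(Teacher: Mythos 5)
Your proposal is correct and takes essentially the same route as the paper: the paper's proof consists precisely of the observation $\tau_{(n)} = \rho_{(n)} = 1$, the vanishing of $\sum_\sigma h(\sigma)$ having been built into the choice of the translation constant in \eqref{def::h} via the count $\sum_\sigma f_{A'}(\sigma) = n\,(n-2)!$ of \eqref{eq::FT_trivial_fA_TSP}. Your explicit verification of the cancellation $n\,(n-2)! - n!/(n-1) = 0$ is exactly the calibration computation the paper performs just before stating the proposition, so nothing is missing and nothing differs in substance.
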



\paragraph{Fourier transform at $\boldsymbol{\tau_{(n-1,1)}}$}


Before computing $\hat{h}_{\tau_{(n-1,1)}}$ in Proposition \ref{prop::FT_h_partition(n-1,1)}, it is convenient to have a look at representation $\tau_{(n-1,1)}$. Its definition is based on the standard tabloids of shape $(n-1, 1)$, which are the following\\

\vspace{3mm}

\ytableausetup
{boxsize=normal,tabloids}
\ytableaushort{
2 3 {\ \cdots} \ n, 1
} \qquad
\ytableaushort{
1 3 {\ \cdots} \ n, 2
} \qquad
\ytableaushort{
1 2 {\ \cdots} \ n, 3
} \qquad $\cdots$ \\

\vspace{7mm}

$\cdots \qquad$ \ytableaushort{
1 2 {\ \cdots} \ {n-1}, n
} \\

\vspace{6mm}

$\tau_{(n-1,1)}$ is given by
$$\tau_{(n-1,1)}(\sigma)_{ij} = \left\{\begin{matrix}
\hspace*{2mm} 1  \qquad     \text{ if } \sigma(\tau_i)=\tau_j\\ 
\hspace*{2mm} 0  \qquad     \text{otherwise} \hspace{5mm}
\end{matrix}\right.$$

\vspace{2mm}

Note that there is a one-to-one correspondence between each of the tabloids of shape $(n-1,1)$ and the element in the second row. So, we can index them by their second-row element. With this ordering, $t_1$ and $t_2$ would be the following
 
$$t_1 =  \ytableaushort{
2 3 {\ \cdots} \ n, 1
} \quad \text{ and } \quad t_2 = \ytableaushort{
1 3 {\ \cdots} \ n, 2
} $$

\hspace{5mm}

the condition $\sigma({t_1})=t_2$ means that we are checking whether\\

\begin{center}
\ytableausetup
{boxsize=9mm}

\ytableaushort{
{\sigma(2)} {\sigma(3)}  {\ \cdots} {\sigma(n)}, {\sigma(1)}
}  \qquad = \qquad 
\ytableaushort{
1 3  {\ \cdots} n, 2
} 
\ytableausetup
{boxsize=normal}
\end{center}

This is equivalent to checking if $\sigma(1) = 2$. So $\tau_{(n-1,1)}$ can also be formulated as

\begin{equation}\label{eq::defTau_(n-1,1)}
\tau_{(n-1,1)}(\sigma)_{ij} = \left\{\begin{matrix}
\hspace*{2mm}	1	\qquad	\text{if } \sigma(i)=j\\ 
\hspace*{2mm}	0	\qquad	\text{otherwise}	\hspace{1mm}
\end{matrix}\right.
\end{equation}

\bigskip

\begin{proposition}\label{prop::FT_h_partition(n-1,1)}
The FT of $h$ at irreducible $\tau_{(n-1,1)}$ is zero.
\end{proposition}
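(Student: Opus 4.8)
The plan is to compute the Fourier transform directly from its definition, exploiting the especially simple form that $\tau_{(n-1,1)}$ takes in equation (\ref{eq::defTau_(n-1,1)}). Since $\tau_{(n-1,1)}(\sigma)_{ij}$ equals $1$ when $\sigma(i)=j$ and $0$ otherwise, the $(i,j)$ entry of $\hat{h}_{\tau_{(n-1,1)}} = \sum_\sigma h(\sigma)\tau_{(n-1,1)}(\sigma)$ collapses to the sum of $h(\sigma)$ over those permutations sending $i$ to $j$, i.e. $[\hat{h}_{\tau_{(n-1,1)}}]_{ij} = \sum_{\sigma:\,\sigma(i)=j} h(\sigma)$. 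Using the definition of $h$ in (\ref{def::h}), I would split this into a contribution from $f_{A'}$ and a contribution from the constant $-1/(n-1)$. There are exactly $(n-1)!$ permutations with $\sigma(i)=j$, so the constant part contributes $-\tfrac{1}{n-1}(n-1)! = -(n-2)!$, independently of $i$ and $j$.

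It then remains to evaluate $N_{ij} := \sum_{\sigma:\,\sigma(i)=j} f_{A'}(\sigma)$, which by (\ref{eq::graphFunction_TSP}) counts the permutations that simultaneously satisfy $\sigma(i)=j$ and the cyclic adjacency condition $\sigma(n-1)=\sigma(n)+1 \pmod n$. The key claim I would establish is that $N_{ij}=(n-2)!$ for every pair $(i,j)$, which makes every entry $N_{ij}-(n-2)!$ vanish and proves the proposition.

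For the claim itself I see two routes. The direct one is a short case analysis: if $i\in\{n-1,n\}$, fixing $\sigma(i)=j$ forces the partner value via the adjacency condition and leaves the remaining $n-2$ positions free, giving $(n-2)!$; if $i\notin\{n-1,n\}$, the adjacent pair at positions $(n,n-1)$ may be any of the $n$ consecutive pairs listed in (\ref{eq::conditionTSPgraph_unfolded}) except the two that already use the value $j$, and the remaining $n-3$ positions are free, giving $(n-2)\,(n-3)! = (n-2)!$. The slicker route, which I would prefer to present, is a symmetry argument: the cyclic shift of values $c(k)=k+1 \pmod n$ gives a bijection $\sigma\mapsto c\circ\sigma$ that preserves $f_{A'}$ (the adjacency condition is shift-invariant) while carrying $\{\sigma:\sigma(i)=j\}$ onto $\{\sigma:\sigma(i)=j+1\}$; hence $N_{ij}$ is independent of $j$, and since $\sum_j N_{ij}$ counts all TSP-admissible permutations, which by (\ref{eq::FT_trivial_fA_TSP}) number $n\,(n-2)!$, each of the $n$ equal terms must be $(n-2)!$.

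I expect the only real obstacle to be establishing $N_{ij}=(n-2)!$ uniformly in $(i,j)$; everything else is bookkeeping. The case analysis is entirely routine but needs care in counting the consecutive pairs that avoid a fixed value, whereas the symmetry argument sidesteps the casework at the cost of first checking shift-invariance of the adjacency condition. With the uniform value of $N_{ij}$ in hand, the cancellation $[\hat{h}_{\tau_{(n-1,1)}}]_{ij}=(n-2)!-(n-2)!=0$ is immediate for all $i,j$, so $\hat{h}_{\tau_{(n-1,1)}}=0$.
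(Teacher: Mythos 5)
Your proposal is correct, and its first route is essentially the paper's own proof: the same entrywise reduction $[\hat{h}_{\tau_{(n-1,1)}}]_{ij} = N_{ij} - (n-2)!$, the same split into the cases $i \notin \{n-1,n\}$ (where two of the $n$ consecutive pairs listed in (\ref{eq::conditionTSPgraph_unfolded}) are excluded for reusing the value $j$, giving $(n-2)\,(n-3)! = (n-2)!$) and $i \in \{n-1,n\}$ (where the adjacency condition forces the partner value, again giving $(n-2)!$). Your preferred second route, however, is genuinely different from what the paper does and is arguably cleaner. The paper never exploits the cyclic symmetry of $f_{A'}$; you observe that composing with the value-shift $c(k) = k+1 \pmod{n}$ preserves the adjacency condition $\sigma(n-1) = \sigma(n)+1 \pmod{n}$ and carries the fibre $\{\sigma : \sigma(i)=j\}$ bijectively onto $\{\sigma : \sigma(i)=j+1 \pmod{n}\}$ (this is a bijection of $\Sigma_n$ since $c$ is itself a permutation — worth stating explicitly), so $N_{ij}$ is constant in $j$; combining this with the row sum $\sum_j N_{ij} = \sum_\sigma f_{A'}(\sigma) = n\,(n-2)!$, already computed in (\ref{eq::FT_trivial_fA_TSP}), pins every entry to $(n-2)!$ with no case distinction on $i$ whatsoever. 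What the paper's casework buys is a self-contained template that it then reuses for the heavier computations at $\tau_{(n-2,2)}$ and $\tau_{(n-2,1,1)}$; what your symmetry argument buys is uniformity and an explanation of \emph{why} the answer cannot depend on $(i,j)$ — and the same shift-invariance would even trim the column casework in those later propositions, since it shows the entries can depend on the column index only through the cyclic difference of $j_1$ and $j_2$. Both arguments are sound; your write-up would be complete once the shift-invariance check is spelled out, which is a one-line verification.
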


\begin{proof}
The FT of $h$ at representation $\tau_{(n-1,1)}$ is
$$\hat{h}_{\tau_{(n-1,1)}} = \sum_{\sigma}h(\sigma)\cdot \tau_{(n-1,1)}(\sigma) $$

Remember that a representation maps permutations to matrices, and $\tau_{(n-1,1)}(\sigma)$ is an \scalebox{0.9}{$(n\!-\!1)\times(n\!-\!1)$} matrix, so $\hat{h}_{\tau_{(n-1,1)}}$ is an \scalebox{0.9}{$(n\!-\!1)\times(n\!-\!1)$} matrix too. Each element of matrix $\hat{h}_{\tau_{(n-1,1)}}$ is  \\
$$[\hat{h}_{\tau_{(n-1,1)}}]_{ij} = \sum_{\sigma}h(\sigma)\cdot [\tau_{(n-1,1)}(\sigma)]_{ij}  $$

Using the definition of $h$ (equation (\ref{def::h})),

\begin{align}\label{eq::h_FTaux}
[\hat{h}_{\tau_\lambda}]_{ij} 
= \sum_{\sigma}\left (f_{A'}(\sigma) - \dfrac{1}{n-1}\right )\cdot [\tau_\lambda(\sigma)]_{ij} 
\nonumber \\ 
= \sum_\sigma f_{A'}(\sigma)\cdot[\tau_\lambda(\sigma)]_{ij} \ - \  \dfrac{1}{n-1}\sum_\sigma [\tau_\lambda(\sigma)]_{ij}.
\end{align}

Considering the definition of $\tau_{(n-1,1)}$ of equation (\ref{eq::defTau_(n-1,1)}), $\sum_\sigma [\tau_{(n-1,1)}(\sigma)]_{ij}$ is the number of permutations such that $\sigma(i) = j$, that is, $(n-1)!$. Therefore, the second term of the subtraction is
\begin{equation}\label{eq::h_FT_rightTerm}
\dfrac{1}{n-1}\sum_\sigma [\tau_{(n-1,1)}(\sigma)]_{ij} = \dfrac{(n-1)!}{n-1} = (n-2)!
\end{equation}

To compute $\sum_\sigma f_{A'}(\sigma)[\tau_{(n-1,1)}(\sigma)]_{ij}$, notice that for each permutation $\sigma$, $$f_{A'}(\sigma)\cdot [\tau_{(n-1,1)}(\sigma)]_{ij}$$ can take either one of two values. If $f_{A'}(\sigma) = 1$ and $[\tau_{(n-1,1)}(\sigma)]_{ij} = 1$, 
$$f_{A'}(\sigma)\cdot[\tau_{(n-1,1)}(\sigma)]_{ij} = 1$$ 

otherwise it is 0. So 
$$\sum_\sigma f_{A'}(\sigma)\cdot[\tau_{(n-1,1)}(\sigma)]_{ij}$$ 
is the number of permutations $\sigma$ such that $f_{A'}(\sigma) = 1$ and $[\tau_{(n-1,1)}(\sigma)]_{ij} = 1$. A permutation $\sigma$ satisfies  $f_{A'}(\sigma) = 1$ and $[\tau_{(n-1,1)}(\sigma)]_{ij} = 1$ if and only if (see equations (\ref{eq::graphFunction_TSP}) and (\ref{eq::defTau_(n-1,1)}) of $f_{A'}$ and $\tau_{(n-1,1)}$) it satisfies the following system of equations

\begin{equation}\label{eq::eqSystemforSigma_(n-1,1)}
 \left\{\begin{matrix} 
\hspace*{2mm} \sigma(n-1) = \sigma(n) + 1  \quad  \text{ or } \quad  (\sigma(n-1) = 1 \text{ and } \sigma(n) = n) \vspace{3mm}  \\

\hspace*{2mm} \sigma(i) = j   \hspace{73mm}
\end{matrix}\right.  
\end{equation}

\bigskip

The number of permutations that satisfy these conditions depends on the value of index $i$.\\

\begin{itemize}
\item If $i\ne n-1,n$,\\
The possible values of $\sigma$ that satisfy the first condition of system (\ref{eq::eqSystemforSigma_(n-1,1)}) have already been listed in (\ref{eq::conditionTSPgraph_unfolded}). In addition, system (\ref{eq::eqSystemforSigma_(n-1,1)}) imposes $\sigma(i) = j$, so this additional condition makes us discard two of the possibilities listed in (\ref{eq::conditionTSPgraph_unfolded}), because $\sigma(n), \sigma(n-1)\ne j$. Consequently, there are $(n-2)$ possible pairs of values that $\sigma(n)$ and $\sigma(n-1)$ can take. For each of this possibilities, we are fixing three elements: $\sigma(n), \sigma(n-1)$ and $\sigma(i)$. Hence, the number of permutations that satisfy system (\ref{eq::eqSystemforSigma_(n-1,1)}) is 
$$\sum_\sigma f_{A'}(\sigma) \cdot [\tau_{(n-1,1)}(\sigma)]_{ij} = (n-2)\cdot (n-3)! = (n-2)! $$

\item If $i = n$,\\
System (\ref{eq::eqSystemforSigma_(n-1,1)}) is simplified:
\begin{equation}\label{eq::eqSystemforSigma_(n-1,1)_simplified}
 \left\{\begin{matrix} 
\hspace*{2mm} 	\sigma(n-1) = \sigma(n) + 1  \quad  \text{ or } \quad  (\sigma(n-1) = 1 \text{ and } \sigma(n) = n) \vspace{3mm} \\
 
\hspace*{2mm} 	\sigma(n) = j   \hspace{73mm}
\end{matrix}\right.  
\end{equation}

$\sigma(n) = j$ is fixed and so is $\sigma(n-1)$. Then, the number of permutations that satisfy system (\ref{eq::eqSystemforSigma_(n-1,1)_simplified}) is the number of permutations that fix two elements:
$$\sum_\sigma f_{A'}(\sigma) \cdot [\tau_{(n-1,1)}(\sigma)]_{ij} = (n-2)! $$

\item If $i = n-1$,\\
This case is analogous to the previous case, where $i = n$; then, 
$$\sum_\sigma f_{A'}(\sigma) \cdot [\tau_{(n-1,1)}(\sigma)]_{ij} = (n-2)!$$
\end{itemize}

We have just seen that
$$\sum_\sigma f_{A'}(\sigma) \cdot [\tau_{(n-1,1)}(\sigma)]_{ij} = (n-2)!$$ 

Taking (\ref{eq::h_FT_rightTerm}) into account,\\

$[\hat{h}_{\tau_{(n-1,1)}}]_{ij} = \sum_\sigma f_{A'}(\sigma) \cdot [\tau_{(n-1,1)}(\sigma)]_{ij} -  \dfrac{1}{n-1} \sum_\sigma [\tau_{(n-1,1)}(\sigma)]_{ij} = $\\
 
$(n-2)! - (n-2)! = 0. $\\
\end{proof}


\paragraph{Fourier transform at $\boldsymbol{\tau_{(n-2,2)}}$}


A Young tabloid of shape $\lambda = (n-2, 2)$ can be exactly determined by the elements in the second and third row. For example, the following tabloid 

\ytableausetup
{boxsize=9mm}

\begin{center}
\ytableaushort{
1 2 {\ \cdots} \ {n\!-\!2}, {n\!-\!1} n
} 
\end{center}

\ytableausetup
{boxsize=normal}

can be identified by the unordered tuple $\{n-1, n\}$. The permutation representation can be expressed as

\begin{equation}\label{eq::defTau_(n-2,2)}
\tau_{(n-2,2)}(\sigma)_{ij} = \left\{\begin{matrix}
\hspace*{2mm} 	1	\qquad 		\text{if } \sigma(\{i_1, i_2\}) = \{j_1, j_2\}\\ 
\hspace*{2mm} 	0	\qquad 		\text{otherwise}   \hspace{19mm}
\end{matrix}\right.
\end{equation}

where the indices $i = \{i_1, i_2\}$ and $j = \{j_1, j_2\}$ are unordered tuples.

\begin{proposition}\label{prop::FT_h_partition(n-2,2)}
The FT of $h$ at irreducible $\tau_{(n-2,2)}$ can be indexed by unordered tuples and takes the following values depending on the row $i = \{i_1, i_2\}$ and the column $j = \{j_1, j_2\}$:

\begin{itemize}
\item If $\{i_1, i_2\}\cap\{n-1, n\} = \emptyset$,
$$[\hat{h}_{\tau_{(n-2,2)}}]_{ij} = \left\{\begin{matrix}\vspace{3mm}
\hspace*{1mm} \dfrac{2(n-3)!}{n-1}\qquad       \text{if }  |j_1-j_2| = 1 \text{ (mod $n$)}    \hspace{35mm}\\
\hspace*{1mm} -\dfrac{4(n-4)!}{n-1}\qquad       \text{if }   |j_1-j_2| \ne 1 \text{ (mod $n$)}     \hspace{35mm}
\end{matrix}\right.$$

\item If $\{i_1, i_2\}\cap\{n-1, n\} = \{n-1\} \text{ or } \{n\}$,
 
$$[\hat{h}_{\tau_{(n-2,2)}}]_{ij} = \left\{\begin{matrix}\vspace{3mm}
\hspace*{1mm} \dfrac{(3-n)(n-3)!}{n-1}\qquad       \text{if } |j_1-j_2| = 1 \text{ (mod $n$)} \\   
\hspace*{1mm} \dfrac{2(n-3)!}{n-1}\qquad       \text{if } |j_1-j_2| \ne 1  \text{ (mod $n$)}      
\end{matrix}\right.$$

\item If $\{i_1, i_2\}\cap\{n-1, n\} = \{n-1, n\}$,

$$[\hat{h}_{\tau_{(n-2,2)}}]_{ij} = \left\{\begin{matrix}\vspace{3mm}

\hspace*{1mm} \dfrac{(n-3)(n-2)!}{n-1}\qquad       \text{if } |j_1-j_2| = 1 \text{ (mod $n$)}     \hspace{15mm} \\ 
\hspace*{1mm} -\dfrac{2(n-2)!}{n-1}\qquad       \text{if } |j_1-j_2| \ne 1 \text{ (mod $n$)}       \hspace{20mm}\\       
\end{matrix}\right.$$

\end{itemize}

\end{proposition}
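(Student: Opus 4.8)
The plan is to follow exactly the template established in the proof of Proposition \ref{prop::FT_h_partition(n-1,1)}: expand the entrywise Fourier sum using the definition of $h$ in (\ref{def::h}), split it by linearity into a ``graph'' term weighted by $f_{A'}$ and a ``constant'' term, and evaluate each by elementary counting. Writing the row and column indices as unordered pairs $i=\{i_1,i_2\}$, $j=\{j_1,j_2\}$ as in (\ref{eq::defTau_(n-2,2)}), I would first record
\[
[\hat{h}_{\tau_{(n-2,2)}}]_{ij}=\sum_\sigma f_{A'}(\sigma)\,[\tau_{(n-2,2)}(\sigma)]_{ij}-\frac{1}{n-1}\sum_\sigma[\tau_{(n-2,2)}(\sigma)]_{ij}.
\]
The second (constant) term is immediate: $\sum_\sigma[\tau_{(n-2,2)}(\sigma)]_{ij}$ counts the permutations sending $\{i_1,i_2\}$ onto $\{j_1,j_2\}$, which is $2(n-2)!$ (two ways to match the pair, then $(n-2)!$ for the rest), so this term equals $2(n-2)!/(n-1)$ independently of $i$ and $j$. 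All the work therefore lies in the first term.

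For the first term I would count permutations that simultaneously satisfy $\{\sigma(i_1),\sigma(i_2)\}=\{j_1,j_2\}$ and the cyclic-adjacency condition $\sigma(n-1)=\sigma(n)+1 \pmod n$ that defines $f_{A'}=1$ (the $n$ admissible value-pairs for positions $n$ and $n-1$ are exactly those listed in (\ref{eq::conditionTSPgraph_unfolded})). The count splits according to how the position pair $\{i_1,i_2\}$ meets the two constrained positions $\{n-1,n\}$, and according to whether the value pair $\{j_1,j_2\}$ is itself cyclically consecutive, i.e. $|j_1-j_2|=1 \pmod n$; this is what produces the six cases in the statement. When $\{i_1,i_2\}\cap\{n-1,n\}=\emptyset$, positions $n,n-1$ still range freely over admissible consecutive value-pairs that avoid $\{j_1,j_2\}$; counting via inclusion--exclusion, each value lies in exactly two of the $n$ cyclic pairs, so the number of admissible pairs disjoint from $\{j_1,j_2\}$ is $n-3$ when $\{j_1,j_2\}$ is consecutive and $n-4$ otherwise, giving $2(n-3)(n-4)!$ and $2(n-4)(n-4)!$ respectively. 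When $\{i_1,i_2\}=\{n-1,n\}$, the adjacency condition forces $\{j_1,j_2\}$ itself to be consecutive, so the count is $(n-2)!$ for a consecutive $\{j_1,j_2\}$ (a single admissible orientation of positions $n,n-1$) and $0$ otherwise.

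The delicate case is the mixed one, $\{i_1,i_2\}\cap\{n-1,n\}$ a singleton. Here one of $j_1,j_2$ is pinned to the constrained position, which through the adjacency relation forces the value at the remaining constrained position; the subtlety is deciding which of the two ways of matching $\{j_1,j_2\}$ to the positions creates a collision between this forced value and an already-assigned one. A short case analysis shows that exactly one of the two matchings survives when $\{j_1,j_2\}$ is consecutive and both survive when it is not, yielding $(n-3)!$ and $2(n-3)!$; the same bookkeeping applies identically whether the shared position is $n$ or $n-1$. I expect this collision analysis, together with the care needed to keep all mod-$n$ arithmetic consistent and free of accidental coincidences for small $n$, to be the main obstacle. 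Substituting each of these six counts into the displayed identity and simplifying (for instance $2(n-3)(n-4)!-2(n-2)!/(n-1)=2(n-3)!/(n-1)$, and $2(n-4)(n-4)!-2(n-2)!/(n-1)=-4(n-4)!/(n-1)$) reproduces exactly the six values claimed in Proposition \ref{prop::FT_h_partition(n-2,2)}.
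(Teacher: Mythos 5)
Your proposal is correct and follows essentially the same route as the paper's proof: the same split of $[\hat{h}_{\tau_{(n-2,2)}}]_{ij}$ via equation (\ref{eq::h_FTaux}) into the $f_{A'}$-weighted sum and the constant term $2(n-2)!/(n-1)$, the same six-case analysis on how $\{i_1,i_2\}$ meets $\{n-1,n\}$ and whether $\{j_1,j_2\}$ is cyclically consecutive, and the same counts ($2(n-3)(n-4)!$, $2(n-4)(n-4)!$, $(n-3)!$, $2(n-3)!$, $(n-2)!$, $0$), with correct final simplifications. The only differences are cosmetic: you phrase the disjoint case via inclusion--exclusion where the paper enumerates the discarded value-pairs explicitly, and you work out the mixed (singleton-intersection) case in detail where the paper merely asserts it is "computed similarly" --- your collision analysis there matches what the paper's method yields.
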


\begin{proof}

\bigskip
To compute $\hat{h}_{\tau_{(n-2,2)}}$ as specified by equation (\ref{eq::h_FTaux}), we are going to compute again the two terms of the subtraction separately. \\ 


$\sum_\sigma [\tau_{(n-2,2)}(\sigma)]_{ij}$ is the number of permutations such that $\sigma(i_1) = j_1$ and $\sigma(i_2) = j_2$, or, $\sigma(i_1) = j_2$ and $\sigma(i_2) = j_1$. There are $2 \cdot (n-2)!$ permutations satisfying this condition, then
\begin{equation}\label{eq::h_FT_rightTerm_(n-2,2)}
\dfrac{1}{n-1} \sum_\sigma [\tau_{(n-1,1)}(\sigma)]_{ij} = \dfrac{2\cdot(n-2)!}{n-1}
\end{equation}


To compute $\sum_\sigma f_{A'}(\sigma)[\tau_{(n-2,2)}(\sigma)]_{ij}$ we have to count the number of permutations for which $f_{A'}(\sigma) = 1$ and $[\tau_{(n-2,2)}(\sigma)]_{ij} = 1$, that is, how many permutations satisfy the following system of equations:

\begin{equation}\label{eq::eqSystemforSigma_(n-2,2)}
 \left\{\begin{matrix} 
\hspace*{2mm} 	\sigma(n-1) = \sigma(n) + 1  \ \text{ (mod $n$)}     \vspace{2mm}\\ 
\hspace*{2mm} 	\sigma(\{i_1, i_2\}) = \{j_1, j_2\}  \hspace{14mm}
\end{matrix}\right.  
\end{equation}

The number of permutations that satisfy these equations depends on the values of indices $i$ and $j$. We list the values of $\sum_\sigma f_{A'}(\sigma) \cdot [\tau_{(n-2,2)}(\sigma)]_{ij}$, which depend on certain conditions over the indices. Even though there are 6 distinct cases, we only prove the first and second, since the rest of them are computed similarly.

\begin{itemize}
\item If $\{i_1, i_2\} \cap \{n, n-1\} = \emptyset$,
	\begin{itemize}
	\item If $|j_1 - j_2| = 1 \text{ (mod $n$)} $,\\

	the condition for a permutation $\sigma$ to satisfy $\sigma(n-1) = \sigma(n)+1 \text{ (mod $n$)}$ restricts the pairs of values that $\sigma(n-1)$ and $\sigma(n)$ can hold.  All the $n$ possible pairs of values that $\sigma(n-1)$ and $\sigma(n)$ can take are listed in equation (\ref{eq::conditionTSPgraph_unfolded}). However, the additional condition of system (\ref{eq::eqSystemforSigma_(n-2,2)}), that is $\sigma(\{i_1, i_2\}) = \{j_1, j_2\}$, discards some of these pairs, since we have the restrictions $\sigma(n-1) \ne j_1, j_2$ and $\sigma(n) \ne j_1, j_2$. Assume, without loss of generality, that $j_2 = j_1 + 1\text{ (mod $n$)}$. Then, the discarded pairs are

	$$\sigma(n) = j_1-1  \text{ (mod $n$)},\qquad    \sigma(n-1) = j_1$$
	$$\sigma(n) = j_1,\qquad     \sigma(n-1) = j_2 \text{ (mod $n$)}$$
	$$\sigma(n) = j_2,\qquad     \sigma(n-1) = j_2+1 \text{ (mod $n$)}$$

	So we are left with $(n-3)$ different possible values of $\sigma(n-1)$ and $\sigma(n)$. In addition, $\sigma(\{i_1, i_2\}) = \{j_1, j_2\}$ has also two possibilities, that is $\sigma(i_1) = j_1$ and $\sigma(i_2) = j_2$, or $\sigma(i_1) = j_2$ and $\sigma(i_2) = j_1$. So we have $2 \cdot (n-3)$ possible combinations of values for $\sigma(n-1), \sigma(n), \sigma(i_1)$ and $\sigma(i_2)$. For each combination, the rest of the $(n-4)$ elements can be reordered arbitrarily. This implies that the number of permutations that satisfy system (\ref{eq::eqSystemforSigma_(n-2,2)}) is

	$$\sum_\sigma f_{A'}(\sigma)[\tau_{(n-2,2)}(\sigma)]_{ij} = 2\ (n-3)\ (n-4)!$$

	\item If $|j_1 - j_2| \ne 1 \text{ (mod $n$)}$,\\
	This case is very similar to the previous one, but, since $|j_1 - j_2| \ne 1 \text{ (mod $n$)}$, the pairs of values of $\sigma(n)$ and $\sigma(n-1)$ discarded due to the condition $\sigma(\{i_1, i_2\}) = \{j_1, j_2\}$ are different:

	$$\sigma(n) = j_1-1  \text{ (mod $n$)},\qquad    \sigma(n-1) = j_1$$
	$$\sigma(n) = j_1,\qquad     \sigma(n-1) = j_1+1 \text{ (mod $n$)}$$
	$$\sigma(n) = j_2-1  \text{ (mod $n$)},\qquad    \sigma(n-1) = j_2$$
	$$\sigma(n) = j_2,\qquad     \sigma(n-1) = j_2+1 \text{ (mod $n$)}$$

	Therefore, the number of permutations satisfying system (\ref{eq::eqSystemforSigma_(n-2,2)}) is

	$$\sum_\sigma f_{A'}(\sigma) \cdot [\tau_{(n-2,2)}(\sigma)]_{ij} = 2 \cdot (n-4) \cdot (n-4)!$$
	\end{itemize}

\item If $\{i_1, i_2\} \cap \{n-1, n\} = \{n-1\}$ \ or \ $\{i_1, i_2\} \cap \{n-1, n\} = \{n\}$,
	\begin{itemize}
	\item If $|j_1 - j_2| = 1 \text{ (mod $n$)}$ ,\\
	$$\sum_\sigma f_{A'}(\sigma)[\tau_{(n-2,2)}(\sigma)]_{ij} = (n-3)!$$
	\item If $|j_1 - j_2| \ne 1 \text{ (mod $n$)}$,\\
	$$\sum_\sigma f_{A'}(\sigma)[\tau_{(n-2,2)}(\sigma)]_{ij} = 2\ (n-3)!$$
	\end{itemize}

\item If $\{i_1, i_2\} = \{n-1, n\}$,
	\begin{itemize}
	\item If $|j_1 - j_2| = 1 \text{ (mod $n$)}$,\\
	$$\sum_\sigma f_{A'}(\sigma)[\tau_{(n-2,2)}(\sigma)]_{ij} = (n-2)!$$
	\item If $|j_1 - j_2| \ne 1 \text{ (mod $n$)}$,\\
	$$\sum_\sigma f_{A'}(\sigma)[\tau_{(n-2,2)}(\sigma)]_{ij} = 0$$
	\end{itemize}

\end{itemize}

$\hat{h}_{\tau_{n-2,2}}$ is computed by subtracting the two terms in equation (\ref{eq::h_FTaux}). The first term is $\sum_\sigma f_{A'}(\sigma) \cdot [\tau_{(n-2,2)}(\sigma)]_{ij}$ and the second has been calculated in (\ref{eq::h_FT_rightTerm_(n-2,2)}). The subtraction immediately leads to the statement of our proposition. 

\end{proof}

\begin{corollary}\label{cor::h_(n-2,2)}
$\hat{h}_{\tau_{(n-2,2)}}$ is a rank-one matrix.
\end{corollary}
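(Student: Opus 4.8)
The plan is to read the rank off directly from the closed-form entries supplied by Proposition~\ref{prop::FT_h_partition(n-2,2)}. The crucial observation is that the entry $[\hat{h}_{\tau_{(n-2,2)}}]_{ij}$ depends on the indices $i=\{i_1,i_2\}$ and $j=\{j_1,j_2\}$ only through two discrete pieces of information: the \emph{row type}, namely which of the three cases $\{i_1,i_2\}\cap\{n-1,n\}=\emptyset$, a singleton, or $\{n-1,n\}$ occurs, and the \emph{column type}, namely whether $|j_1-j_2|=1\pmod n$ or not. Consequently the matrix has at most three distinct rows, and within each row at most two distinct values. Proving rank one therefore amounts to showing that these three ``reduced'' rows are mutually proportional.

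To make this precise I would exhibit an explicit outer-product decomposition $\hat{h}_{\tau_{(n-2,2)}}=\mathbf{u}\,\mathbf{v}^{T}$. First I define the column vector $\mathbf{v}$ by $v_j=-\tfrac{n-3}{2}$ when $|j_1-j_2|=1\pmod n$ and $v_j=1$ otherwise. Next I let $\mathbf{u}$ be constant on each row type, taking as its value the ``$|j_1-j_2|\ne 1$'' entry of that row as listed in the proposition. It then remains to check, in each of the three row types, that scaling this common column pattern $\mathbf{v}$ by $u_i$ reproduces both listed entries of that row.

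The heart of the argument is the verification that, within a single row, the ratio of the ``$|j_1-j_2|=1\pmod n$'' value to the ``$|j_1-j_2|\ne 1\pmod n$'' value equals the same constant $-\tfrac{n-3}{2}$ for all three row types. Using the factorial identities $(n-3)!=(n-3)(n-4)!$ and $(n-2)!=(n-2)(n-3)!$ to bring each case over a common factor, this ratio collapses to $-\tfrac{n-3}{2}$ in every case, which is exactly the consistency condition required for the outer-product form to hold. Once this common ratio is confirmed, every row is a scalar multiple of the nonzero vector $\mathbf{v}$, so the row space is one-dimensional and the matrix has rank one (it is genuinely nonzero for all admissible $n\ge 4$, since $\mathbf{u}$ and $\mathbf{v}$ are both nonzero there).

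The computation is entirely elementary, so the only real obstacle is careful bookkeeping of the factorial prefactors across the six sub-cases of the proposition; there is no conceptual difficulty once one recognizes that the entry pattern factors as a product of a row-dependent scalar and a fixed column vector. An identical strategy will serve for the companion Corollary~\ref{cor::h_(n-2,1,1)}.
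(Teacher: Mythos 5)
Your proof is correct and takes essentially the same route as the paper: both arguments read the entries off Proposition~\ref{prop::FT_h_partition(n-2,2)} and verify, case by case over the same three row types, that the two column values stand in a fixed ratio (your $-\tfrac{n-3}{2}$ is the reciprocal of the paper's $\tfrac{2}{3-n}$), which is precisely the proportionality of all rows/columns. Your explicit outer-product factorization $\hat{h}_{\tau_{(n-2,2)}}=\mathbf{u}\,\mathbf{v}^{T}$ is merely a compact repackaging of the paper's column-proportionality check, with the same factorial bookkeeping at its core.
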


\begin{proof}
We have already computed the value of $\hat{h}_{\tau_{(n-2,2)}}$ in Proposition \ref{prop::FT_h_partition(n-2,2)}. Rows and columns are indexed by unordered tuples $i = \{i_1, i_2\}$ and $j = \{j_1, j_2\}$, respectively. Notice that there are only two different columns in $\hat{h}_{\tau_{(n-2,2)}}$, depending on whether $|j_1-j_2| = 1 \text{ (mod $n$)}$ or $|j_1-j_2| \ne 1 \text{ (mod $n$)}$. It is easy to check that a column given by $|j_1-j_2| = 1 \text{ (mod $n$)}$ is $2/(3-n)$ times a column given by $|j_1-j_2| \ne 1 \text{ (mod $n$)}$.\\ 

We are going to see it by differentiating row cases. Assume that $j = \{j_1, j_2\}$ is an index such that $|j_1-j_2| = 1 \text{ (mod $n$)}$ and $j' = \{j'_1, j'_2\}$ is such that $|j'_1-j'_2| \ne 1 \text{ (mod $n$)}$.

\begin{itemize}
\item If $\{i_1, i_2\} \cap \{n-1, n\} = \emptyset$,
$$[\hat{h}_{\tau_{(n-2,2)}}]_{ij'} = -\dfrac{4(n-4)!}{n-1} = \dfrac{2}{3-n} \cdot \dfrac{2(n-3)!}{n-1} = \dfrac{2}{3-n} \cdot [\hat{h}_{\tau_{(n-2,2)}}]_{ij}$$

\item If $\{i_1, i_2\} \cap \{n-1, n\} = \{n-1\}$ \ or \ $\{i_1, i_2\} \cap \{n-1, n\} = \{n\}$,
$$[\hat{h}_{\tau_{(n-2,2)}}]_{ij'} = \dfrac{2(n-3)!}{n-1} = \dfrac{2}{3-n} \cdot \dfrac{(3-n)(n-3)!}{n-1} = \dfrac{2}{3-n} \cdot [\hat{h}_{\tau_{(n-2,2)}}]_{ij}$$

\item If $\{i_1, i_2\} = \{n-1, n\}$,
$$[\hat{h}_{\tau_{(n-2,2)}}]_{ij'} = -\dfrac{2(n-2)!}{n-1} = \dfrac{2}{3-n} \cdot \dfrac{(n-3)(n-2)!}{n-1} = \dfrac{2}{3-n} \cdot [\hat{h}_{\tau_{(n-2,2)}}]_{ij}$$
\end{itemize}

This implies that all the columns in $\hat{h}_{\tau_{(n-2,2)}}$ are proportional and, in consequence, $\hat{h}_{\tau_{(n-2,2)}}$ is rank-one.

\end{proof}


\paragraph{Fourier transform at $\boldsymbol{\tau_{(n-2,1,1)}}$}


A Young tabloid of shape $(n-2,1,1)$ has the following representation:
\ytableausetup
{boxsize=9mm}

\begin{center}
\ytableaushort{
1 2 {\ \cdots} \ {n\!-\!2}, {n\!-\!1}, n
} 
\end{center}

\ytableausetup
{boxsize=normal}

It can be exactly identified by the elements in the second and third row, which can be represented, for instance, with the tuple $(n-1, n)$. The permutation representation for partition $\lambda = (n-2,1,1)$ can then be expressed as

\begin{equation}\label{eq::defTau_(n-2,1,1)}
\tau_{(n-2,1,1)}(\sigma)_{ij} = \left\{\begin{matrix}
\hspace*{2mm}	1	\qquad	\text{if } \sigma(i_1)=j_1     \text{ and }     \sigma(i_2)=j_2\\ 
\hspace*{2mm}	0	\qquad	\text{otherwise}   \hspace{27mm}
\end{matrix}\right.
\end{equation}

where the indices are ordered tuples $i = (i_1, i_2)$ and $j = (j_1, j_2)$.\\

\begin{proposition}\label{prop::FT_h_partition(n-2,1,1)}
The FT of $h$ at irreducible $\tau_{(n-2,1,1)}$ can be indexed by ordered tuples and takes the following values depending on the row $i = (i_1, i_2)$ and the column $j = (j_1, j_2)$: 

\begin{itemize}
\item If $\{i_1, i_2\}\cap\{n-1, n\} = \emptyset$,
$$[\hat{h}_{\tau_{(n-2,2)}}]_{ij} = \left\{\begin{matrix}\vspace{3mm}

\hspace*{1mm} \dfrac{(n-3)!}{n-1}        \qquad       \text{if }      \{i_1, i_2\}\cap\{n-1, n\} = \emptyset \ \text{ and } \ |j_1-j_2| = 1 \text{ (mod $n$)}    \hspace{0mm}\\

\hspace*{1mm} -\dfrac{2(n-4)!}{n-1}        \qquad       \text{if }      \{i_1, i_2\}\cap\{n-1, n\} = \emptyset \ \text{ and } \ |j_1-j_2| \ne 1 \text{ (mod $n$)}     \hspace{0mm}\\ 

\end{matrix}\right.$$

\item If $i_1 = n$ and $i_2\ne n-1 $, or $i_1 \ne n$ and $i_2 = n-1 $, or $i_1 = n$ and $i_2 = n-1 $,
$$[\hat{h}_{\tau_{(n-2,2)}}]_{ij} = \left\{\begin{matrix}\vspace{3mm}

\hspace*{1mm}  -\dfrac{(n-2)!}{n-1}        \qquad       \text{if }      i_1 = n, i_2\ne n-1  \ \text{ and } \ j_2 = j_1 + 1 \text{ (mod $n$)}    \hspace{6mm}  \\      \vspace{8mm}

\hspace*{1mm}   \dfrac{(n-3)!}{n-1}       \qquad       \text{if }      i_1 = n, i_2\ne n-1  \ \text{ and } \ j_2 \ne j_1 + 1 \text{ (mod $n$)}    \hspace{5mm} \\      \vspace{3mm}

\hspace*{1mm}  -\dfrac{(n-2)!}{n-1}       \qquad       \text{if }      i_1\ne n, i_2 = n-1 \ \text{ and } \ j_2 = j_1 + 1 \text{ (mod $n$)}     \hspace{6mm} \\ \vspace{8mm}

\hspace*{1mm}  \dfrac{(n-3)!}{n-1}         \qquad       \text{if }      i_1\ne n, i_2 = n-1\ \text{ and } \ j_2 \ne j_1 + 1 \text{ (mod $n$)}       \hspace{5mm}\\ \vspace{3mm}    

\hspace*{1mm}  \dfrac{(n-2)(n-2)!}{n-1}       \qquad       \text{if }      i_1 = n, i_2 = n-1  \ \text{ and } \ j_2 = j_1 + 1 \text{ (mod $n$)}          \hspace{0mm}\\

\hspace*{1mm}   -\dfrac{(n-2)!}{n-1}       \qquad       \text{if }      i_1 = n, i_2 = n-1  \ \text{ and } \ j_2 \ne j_1 + 1 \text{ (mod $n$)}        \hspace{0mm}
\end{matrix}\right.$$

\item If $i_1 = n-1$ and $i_2\ne n $, or $i_1 \ne n-1$ and $i_2 = n$, or $i_1 = n-1$ and $i_2 = n $,
$$[\hat{h}_{\tau_{(n-2,2)}}]_{ij} = \left\{\begin{matrix}\vspace{3mm}

\hspace*{1mm}  -\dfrac{(n-2)!}{n-1}       \qquad       \text{if }      i_1\ne n-1, i_2 = n  \ \text{ and } \ j_1 = j_2 + 1 \text{ (mod $n$)}     \hspace{10mm} \\ \vspace{8mm}

\hspace*{1mm}  \dfrac{(n-3)!}{n-1}        \qquad       \text{if }      i_1\ne n-1, i_2 = n  \ \text{ and } \ j_1 \ne j_2 + 1 \text{ (mod $n$)}       \hspace{9mm}\\ \vspace{3mm}    

\hspace*{1mm}  -\dfrac{(n-2)!}{n-1}        \qquad       \text{if }      i_1 = n-1, i_2\ne n  \ \text{ and } \ j_1 = j_2 + 1 \text{ (mod $n$)}   \hspace{10mm} \\      \vspace{8mm}

\hspace*{1mm}  \dfrac{(n-3)!}{n-1}         \qquad       \text{if }      i_1 = n-1, i_2\ne n  \ \text{ and } \ j_1 \ne j_2 + 1 \text{ (mod $n$)}   \hspace{9mm} \\      \vspace{3mm}

\hspace*{1mm}  \dfrac{(n-2)(n-2)!}{n-1}       \qquad         \text{if }      i_1 = n-1, i_2 = n  \ \text{ and } \ j_1 = j_2 + 1 \text{ (mod $n$)}    \hspace{0mm}\\\vspace{2mm}

\hspace*{1mm}  -\dfrac{(n-2)!}{n-1}       \qquad       \text{if }     i_1 = n-1, i_2 = n  \ \text{ and } \ j_1 \ne j_2 + 1 \text{ (mod $n$)}     \hspace{8mm}\\ 
\end{matrix}\right.$$

\end{itemize}

\end{proposition}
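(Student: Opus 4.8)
The plan is to follow exactly the template used for $\tau_{(n-2,2)}$ in Proposition \ref{prop::FT_h_partition(n-2,2)}: evaluate each entry through the splitting of equation (\ref{eq::h_FTaux}),
$$[\hat{h}_{\tau_{(n-2,1,1)}}]_{ij} = \sum_\sigma f_{A'}(\sigma)\,[\tau_{(n-2,1,1)}(\sigma)]_{ij} - \frac{1}{n-1}\sum_\sigma [\tau_{(n-2,1,1)}(\sigma)]_{ij}.$$
The second term is immediate: by the definition (\ref{eq::defTau_(n-2,1,1)}), $\sum_\sigma [\tau_{(n-2,1,1)}(\sigma)]_{ij}$ counts the permutations fixing the two ordered assignments $\sigma(i_1)=j_1$, $\sigma(i_2)=j_2$, of which there are $(n-2)!$; hence the subtracted quantity is always $(n-2)!/(n-1)$. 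All the work lies in the first term, which counts the permutations satisfying
$$\left\{\begin{matrix} \sigma(n-1)=\sigma(n)+1 \pmod{n}\\ \sigma(i_1)=j_1,\ \sigma(i_2)=j_2. \end{matrix}\right.$$

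First I would split on the intersection $\{i_1,i_2\}\cap\{n-1,n\}$, as in Proposition \ref{prop::FT_h_partition(n-2,2)}, but keeping the indices \emph{ordered}. When the intersection is empty, the cyclic relation and the two assignments concern four distinct positions, so I would count the admissible cyclic pairs $(\sigma(n),\sigma(n-1))$ among the $n$ listed in (\ref{eq::conditionTSPgraph_unfolded}) that avoid the forbidden values $j_1,j_2$: when $j_1,j_2$ are cyclically adjacent this excludes three pairs, leaving $n-3$, otherwise it excludes four, leaving $n-4$. Multiplying by the $(n-4)!$ free arrangements of the remaining symbols gives $(n-3)(n-4)!$ and $(n-4)(n-4)!$ respectively (no factor of two, since the index is ordered). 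Subtracting $(n-2)!/(n-1)$ and simplifying via $(n-3)(n-4)!=(n-3)!$ together with $(n-3)!-(n-2)(n-3)!/(n-1)=(n-3)!/(n-1)$ reproduces the two values $\tfrac{(n-3)!}{n-1}$ and $-\tfrac{2(n-4)!}{n-1}$ of the first bullet.

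The remaining cases, where one or both of $i_1,i_2$ equal $n$ or $n-1$, are where the ordered bookkeeping becomes the crux. Fixing $\sigma(n)=j_1$ (or $\sigma(n-1)=j_1$, etc.) forces, through the cyclic relation, the value at the partner position; the surviving constraint $\sigma(i_2)=j_2$ is then either \emph{compatible}, giving $(n-3)!$ permutations, or \emph{collides} with an already-forced value, giving $0$. The dichotomy depends on whether $j_1,j_2$ are adjacent and, crucially, \emph{in which direction} — this is precisely why the statement separates $j_2=j_1+1$ from $j_1=j_2+1$, and why positions $n$ and $n-1$ play asymmetric roles (the relation $\sigma(n-1)=\sigma(n)+1$ is itself asymmetric). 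When both indices lie in $\{n-1,n\}$, the two assignments and the cyclic relation are either mutually consistent, leaving the other $n-2$ positions free for $(n-2)!$ permutations, or contradictory for $0$. Subtracting $(n-2)!/(n-1)$ and using $(n-2)!-(n-2)!/(n-1)=(n-2)(n-2)!/(n-1)$ and $0-(n-2)!/(n-1)=-(n-2)!/(n-1)$ yields the second and third bullets.

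The main obstacle is not any single count — each is an elementary ``fix some positions, permute the rest'' argument — but the sheer number of subcases and the need to track the direction of adjacency and the asymmetry between positions $n-1$ and $n$ without error. I would organize the write-up by disposing of the empty-intersection case in full, then treating one representative singleton case (say $i_1=n$, $i_2\ne n-1$) in detail and observing that the others follow by identical reasoning with the forced position and the adjacency direction relabeled, exactly as the authors abbreviate with ``the rest of them are computed similarly'' in Proposition \ref{prop::FT_h_partition(n-2,2)}.
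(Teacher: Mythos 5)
Your proposal is correct and follows essentially the same route as the paper's own proof: the split via equation (\ref{eq::h_FTaux}), the constant second term $(n-2)!/(n-1)$ obtained from the $(n-2)!$ permutations with $\sigma(i_1)=j_1$, $\sigma(i_2)=j_2$, and a case analysis of the counting term organized by $\{i_1,i_2\}\cap\{n-1,n\}$ together with the direction of cyclic adjacency of $j_1,j_2$, with compatibility versus collision deciding between $(n-3)!$ (or $(n-2)!$ when both positions are forced) and $0$. Your bookkeeping is in fact cleaner than the paper's write-up, which contains slips in two sub-cases (the empty-intersection sums are stated with the post-subtraction values, and the $i_1=n$, $i_2=n-1$ counts are swapped and misstated relative to the proposition), whereas your raw counts $(n-3)(n-4)!$, $(n-4)(n-4)!$, $(n-3)!$, $(n-2)!$, $0$ are exactly those that reproduce the stated entries after subtracting $(n-2)!/(n-1)$.
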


\begin{proof}

We proceed as in the proves of Propositions \ref{prop::FT_h_partition(n-1,1)} and \ref{prop::FT_h_partition(n-2,2)}, by computing $\hat{h}_{\tau_{(n-2,1,1)}}$ with equation (\ref{eq::h_FTaux}). In the subtraction, $\sum_\sigma [\tau_{(n-2,1,1)}(\sigma)]_{ij}$ is the number of permutations such that $\sigma(i_1) = j_1$ and $\sigma(i_2) = j_2$. There are $(n-2)!$ permutations satisfying this condition, then

\begin{equation}\label{eq::h_FT_rightTerm_(n-2,1,1)}
\dfrac{1}{n-1} \sum_\sigma [\tau_{(n-2,1,1)}(\sigma)]_{ij} = \dfrac{(n-2)!}{n-1}
\end{equation}

Regarding $\sum_\sigma f_{A'}(\sigma) \cdot [\tau_{(n-2,1,1)}(\sigma)]_{ij}$, note that $f_{A'}(\sigma) \cdot [\tau_{(n-2,1,1)}(\sigma)]_{ij} = 1$ if and only if $\sigma$ satisfies the following system of equations:

\begin{equation}\label{eq::eqSystemforSigma_(n-2,1,1)}
 \left\{\begin{matrix} 
\hspace*{2mm} \sigma(n-1) = \sigma(n) + 1 \ \text{ (mod $n$)}     \vspace{2mm}\\ 
\hspace*{2mm} \sigma(i_1) = j_1 \vspace{2mm}   \hspace{30mm}\\
\hspace*{2mm} \sigma(i_2) = j_2                \hspace{30mm}
\end{matrix}\right.  
\end{equation}

Similarly to the proves of Propositions \ref{prop::FT_h_partition(n-1,1)} and \ref{prop::FT_h_partition(n-2,2)}, we have to distinguish different cases. Since all the cases can be calculated using basic combinatorics, we will only develop the third and forth as an example (we will skip the first and second ones because they are very similar to those explained in the proof of Proposition \ref{prop::FT_h_partition(n-2,2)}). \\

\begin{itemize}
\item If $\{i_1, i_2\} \cap \{n-1, n\} = \emptyset$,
	\begin{itemize}
	\item If $|j_1 - j_2| = 1 \text{ (mod $n$)}$,\\
	$$\sum_\sigma f_{A'}(\sigma) \cdot [\tau_{(n-2,1,1)}(\sigma)]_{ij} = \dfrac{(n-3)!}{n-1} $$
	\item If $|j_1 - j_2| \ne 1 \text{ (mod $n$)}$,\\
	$$\sum_\sigma f_{A'}(\sigma) \cdot [\tau_{(n-2,1,1)}(\sigma)]_{ij} =  -\dfrac{2(n-4)!}{n-1}$$
	\end{itemize}

\item If $i_1 = n$ and $i_2 \ne n-1$,
	system (\ref{eq::eqSystemforSigma_(n-2,1,1)}) becomes
	\begin{equation}\label{eq::eqSystemforSigma_(n-2,1,1)_i1=n}
	 \left\{\begin{matrix} 
	\hspace*{-4mm} \sigma(n) + 1 = \sigma(n-1) \ \text{ (mod $n$)}     \vspace{2mm}\\ 
	\hspace*{2mm} \sigma(n) = j_1 \vspace{2mm}   \hspace{37mm}\\
	\hspace*{2mm} \sigma(i_2) = j_2                \hspace{37mm}
	\end{matrix}\right.  
	\end{equation}

	\begin{itemize}
	\item If $j_2 = j_1 + 1 \text{ (mod $n$)}$,\\
	System (\ref{eq::eqSystemforSigma_(n-2,1,1)_i1=n}) is incompatible because 
$$\sigma(i_2) = j_2 = j_1+1 = \sigma(n)+1 = \sigma(n-1) \text{ (mod $n$)}$$ Then, $i_2 = n-1$, which is a contradiction. Therefore,

	$$\sum_\sigma f_{A'}(\sigma) \cdot [\tau_{(n-2,1,1)}(\sigma)]_{ij} = 0 $$
	\item Otherwise,\\
	$\sigma(n)$, $\sigma(n-1)$ and $\sigma(i_2)$ are fixed and the rest of the elements can be arbitrarily reordered. So the number of permutations that satisfy system (\ref{eq::eqSystemforSigma_(n-2,1,1)_i1=n}) is the number of permutations that fix 3 elements, that is,
	$$\sum_\sigma f_{A'}(\sigma) \cdot [\tau_{(n-2,1,1)}(\sigma)]_{ij} = (n-3)!$$
	\end{itemize}

\item If $i_1 \ne n-1$ and $i_2 = n$,
	\begin{itemize}
	\item If $j_1 = j_2 + 1 \text{ (mod $n$)}$,\\
	$$\sum_\sigma f_{A'}(\sigma) \cdot [\tau_{(n-2,1,1)}(\sigma)]_{ij} = 0 $$
	\item Otherwise,\\
	$$\sum_\sigma f_{A'}(\sigma) \cdot [\tau_{(n-2,1,1)}(\sigma)]_{ij} =  (n-3)! $$
	\end{itemize}

\item If $i_1 = n-1$ and $i_2 \ne n-1$,
	\begin{itemize}
	\item If $j_1 = j_2 + 1 \text{ (mod $n$)}$,\\
	$$\sum_\sigma f_{A'}(\sigma) \cdot [\tau_{(n-2,1,1)}(\sigma)]_{ij} = 0 $$
	\item Otherwise,\\
	$$\sum_\sigma f_{A'}(\sigma) \cdot [\tau_{(n-2,1,1)}(\sigma)]_{ij} = (n-3)!$$
	\end{itemize}

\item If  $i_1 \ne n$ and $i_2 = n-1$,
	\begin{itemize}
	\item If $j_2 = j_1 + 1 \text{ (mod $n$)}$,\\
	$$\sum_\sigma f_{A'}(\sigma) \cdot [\tau_{(n-2,1,1)}(\sigma)]_{ij} = 0 $$
	\item Otherwise,\\
	$$\sum_\sigma f_{A'}(\sigma) \cdot [\tau_{(n-2,1,1)}(\sigma)]_{ij} = (n-3)!$$
	\end{itemize}

\item If $i_1 = n$ and $i_2 = n-1$,
	\begin{itemize}
	\item If $j_2 = j_1 + 1 \text{ (mod $n$)}$,\\
	$$\sum_\sigma f_{A'}(\sigma) \cdot [\tau_{(n-2,1,1)}(\sigma)]_{ij} = 0 $$
	\item Otherwise,\\
	$$\sum_\sigma f_{A'}(\sigma) \cdot [\tau_{(n-2,1,1)}(\sigma)]_{ij} = (n-3)! $$	
	\end{itemize}

\item If $i_1 = n-1$ and $i_2 = n$,
	\begin{itemize}
	\item If $j_1 = j_2 + 1 \text{ (mod $n$)}$,\\
	$$\sum_\sigma f_{A'}(\sigma) \cdot [\tau_{(n-2,1,1)}(\sigma)]_{ij} = (n-2) $$
	\item Otherwise,\\
	$$\sum_\sigma f_{A'}(\sigma) \cdot [\tau_{(n-2,1,1)}(\sigma)]_{ij} = 0 $$
	\end{itemize}

\end{itemize}

The result stated in the proposition follows from subtracting the value calculated in  (\ref{eq::h_FT_rightTerm_(n-2,1,1)}) to $\sum_\sigma f_{A'}(\sigma) \cdot [\tau_{(n-2,1,1)}(\sigma)]_{ij}$, as indicated by equation (\ref{eq::h_FTaux}).

\end{proof}

\begin{corollary}\label{cor::h_(n-2,1,1)}
The rank of $\hat{h}_{\tau_{(n-2,1,1)}}$ is 2.
\end{corollary}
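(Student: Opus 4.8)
The plan is to read the rank off directly from the closed-form entries listed in Proposition~\ref{prop::FT_h_partition(n-2,1,1)}, by a column analysis analogous to the one used for $\hat{h}_{\tau_{(n-2,2)}}$ in Corollary~\ref{cor::h_(n-2,2)}. The first step is the key structural observation: for every fixed row $i=(i_1,i_2)$, the entry $[\hat{h}_{\tau_{(n-2,1,1)}}]_{ij}$ depends on the column $j=(j_1,j_2)$ only through which of three mutually exclusive classes it belongs to --- class $A$ with $j_2=j_1+1\ (\mathrm{mod}\ n)$, class $B$ with $j_1=j_2+1\ (\mathrm{mod}\ n)$, or class $C$ where neither holds. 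Indeed, the row cases with $i_1=n$ or $i_2=n-1$ distinguish only class $A$, those with $i_1=n-1$ or $i_2=n$ distinguish only class $B$, and the case $\{i_1,i_2\}\cap\{n-1,n\}=\emptyset$ separates adjacency ($A\cup B$) from non-adjacency ($C$). Consequently the matrix has at most three distinct columns $c_A,c_B,c_C$, its rank is at most three, and everything reduces to analysing the span of these three vectors, which I would index by the seven possible row-classes $R_0,\dots,R_6$.

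The second step is to normalise and exhibit a single linear dependency. Factoring the common constant $\kappa=(n-4)!/(n-1)$ out of every entry (using $(n-3)!=(n-3)(n-4)!$ and $(n-2)!=(n-2)(n-3)(n-4)!$), the three columns become integer-coefficient vectors; for instance $c_A=\kappa\,[(n-3),\,-(n-2)(n-3),\,-(n-2)(n-3),\,(n-2)^2(n-3),\,(n-3),\,(n-3),\,-(n-2)(n-3)]$, with $c_B,c_C$ obtained analogously. I would then verify, row-class by row-class, the relation
\[
c_A+c_B+(n-3)\,c_C=0,
\]
which in each of the seven entries collapses to an identity such as $-(n-2)+1+(n-3)=0$ or $(n-2)-1-(n-3)=0$ after pulling out the appropriate common factor. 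This establishes $\mathrm{rank}\le 2$.

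The third step is the matching lower bound: $c_A$ and $c_B$ are linearly independent for $n\ge 4$. They coincide on the class $R_0$ (both equal $\kappa(n-3)\neq 0$), forcing any proportionality constant to be $1$; but on $R_1$ they take the unequal values $-\kappa(n-2)(n-3)$ and $\kappa(n-3)$, ruling this out. Hence $\dim\mathrm{span}\{c_A,c_B,c_C\}=2$ and the rank is exactly $2$. The computation is elementary throughout; the only genuinely delicate point --- the main obstacle --- is the bookkeeping needed to confirm that the single relation $c_A+c_B+(n-3)c_C=0$ holds uniformly across all seven row-classes, i.e. that the value coincidences built into Proposition~\ref{prop::FT_h_partition(n-2,1,1)} conspire to produce precisely one dependency and no more. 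One should also note in passing that all three column classes are nonempty for $n\ge 4$, so that $c_A$, $c_B$ and $c_C$ genuinely occur as columns of the matrix.
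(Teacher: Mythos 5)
Your proof is correct and is essentially the transpose of the paper's argument: the paper groups the rows of $\hat{h}_{\tau_{(n-2,1,1)}}$ into three types up to proportionality ($\mathbf{z}$, $\mathbf{v}$, $\mathbf{w}$, read off from the same Proposition) and exhibits the single dependency $\mathbf{z} = \frac{n-2}{n-3}(\mathbf{v}+\mathbf{w})$, which mirrors exactly your column relation $c_A + c_B + (n-3)\,c_C = 0$. The one point where you go beyond the paper is the explicit lower bound: you verify that $c_A$ and $c_B$ are linearly independent for $n \ge 4$ (and that all three column classes are nonempty), whereas the paper's proof as written only shows the rows are spanned by two vectors, i.e.\ $\mathrm{rank} \le 2$, leaving the matching bound implicit.
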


\begin{proof}
A way of proving that a matrix has rank 2 is by showing that the linear space spanned by its rows has dimension 2. We are going to take this path by defining two row vectors $\mathbf{v}$ and $\mathbf{w}$ which can generate any of the rows of $\hat{h}_{\tau_{(n-2,1,1)}}$. Before defining $\mathbf{v}$ and $\mathbf{w}$, note that, independently of the size of $\hat{h}_{\tau_{(n-2,1,1)}}$, there are only 7 different rows, determined by the following conditions on $i = (i_1, i_2)$:\\

\begin{enumerate}
\item $\{i_1, i_2\} = \emptyset$
\item $i_1 = n-1, \ i_2 \ne n$
\item $i_1 \ne n-1, \ i_2 = n$
\item $i_1 = n-1, \ i_2 = n$
\item $i_1 = n, \ i_2 \ne n-1$
\item $i_1 \ne n, \ i_2 = n-1$
\item $i_1 = n, \ i_2 = n-1$
\end{enumerate}

However, it is easy to see that, actually, there are only 3 types of rows, if we consider proportional rows to be equivalent.\\

\begin{enumerate}
\item Rows satisfying $\{i_1, i_2\} = \emptyset$ are proportional to $\mathbf{z}$ defined as 
$$ \mathbf{z}_j = \left\{\begin{matrix} 
\hspace*{8mm} 1  \qquad\hspace{4mm} \text{ if } |j_1-j_2| = 1 \text{ (mod $n$)}   \vspace{2mm}\\ 
\hspace*{1mm} -\dfrac{2}{n-3}    \qquad \text{ if } |j_1-j_2| \ne 1 \text{ (mod $n$)}
\end{matrix}\right.  $$

\item Rows satisfying $i_1 = n-1$ and $i_2 \ne n$, or $i_1 \ne n-1$ and $i_2 = n$, or $i_1 = n-1$ and $i_2 = n$ are proportional to $\mathbf{v}$ defined as
$$\mathbf{v}_j = \left\{\begin{matrix} 
\hspace*{8mm} 1 \qquad\hspace{4mm} \text{ if } j_1 = j_2 + 1 \text{ (mod $n$)}   \vspace{2mm}\\ 
\hspace*{1mm} -\dfrac{1}{n-2}    \qquad \text{ if } j_1 \ne j_2 + 1 \text{ (mod $n$)}
\end{matrix}\right.  $$

\item Rows satisfying  $i_1 = n$ and $i_2 \ne n-1$, or $i_1 \ne n$ and $i_2 = n-1$, or $i_1 = n$ and $i_2 = n-1$ are propotional to $\mathbf{w}$ defined as
$$\mathbf{w}_j = \left\{\begin{matrix} 
\hspace*{8mm} 1 \qquad\hspace{4mm} \text{ if } j_2 = j_1 + 1 \text{ (mod $n$)}   \vspace{2mm}\\ 
\hspace*{1mm} -\dfrac{1}{n-2}    \qquad \text{ if } j_2 \ne j_1 + 1 \text{ (mod $n$)}
\end{matrix}\right.  $$
\end{enumerate}

It could seem, then, that the rows of $\hat{h}_{\tau_{(n-2,1,1)}}$ are spanned by three vectors, namely $\mathbf{v}$, $\mathbf{w}$ and $\mathbf{z}$; but $\mathbf{z}$ is a linear combination of $\mathbf{v}$ and $\mathbf{w}$. Indeed, 
$$\mathbf{z} = (\mathbf{v} + \mathbf{w})\cdot\dfrac{n-2}{n-3}$$

It follows that $\hat{h}_{\tau_{(n-2,1,1)}}$ is rank-2, since its rows can be expressed as linear combinations of 2 vectors: $\mathbf{v}$ and $\mathbf{w}$.\\
\end{proof}


\begin{proposition}[Fourier transform of the indicator $f_{A'}$]\label{prop::FT_IndicatorsTSP}
The Fourier transform of $f_{A'}$ as defined by equation (\ref{eq::graphFunction_TSP}) satisfies 
\begin{enumerate}
	\item $[\widehat{f_{A'}}]_{\rho_{(n)}} \ne 0$,
	\item $[\widehat{f_{A'}}]_{\rho_{(n-1,1)}} = 0$
	\item $[\widehat{f_{A'}}]_{\rho_\lambda}$ has rank one for $\lambda = (n-2,2), (n-2,1,1)$. 
\end{enumerate}
\end{proposition}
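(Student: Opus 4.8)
The plan is to avoid any further combinatorial computation and instead harvest the three statements from the work already done, using only two structural facts: that translating by a constant leaves every non-trivial coefficient untouched, and that the decompositions in \eqref{eq::decompositionsPermutationRepresentation} let me read off the irreducible coefficients from the permutation-representation transforms. Concretely, since $h = f_{A'} - \tfrac{1}{n-1}$ differs from $f_{A'}$ by a constant (equation \eqref{def::h}), and a constant has vanishing Fourier coefficients away from $(n)$, I have $[\widehat{f_{A'}}]_{\rho_\lambda} = \hat{h}_{\rho_\lambda}$ for every $\lambda \ne (n)$. Thus it suffices to determine the irreducible coefficients of $h$. The engine throughout is that $\hat{h}_{\tau_\lambda}$ is conjugate (via the fixed invertible $C_\lambda$) to the block-diagonal matrix assembled from the $\hat{h}_{\rho_\mu}$ with their multiplicities; because rank is invariant under conjugation and is additive over a direct sum of blocks, I can isolate the rank of a single irreducible block by subtracting off the contributions of the lower blocks, which I pin down first. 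The decisive simplification is that I never need the explicit change-of-basis matrices $C_\lambda$: only conjugation-invariant quantities (being zero, and rank) enter.

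Property (1) is immediate: equation \eqref{eq::FT_trivial_fA_TSP} gives $[\widehat{f_{A'}}]_{\rho_{(n)}} = n\,(n-2)! \ne 0$. For property (2) I would use $\tau_{(n-1,1)} \equiv \rho_{(n)} \oplus \rho_{(n-1,1)}$. By Proposition \ref{prop::FT_h_partition(n-1,1)} the left-hand transform $\hat{h}_{\tau_{(n-1,1)}}$ is the zero matrix, and since the zero matrix is conjugate only to the zero matrix, the block-diagonal form $\hat{h}_{\rho_{(n)}} \oplus \hat{h}_{\rho_{(n-1,1)}}$ must vanish, forcing both blocks to be zero. This yields $\hat{h}_{\rho_{(n-1,1)}} = [\widehat{f_{A'}}]_{\rho_{(n-1,1)}} = 0$, which is property (2); as a by-product it gives $\hat{h}_{\rho_{(n)}} = 0$, confirming retroactively that the constant in \eqref{def::h} was chosen correctly (this is the promised elucidation).

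For property (3) I bootstrap on the two vanishing blocks just obtained. Using $\tau_{(n-2,2)} \equiv \rho_{(n)} \oplus \rho_{(n-1,1)} \oplus \rho_{(n-2,2)}$, the first two blocks are zero, so $\mathrm{rank}\bigl(\hat{h}_{\tau_{(n-2,2)}}\bigr) = \mathrm{rank}\bigl(\hat{h}_{\rho_{(n-2,2)}}\bigr)$; Corollary \ref{cor::h_(n-2,2)} says the left side is one, giving $\mathrm{rank}\bigl([\widehat{f_{A'}}]_{\rho_{(n-2,2)}}\bigr) = 1$. Finally, using $\tau_{(n-2,1,1)} \equiv \rho_{(n)} \oplus \rho_{(n-1,1)} \oplus \rho_{(n-1,1)} \oplus \rho_{(n-2,2)} \oplus \rho_{(n-2,1,1)}$, the $\rho_{(n)}$ block and both copies of the $\rho_{(n-1,1)}$ block vanish, so additivity of rank over the remaining blocks gives
\[
\mathrm{rank}\bigl(\hat{h}_{\tau_{(n-2,1,1)}}\bigr) = \mathrm{rank}\bigl(\hat{h}_{\rho_{(n-2,2)}}\bigr) + \mathrm{rank}\bigl(\hat{h}_{\rho_{(n-2,1,1)}}\bigr).
\]
Corollary \ref{cor::h_(n-2,1,1)} makes the left side equal to $2$, and the previous step makes the first summand equal to $1$, so $\mathrm{rank}\bigl([\widehat{f_{A'}}]_{\rho_{(n-2,1,1)}}\bigr) = 1$, completing property (3).

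I do not expect a genuine obstacle in this proposition itself, since the demanding computations live in the preceding propositions and corollaries. The only points requiring care are bookkeeping ones: correctly carrying the multiplicity $2$ of $\rho_{(n-1,1)}$ inside $\tau_{(n-2,1,1)}$ (so that the subtraction $2 - 1 = 1$ is valid), and explicitly invoking rank-additivity over direct sums together with conjugation-invariance of rank, so that the unknown $C_\lambda$ never has to be produced. If any subtlety arises it will be in justifying that additivity cleanly, which is elementary linear algebra once the block structure is in place.
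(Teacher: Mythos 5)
Your proposal is correct and takes essentially the same route as the paper's own proof: it kills the $\rho_{(n)}$ and $\rho_{(n-1,1)}$ blocks using Proposition \ref{prop::FT_h_partition(n-1,1)} and the decompositions \eqref{eq::decompositionsPermutationRepresentation}, then extracts the rank-one statements for $(n-2,2)$ and $(n-2,1,1)$ from Corollaries \ref{cor::h_(n-2,2)} and \ref{cor::h_(n-2,1,1)} by subtracting known block ranks. Your only deviations are cosmetic but welcome: for property (1) you cite equation \eqref{eq::FT_trivial_fA_TSP} directly (more accurate than the paper's pointer to Proposition \ref{prop::FT_h_partition(n)}, which concerns $h$ rather than $f_{A'}$), and you make explicit the conjugation-invariance and direct-sum additivity of rank, including the multiplicity $2$ of $\rho_{(n-1,1)}$ inside $\tau_{(n-2,1,1)}$, which the paper leaves implicit.
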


\vspace{2mm}

\begin{proof}

\begin{enumerate}

\item We have seen this in Proposition \ref{prop::FT_h_partition(n)}.\\

\item We know that $\hat{h}_{\tau_\lambda} = 0$ for $\lambda = (n)$ (Proposition \ref{prop::FT_h_partition(n)}) and for $\lambda = (n-1,1)$ (Proposition \ref{prop::FT_h_partition(n-1,1)}). We can see, thanks to the decompositions of the permutation representations of equation (\ref{eq::decompositionsPermutationRepresentation}), that $\hat{h}_{\tau_{(n-1,1)}}$ is equivalent to the direct sum of $\hat{h}_{\rho_{(n)}}$ and $\hat{h}_{\rho_{(n-1,1)}}$. Since $\hat{h}_{\tau_{(n-1,1)}} = 0$ and $\hat{h}_{\rho_{(n)}} = 0$, we conclude that $\hat{h}_{\rho_{(n-1,1)}} = 0$.\\

\item In the previous step, we have seen that $\hat{h}_{\rho_{(n)}} = 0$ and $\hat{h}_{\rho_{(n-1,1)}} = 0$, then the decomposition of $\hat{h}_{\tau_{(n-2,2)}}$ is reduced to:

$$\hat{h}_{\tau_{(n-2,2)}} \equiv \hat{h}_{\rho_{(n)}} \oplus \hat{h}_{\rho_{(n-1,1)}} \oplus \hat{h}_{\rho_{(n-2,2)}}  \  \Longleftrightarrow  \  \hat{h}_{\tau_{(n-2,2)}} \equiv \hat{h}_{\rho_{(n-2,2)}}$$

We have proved in Corollary \ref{cor::h_(n-2,2)} that $\hat{h}_{\tau_{(n-2,2)}}$ is rank-one, then, since $\hat{h}_{\rho_{(n-2,2)}}$ is equivalent to $\hat{h}_{\tau_{(n-2,2)}}$, it has to be rank-one too.\\

The only non-zero components in the decomposition of $\hat{h}_{\tau_{(n-2,1,1)}}$ in terms of the irreducible representations are $\hat{h}_{\rho_{(n-2,2)}}$ and $\hat{h}_{\rho_{(n-2,1,1)}}$, that is

$$\hat{h}_{\tau_{(n-2,1,1)}} \equiv \hat{h}_{\rho_{(n-2,2)}} \oplus \hat{h}_{\rho_{(n-2,1,1)}}$$

 $\hat{h}_{\rho_{(n-2,2)}}$ has rank 1 and $\hat{h}_{\tau_{(n-2,1,1)}}$ rank 2 (see Corollary \ref{cor::h_(n-2,1,1)}). Then,$\hat{h}_{\rho_{(n-2,1,1)}}$ must have rank one too.\\   

$[\widehat{f_{A'}}]_{\rho_\lambda} = \hat{h}_{\rho_\lambda}$, except for $\lambda = (n)$, which means that their rank properties are the same, for any coefficient $\lambda\ne (n)$. This concludes our proof.\\
\end{enumerate}

\end{proof}

\paragraph{Final theorem} 

The proof of the theorem to which was devoted this whole section (Theorem \ref{theo::asymmetricTSPcoeff} in section \ref{subsec::coeffTSP}) is analogous to the proof Theorem \ref{theo::LOPcoeff}, by taking into account Propositions \ref{prop::FT_QAP_Kondor} and \ref{prop::FT_IndicatorsTSP}.

\section{{\ttfamily isLOP} and {\ttfamily isTSP} functions}\label{app:isLOP_isTSP}


\subsection{{\ttfamily isLOP} function}\label{subsec:isLOP}

The aim of {\ttfamily isLOP} is to check whether a function $f:\Sigma_n\rightarrow\mathbb{R}$ corresponds to an LOP or not. Given a set of objective values, it specifically checks whether there exists a matrix $A$ such that $f$ is the objective function of an LOP. Assume that there exists a certain ordering among the permutations of size $n$, which means that they are indexed from 1 to $n!$. Then, we can write $\Sigma_n = \{\sigma_1, \sigma_2, \sigma_3, \cdots, \sigma_{n!} \}$. Our question can specifically be phrased as follows: given certain objective values $v_1, v_2, \cdots, v_{n!}$, does a matrix $A = [a_{ij}]\in\mathbb{R}^{n\times n}$ exist such that the LOP function $f$ obtained with input matrix $A$ satisfies $f(\sigma_l) = v_l$, for $l=1,\cdots, n!$? This is the same as wondering whether there exists $A$ such that
\begin{equation}\label{eq:isLOP_initialEquations}
v_l = f(\sigma_l) =  \sum_{i = 1}^{n-1} \sum_{j = i+1}^{n} a_{\sigma_l(i) \sigma_l(j)}, \qquad l = 1,2,\cdots, n!
\end{equation}

Equation (\ref{eq:isLOP_initialEquations}) can be expressed as a linear system, by performing a few operations: 

\begin{equation}\label{eq:isLOP_auxEquations}
 v_l =  \sum_{i = 1}^{n-1} \sum_{j = i+1}^{n} a_{\sigma_l(i) \sigma_l(j)} =  \sum_{s = 1}^{n} \sum_{t = 1}^{n} m_{st}(\sigma_l) a_{st},
\end{equation}

with $m_{st}(\sigma_l)= \left\{\begin{matrix}
\hspace*{2mm}	1 	\qquad	\text{ if } \sigma_l^{-1}(s) < \sigma_l^{-1}(t) \\ 
\\
\hspace*{2mm}	0 	\qquad	\text{ otherwise} \hspace{13mm}
\\ 
\end{matrix}\right.$ \\

Equation (\ref{eq:isLOP_auxEquations}) can be further transformed by mapping the double indices $st$ to a single index $r$, by using the following relation:
$$r = t + (s-1) \cdot n$$

So, by setting $\tilde{a}_r = a_{st}$ and $\tilde{m}_{lr} = m_{st}(\sigma_l)$, equation (\ref{eq:isLOP_auxEquations}) can be rewritten as a linear system:

\begin{equation}\label{eq:isLOP_finalEquations}
 v_l = \sum_{r = 1}^{n^2} \tilde{m}_{lr} \tilde{a}_r,
\end{equation}

with 

\begin{equation}\label{eq:isLOP_matrixElements}
\tilde{m}_{lr}= \left\{\begin{matrix}
\hspace*{2mm}	1	\qquad	\text{ if } \sigma_l^{-1}(s) < \sigma_l^{-1}(t) \\ 
\\
\hspace*{2mm}	0	\qquad	\text{ otherwise}	\hspace{13mm}
\\ 
\end{matrix}\right.
\end{equation}

Since $\tilde{a}_1, \tilde{a}_2, \cdots, \tilde{a}_{n^2}$ are the unknown variables, by defining $\tilde{M} = [\tilde{m}_{lr}]$ and $\mathbf{v} = [v_1 v_2 \cdots v_{n!}]^T$, one can know if the system defined by equation (\ref{eq:isLOP_initialEquations}) has a solution (that is, if the given values $v_1, v_2, \cdots, v_{n!}$ are the objective values of an LOP) by knowing if the following linear system is solvable:
$$\tilde{M}\mathbf{x} = \mathbf{v}$$

Note that $\tilde{M}\in\mathbb{R}^{n!\times n^2}$ is not square. A way of tackling this problem is by finding the least-squares solution. Thus, we solve the problem
$$\min_{\mathbf{x}_\in\mathbb{R}^{n^2} }|| \tilde{M}\mathbf{x} - \mathbf{v} ||$$
If there exists $\mathbf{x}$ such that the norm is 0, then we have found the coefficients of the input matrix of the LOP, and the answer is positive. This works theoretically, but, since the problem is solved computationally, one has to establish a threshold to check whether the norm is approximately 0. If the norm definitely is non-zero, $v_1,v_2,\cdots, v_{n!}$ cannot be the objective-function values of an LOP. Algorithm \ref{Alg:isLOP} summarizes the procedure. A given ordering among permutations is assumed (we used the one given by SnFFT julia package).\\

\begin{algorithm}
\hspace*{\algorithmicindent} \textbf{Input:} $v_1, v_2, \cdots, v_{n!}$ \\
\hspace*{\algorithmicindent} \textbf{Output:} $isLOP$ 

\begin{algorithmic}[]
\caption{Pseudocode of {\ttfamily isLOP}}\label{Alg:isLOP}
\STATE Build matrix $\tilde{M} = [\tilde{m}_{lr}]$, as described by Eq. (\ref{eq:isLOP_matrixElements})  
\STATE Solve $r = \min_{\mathbf{x}_\in\mathbb{R}^{n^2} }|| \tilde{M}\mathbf{x} - \mathbf{v} ||$
\IF {$r = 0$}
\STATE $isLOP = true$
\ELSE
\STATE $isLOP = false$
\ENDIF
\RETURN $isLOP$
\end{algorithmic}
\end{algorithm}


\subsection{{\ttfamily isTSP} function}\label{subsec:isTSP}

{\ttfamily isTSP} is the twin function of {\ttfamily isLOP}, and it checks whether a function $f:\Sigma_n\rightarrow\mathbb{R}$ corresponds to a TSP or not. Assume again that there exists a certain ordering among the permutations of size $n$, so $\Sigma_n = \{\sigma_1, \sigma_2, \sigma_3, \cdots, \sigma_{n!} \}$. Then, our question can specifically be phrased as follows: given certain values $v_1, v_2, \cdots, v_{n!}$, does a matrix $D = [d_{ij}]\in\mathbb{R}^{n\times n}$ exist such that the TSP function $f$ obtained with input matrix $D$ satisfies $f(\sigma_l) = v_l$, for $l=1,\cdots, n!$? This is the same as wondering whether there exists $D$ such that
\begin{equation}\label{eq:isTSP_initialEquations}
v_l = f(\sigma_l) =  d_{\sigma_l(n)\sigma_l(1)}\sum_{i = 1}^{n-1} d_{\sigma_l(i) \sigma_l(i+1)}, \qquad l = 1,2,\cdots, n!
\end{equation}

Similarly to the case of the LOP, equation (\ref{eq:isTSP_initialEquations}) can be expressed as a linear system, by performing a few operations: 

\begin{equation}\label{eq:isTSP_auxEquations}
 v_l =  d_{\sigma_l(n)\sigma_l(1)}\sum_{i = 1}^{n-1} d_{\sigma_l(i) \sigma_l(i+1)} = \sum_{s = 1}^{n} \sum_{t = 1}^{n} m_{st}(\sigma_l) d_{st},
\end{equation}

with $m_{st}(\sigma_l)= \left\{\begin{matrix}
\hspace*{2mm}	1	\qquad	\text{ if } \sigma^{-1}(s) + 1 = \sigma^{-1}(t) \quad (mod\ n) \\ 
\\
\hspace*{2mm}	0	\qquad	\text{ otherwise}	\hspace{35mm}
\\ 
\end{matrix}\right.$ \\

Equation (\ref{eq:isTSP_auxEquations}) can be further transformed by mapping the double indices $st$ to a single index $r$, by using the same relation as for {\ttfamily isLOP}:
$$r = t + (s-1) \cdot n$$

So, by setting $\tilde{d}_r = d_{st}$ and $\tilde{m}_{lr} = m_{st}(\sigma_l)$, equation (\ref{eq:isTSP_auxEquations}) can be rewritten as a linear system:

\begin{equation}\label{eq:isTSP_finalEquations}
 v_l = \sum_{r = 1}^{n^2} \tilde{m}_{lr} \tilde{d}_r,
\end{equation}

with 

\begin{equation}\label{eq:isTSP_matrixElements}
\tilde{m}_{lr}= \left\{\begin{matrix}
\hspace*{2mm}	1	\qquad	\text{ if } \sigma^{-1}(s) + 1 = \sigma^{-1}(t) \quad (mod\ n)\\ 
\\
\hspace*{2mm}	0	\qquad	\text{ otherwise}	\hspace*{35mm}	
\\ 
\end{matrix}\right.
\end{equation}

Since $\tilde{d}_1, \tilde{d}_2, \cdots, \tilde{d}_{n^2}$ are the unknown variables, by defining $\tilde{M} = [\tilde{m}_{lr}]$ and $\mathbf{v} = [v_1 v_2 \cdots v_{n!}]^T$, one can know if system (\ref{eq:isTSP_initialEquations}) has a solution (that is, if the given values $v_1, v_2, \cdots, v_{n!}$ are the objective function values of an TSP) by knowing if the following linear system is solvable:
$$\tilde{M}\mathbf{x} = \mathbf{v}$$

Note that $\tilde{M}\in\mathbb{R}^{n!\times n^2}$ is not square. A way of tackling this problem is by finding the least-squares solution. Thus, we would like to solve the problem
$$\min_{\mathbf{x}_\in\mathbb{R}^{n^2} }|| \tilde{M}\mathbf{x} - \mathbf{v} ||$$

However, this cannot be directly solved, unlike the case of the LOP (see section \ref{subsec:isLOP}), because matrix $\tilde{M}$ is never full-rank. This happens because there are many permutations that represent the same solution, e.g, for $n = 4$, $[1,2,3,4]$ and $[3,4,1,2]$. So, there are many rows of $\tilde{M}$ that are repeated. In order to solve the least-squares problem, the repeated rows have to be removed. Before removing them, however, it is necessary to check if permutations representing the same solution share the same objective value. If this does not happen, then we can assure, without solving the least squares, that $v_1, v_2, \cdots, v_{n!}$ cannot be generated by a TSP. In this intermediate step, one has to take into account whether we are considering the symmetric or the asymmetric TSP, because the number of equivalent permutations depends on the case. After having removed the redundant rows, if the objective values are consistent (that is, if equivalent solutions share the same objective values), then the least squares is solved on the reduced matrix. \\

Algorithms \ref{Alg:isConsistent} and \ref{Alg:isTSP} summarize the procedure. A given ordering among permutations is assumed again (the one given by SnFFT julia package). In Algorithm \ref{Alg:isTSP}, $representative(\sigma)$ is a function that computes a representative of the equivalence class of $\sigma$. That is, among all the permutations that encode the same solution as $\sigma$, a single one is chosen to represent the whole group. Note that this function varies depending on whether we are working with the symmetric or the asymmetric version of the TSP and, what is more, this is the only part of {\ttfamily isTSP} that differs between the symmetric and the asymmetic cases. \\

\begin{algorithm}
\hspace*{\algorithmicindent} \textbf{Input:} $v_1, v_2, \cdots, v_{n!}$ \\
\hspace*{\algorithmicindent} \textbf{Output:} $isConsistent$ 

\begin{algorithmic}[]
\caption{Pseudocode of {\ttfamily isCONSISTENT}}\label{Alg:isConsistent}
\STATE isConsistent = true
\FOR{$i = 1, \cdots, n!$}
	\STATE $\sigma_j = representative(\sigma_i)$
	\IF {$v_i != v_j$}
		\STATE $isConsistent = false$
		\STATE \textbf{break}
	\ENDIF
\ENDFOR
\RETURN $isConsistent$
\end{algorithmic}
\end{algorithm}

\begin{algorithm}
\hspace*{\algorithmicindent} \textbf{Input:} $v_1, v_2, \cdots, v_{n!}$ \\
\hspace*{\algorithmicindent} \textbf{Output:} $isTSP$ 

\begin{algorithmic}[]

\caption{Pseudocode of {\ttfamily isTSP}}\label{Alg:isTSP}
\STATE $isTSP = false$
\STATE isConsistent = \textit{isCONSISTENT}$(v_1, v_2, \cdots, v_{n!})$ 
\IF {isConsistent}
	\STATE Build matrix $\tilde{M} = [\tilde{m}_{lr}]$, as described by Eq. (\ref{eq:isTSP_matrixElements})
	\FOR {$i = n!, \cdots, 1$}
		\IF {$\sigma_i$ != $representative(\sigma_i)$}
			\STATE Delete element $i$ from $v_i$ and row $i$ from $\tilde{M}$			
		\ENDIF
	\ENDFOR 
	\STATE Solve $r = \min_{\mathbf{x}_\in\mathbb{R}^{n^2} }|| \tilde{M}\mathbf{x} - \mathbf{v} ||$
	\IF {$r = 0$}
		\STATE $isTSP = true$
	\ENDIF
\ENDIF
\RETURN $isTSP$
\end{algorithmic}
\end{algorithm}


\section{Impossible ranking method}\label{appendix::impossibleRanking}


The method presented in this section of the appendix is inspired in the results of the experiment of Section \ref{subsec::rankingsLittleExperiment}. In this section, an experiment was designed to computationally analyze what type of rankings generates a function $f:\Sigma_3\longrightarrow \mathbb{R}$ with $\hat{f}_{(1,1,1)} = 0$ (that is, its ``last'' Fourier coefficient is set to 0). The results suggested that the rankings that are generated by such functions can be univoquely determined by certain patterns in which the signature is involved, while there exist a number of rankings that cannot be generated when $\hat{f}_{(1,1,1)} = 0$. The realization of the experiment was possible because the size of the search space was low, $|\Sigma_3| = 3! = 6$, and the number of possible rankings was still manageable, $|\Sigma_{3!}| = |\Sigma_6| = 6! = 720$. It would be desirable to extend the results to higher dimensions, but as a matter of fact, for $n = 4$, $|\Sigma_4| = 24$ and $|\Sigma_{4!}| = 24! \approx 6 \cdot 10^{23}$. So extending the previous result is problematic even for $n = 4$. To overcome this limitation, we reformulated our research question. Instead of wondering ``which rankings are generated when $\hat{f}_{(1,1,\cdots,1)} = 1$ is 0?'', one could wonder ``given a specific ranking of size $n!$, can it be generated with a function $f$ such that $\hat{f}_{(1,1,\cdots,1)} = 0$?''. A method that answers to this question when $n = 4$ is presented, and that could eventually be extended for other purposes.
\\


\subsection{Impossible ranking method for $n = 4$}

The aim of this section is to develope a tool that can be used as a black box which answers to the following question: ``given a ranking of permutations of size 4, can it be generated with a function $f$ such that $\hat{f}_{(1,1,1,1)} = 0$?''. A ranking of the elements of $\Sigma_4$ is a permutation $\tau$ in itself (it reorders the elements of $\Sigma_4$), that is, $\tau\in\Sigma_{4!}$. In this context, we are wondering about the following situation: Given an arbitrary permutation $\tau\in\Sigma_{4!}$, is there any function $f:\Sigma_4\to\mathbb{R}$, with $\hat{f}_{\rho_{(1,1,1,1)}}=0$ that could generate the ranking $\tau$?\\




\begin{enumerate}
\item If $h = f + c\cdot g$, then, for any $\lambda$, $\hat{h}_\lambda = \hat{f}_\lambda + c\cdot \hat{g}_\lambda$.
\item If, for any $\lambda$, $\hat{h}_\lambda = \hat{f}_\lambda + c\cdot \hat{g}_\lambda$, then $h = f + c\cdot g$.
\end{enumerate}


The second observation can be useful for decomposing the inverse FT of a family of coefficients. Let us start with a reduced case, $n = 3$. If $n = 3$, the possible partitions of $n$ are 3: $(3)$, $(2,1)$ y $(1,1,1)$. The dimensions of the irreducible representations associated to them (that is, the dimensions of each of the matrix Fourier coefficients) are the following: $d_{(3)} = 1$, $d_{(2,1)} = 2$ and $d_{(1,1,1)} = 1$. Consider the collections of coefficients shown in Table \ref{BaseFamilies_n3} and the functions given by computing their Fourier inversion (we denote by $\mathcal{F}^{-1}$ the inverse FT): $f_0 = \mathcal{F}^{-1}(\hat{f}_0),\ f_1 = \mathcal{F}^{-1}(\hat{f}_1), \cdots,\ f_5 = \mathcal{F}^{-1}(\hat{f}_5)$. Suppose we have arbitrary Fourier coefficients:

\begin{table*}
\centering
\caption{\footnotesize The ``base'' families of coefficients $\hat{f}_0, \hat{f}_1, \hat{f}_2, \hat{f}_3, \hat{f}_4, \hat{f}_5$.}\label{BaseFamilies_n3} 

\bigskip
\begin{tabular}{ c r c r c r c}
  & & $(3)$ & & $(2,1)$  & & $(1,1,1)$\\
\cline{3-3}\cline{5-5}\cline{7-7}
\vspace{-0.1cm} & & & & & & \\
 \vspace{-0.3cm} & & & & & &\\ 
 $\hat{f}_0$:\hspace{0.1cm} &  & \hspace{0.1cm} [1] &  & $\begin{bmatrix} 0 & 0 \\  0 & 0 \end{bmatrix}$  &  \hspace{0.1cm} & [0]\hspace{0.1cm} \\
& & & & & &\\
\hline\\

 $\hat{f}_1$:\hspace{0.1cm} &  & \hspace{0.1cm} [0] &  & $\begin{bmatrix} 1 & 0 \\  0 & 0 \end{bmatrix}$  &  \hspace{0.1cm} & [0]\hspace{0.1cm} \\
& & & & & &\\
\hline\\

 $\hat{f}_2$:\hspace{0.1cm} &  & \hspace{0.1cm} [0] &  & $\begin{bmatrix} 0 & 1 \\  0 & 0 \end{bmatrix}$  &  \hspace{0.1cm} & [0]\hspace{0.1cm} \\
& & & & & &\\
\hline\\

 $\hat{f}_3$:\hspace{0.1cm} &  & \hspace{0.1cm} [0] &  & $\begin{bmatrix} 0 & 0 \\  1 & 0 \end{bmatrix}$  &  \hspace{0.1cm} & [0]\hspace{0.1cm} \\
& & & & & &\\
\hline\\

 $\hat{f}_4$:\hspace{0.1cm} &  & \hspace{0.1cm} [0] &  & $\begin{bmatrix} 0 & 0 \\  0 & 1 \end{bmatrix}$  &  \hspace{0.1cm} & [0]\hspace{0.1cm} \\
& & & & & &\\
\hline\\

 $\hat{f}_5$:\hspace{0.1cm} &  & \hspace{0.1cm} [0] &  & $\begin{bmatrix} 0 & 0 \\  0 & 0 \end{bmatrix}$  &  \hspace{0.1cm} & [1]\hspace{0.1cm} \\
& & & & & &\\
\hline

\end{tabular}
\end{table*}

$$\hat{f}_{\rho_{(3)}} = [a_0],\quad \hat{f}_{\rho_{(2,1)}} = \begin{bmatrix} a_1 & a_2 \\  a_3 & a_4 \end{bmatrix},\quad \hat{f}_{\rho_{(1,1,1)}} = [a_5]$$

their inverse transform is, according to the previous proposition,
$$\mathcal{F}^{-1}(\hat{f}) = a_0 f_0 + a_1 f_1 + a_2 f_2 + a_3 f_3 + a_4 f_4 + a_5 f_5. $$

Therefore, any function $g:\Sigma_n\to\mathbb{R}$ is a linear combination of $f_0, f_1, \cdots, f_5$. Furthermore, the functions whose $(1,1,1)$ coefficient is null can be expressed as:
$$\mathcal{F}^{-1}(\hat{f}) = a_0 f_0 + a_1 f_1 + a_2 f_2 + a_3 f_3 + a_4 f_4. $$


Having the concept for $n = 3$, it is immediate to extend it to $n = 4$. In this case, the partitions of $n = 4$ are: $(4)$, $(3,1)$, $(2,2)$, $(2,1,1)$ y $(1,1,1,1)$. The dimensions of the irreducible representations are $d_{(4)}=1$, $d_{(3,1)}=3$, $d_{(2,2)}=2$, $d_{(2,1,1)}=3$ y $d_{(1,1,1,1)}=1$. Hence, proceeding as for $n = 3$, we can conclude that any function $f:\Sigma_4\to\mathbb{R}$ can be expressed as a linear combination of $d_{(4)}^2 + d_{(3,1)}^2 +d_{(2,2)}^2 + d_{(2,1,1)}^2 + d_{(1,1,1,1)}^2 = 1^2+3^2+2^2+3^2+1^2 = 24$ functions.\\

In other words, given $f:\Sigma_4\to\mathbb{R}$, there exist $a_0, a_1, \cdots, a_{23}\in\mathbb{R}$ such that
$$f = a_0f_0 + a_1f_1 + \cdots + a_{23} f_{23}$$

Suppose that, similarly to the case $n=3$, $f_{23}$ is the function for which ${(\hat f_{23})}_{\rho_{(1,1,1,1)}} = 1$ and the rest of the Fourier coefficients are 0. Then, for any other function $f$ with coefficient $\hat{f}_{\rho_{(1,1,1,1)}} = 0$, there exist $a_0, a_1, \cdots, a_{22}\in\mathbb{R}$ such that
$$f = a_0f_0 + a_1f_1 + \cdots + a_{22} f_{22}$$

Now, we may wonder what rankings can be generated by forcing the coefficient $(1,1,1,1)$ to be 0. Firstly, we should observe that the function $f_0$ ($f_0$ is the function which satisfies that all the Fourier coefficients are 0 except for $\hat{(f_0)}_{\rho_{(4)}} = 1$) does not add any information. This happens because 
$$f_0(\sigma) = \dfrac{1}{4!},\ \forall\sigma\in\Sigma_4$$
is constant. Therefore, we can discard this function and limit our study to the functions of the following form:
\begin{equation}\label{eq:comb_lineal}
f = a_1f_1 + a_2f_2 + \cdots + a_{22} f_{22}
\end{equation}

However, wondering which rankings will be generated is too general. So a first step consists in selecting a specific ranking $\tau\in\Sigma_{n!}$ and checking if $\tau$ can be generated with the functions that have a null $(1,1,1,1)$ coefficient. In other words, we would like to see if there exists a function $f$ that has the form given by equation (\ref{eq:comb_lineal}) and that also satisfies

$$
\left\{\begin{matrix}
f(\sigma_{\tau(1)}) > f(\sigma_{\tau(2)})\\ 
f(\sigma_{\tau(2)}) > f(\sigma_{\tau(3)})\\
\vdots\\ 
f(\sigma_{\tau(n!-1)}) > f(\sigma_{\tau(n!)})
\\ 
\end{matrix}\right.
$$ 

Using equation (\ref{eq:comb_lineal}), this system can be rewritten as
$$
\left\{\begin{matrix}
a_1f_1(\sigma_{\tau(1)}) + \cdots + a_{22}f_{22}(\sigma_{\tau(1)}) > a_1f_1(\sigma_{\tau(2)}) + \cdots + a_{22}f_{22}(\sigma_{\tau(2)})\\ 
a_1f_1(\sigma_{\tau(2)}) + \cdots + a_{22}f_{22}(\sigma_{\tau(2)}) > a_1f_1(\sigma_{\tau(3)}) + \cdots + a_{22}f_{22}(\sigma_{\tau(3)})\\
\vdots\\ 
a_1f_1(\sigma_{\tau(n!-1)}) + \cdots + a_{22}f_{22}(\sigma_{\tau(n!-1)}) > a_1f_1(\sigma_{\tau(n!)}) + \cdots + a_{22}f_{22}(\sigma_{\tau(n!)})
\\ 
\end{matrix}\right.
$$ 

Equivalently,

$$
\left\{\begin{matrix}
a_1(f_1(\sigma_{\tau(1)}) - f_1(\sigma_{\tau(2)})) + \cdots + a_{22}(f_{22}(\sigma_{\tau(1)}) - f_{22}(\sigma_{\tau(2)})) > 0\\ 
a_1(f_1(\sigma_{\tau(2)}) - f_1(\sigma_{\tau(3)})) + \cdots + a_{22}(f_{22}(\sigma_{\tau(2)}) - f_{22}(\sigma_{\tau(3)})) > 0\\
\vdots\\ 
a_1(f_1(\sigma_{\tau(n!-1)}) - f_1(\sigma_{\tau(n!)})) + \cdots + a_{22}(f_{22}(\sigma_{\tau(n!-1)}) - f_{22}(\sigma_{\tau(n!)})) > 0
\\ 
\end{matrix}\right.
$$ \\

One can also express it with the matricial form:
$$ \underbrace{\begin{bmatrix} 
f_1(\sigma_{\tau(1)}) - f_1(\sigma_{\tau(2)}) & f_2(\sigma_{\tau(1)}) - f_2(\sigma_{\tau(2)}) & \cdots & f_{22}(\sigma_{\tau(1)}) - f_{22}(\sigma_{\tau(2)}) \\  
f_1(\sigma_{\tau(2)}) - f_1(\sigma_{\tau(3)}) & f_2(\sigma_{\tau(2)}) - f_2(\sigma_{\tau(3)}) & \cdots & f_{22}(\sigma_{\tau(2)}) - f_{22}(\sigma_{\tau(3)}) \\
\vdots & \vdots & \vdots & \vdots \\
 f_1(\sigma_{\tau(23)}) - f_1(\sigma_{\tau(24)}) & f_2(\sigma_{\tau(23)}) - f_2(\sigma_{\tau(24)}) & \cdots & f_{22}(\sigma_{\tau(23)}) - f_{22}(\sigma_{\tau(24)})\\ 
\end{bmatrix}}_\text{$\mathrm{F}$}
\underbrace{\begin{bmatrix} a_1  \\  a_2 \\ \vdots \\ a_{22} \end{bmatrix}}_\text{$\mathbf{a}$} > 
\begin{bmatrix} 0  \\  0 \\ \vdots \\ 0 \end{bmatrix}
$$

To sum up, the problem can be reduced to finding whether the system $\mathrm{F}\mathbf{a} > 0$ has a solution.\\

For this purpose, we can refer to Gordan's alternative theorem (see \cite{mangasarian1981stable}):\\

\begin{theorem}[Gordan's alternative theorem]

Given a matrix $D\in\mathbb{R}^{m\times k}$, the following statements are equivalent:\\
\begin{enumerate}
\item There exists $\mathbf{y}\in\mathbb{R}^k$ such that $D\mathbf{y} > 0$.
\item $D^T\mathbf{v}= 0$, $0\ne\mathbf{v}\ge 0$ has no solution $\mathbf{v}\in\mathbb{R}^m$.
\end{enumerate}
\end{theorem}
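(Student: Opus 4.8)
The plan is to establish the equivalence as a genuine dichotomy: I would show that the two systems can never hold simultaneously (which gives $1\Rightarrow 2$), and that whenever the second system is unsolvable the first must be solvable (which gives $2\Rightarrow 1$).

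First I would treat the direction $1\Rightarrow 2$, which is the elementary one. Suppose, for contradiction, that both alternatives held: there is a $\mathbf{y}$ with $D\mathbf{y}>0$ and a $\mathbf{v}\geq 0$, $\mathbf{v}\neq 0$, with $D^T\mathbf{v}=0$. Pairing the two, $\mathbf{v}^T(D\mathbf{y}) = (D^T\mathbf{v})^T\mathbf{y} = 0$; but $\mathbf{v}^T(D\mathbf{y}) = \sum_i v_i (D\mathbf{y})_i$ is a sum of nonnegative terms with at least one strictly positive (since some $v_i>0$ while every $(D\mathbf{y})_i>0$), hence strictly positive. This contradiction shows the two systems are mutually exclusive, so solvability of the first forces unsolvability of the second.

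The substantive direction is $2\Rightarrow 1$, which I would prove in contrapositive form: if $D\mathbf{y}>0$ has no solution, then $D^T\mathbf{v}=0$, $0\neq\mathbf{v}\geq 0$ does. The key reformulation is that $D\mathbf{y}>0$ is solvable if and only if $D\mathbf{y}\geq\mathbf{1}$ is solvable (simply scale $\mathbf{y}$), where $\mathbf{1}$ denotes the all-ones vector; equivalently, $\mathbf{1}$ lies in the set $K=\{D\mathbf{y}-\mathbf{s}:\mathbf{y}\in\mathbb{R}^k,\ \mathbf{s}\in\mathbb{R}^m_{\geq 0}\}$. This $K$ is the Minkowski sum of the subspace $\mathrm{range}(D)$ and the nonnegative orthant, hence a polyhedral, and in particular closed, convex cone. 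Assuming $\mathbf{1}\notin K$, I would apply the separating hyperplane theorem to the point $\mathbf{1}$ and the closed convex set $K$ to obtain a vector $\mathbf{v}$ with $\mathbf{v}^T\mathbf{1}<0\leq\mathbf{v}^T\mathbf{z}$ for all $\mathbf{z}\in K$ (the lower bound being $0$ because $K$ is a cone containing the origin, so a negative value could be scaled to $-\infty$). Specializing $\mathbf{z}=D\mathbf{y}$ for arbitrary $\mathbf{y}$ forces $\mathbf{v}^TD\mathbf{y}=0$ for all $\mathbf{y}$, i.e. $D^T\mathbf{v}=0$; specializing $\mathbf{z}=-\mathbf{e}_i$ forces $v_i\leq 0$. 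Then $\mathbf{w}=-\mathbf{v}$ satisfies $\mathbf{w}\geq 0$, $\mathbf{w}\neq 0$ (since $\mathbf{v}^T\mathbf{1}<0$), and $D^T\mathbf{w}=0$, which is exactly a solution of the second system.

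The main obstacle is the separation step of the hard direction, and in particular justifying that $K$ is closed so that strict separation of the external point $\mathbf{1}$ is legitimate; I would handle this by invoking the fact that a Minkowski sum of finitely many polyhedra is again polyhedral, hence closed, rather than attempting an ad hoc closedness argument. Everything else reduces to the cone-scaling observations and the choice of the test vectors $\mathbf{z}$ that let me read off $D^T\mathbf{v}=0$ and $\mathbf{v}\leq 0$ from the separating inequality.
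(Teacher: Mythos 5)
Your proof is correct. Note first that the paper itself does not prove this statement: Gordan's theorem is quoted as a known result with a citation to the literature (Mangasarian), so there is no in-paper argument to compare against, and a self-contained proof like yours is strictly more than the paper provides. Your easy direction ($1\Rightarrow 2$ via $\mathbf{v}^T D\mathbf{y} = (D^T\mathbf{v})^T\mathbf{y}$) is the standard mutual-exclusivity computation and is airtight. Your hard direction is the classical separation proof, and you correctly identify the one point where such arguments typically go wrong, namely the closedness of $K$: a Minkowski sum of closed convex cones need not be closed in general, but since $K = \mathrm{range}(D) + \{-\mathbf{s} : \mathbf{s}\ge 0\}$ is a sum of polyhedral cones it is polyhedral, hence closed, so strict separation of $\mathbf{1}$ from $K$ is legitimate, and the cone-scaling argument pinning the separating functional to $\mathbf{v}^T\mathbf{1} < 0 \le \mathbf{v}^T\mathbf{z}$ on $K$ is right. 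One small slip in the prose: you describe $K$ as the sum of $\mathrm{range}(D)$ and the \emph{nonnegative} orthant, but with your definition $K=\{D\mathbf{y}-\mathbf{s} : \mathbf{s}\ge 0\}$ it is the sum of $\mathrm{range}(D)$ and the \emph{nonpositive} orthant. This is harmless, since every specialization you actually use ($\mathbf{z}=D\mathbf{y}$ with $\pm\mathbf{y}$ to get $D^T\mathbf{v}=0$, and $\mathbf{z}=-\mathbf{e}_i$ to get $v_i\le 0$) is consistent with the definition rather than the description, and the polyhedrality/closedness claim holds for either orthant; still, you should fix the wording. An alternative you might note: the theorem also follows in one line from LP duality or from Farkas' lemma applied to $D\mathbf{y}\ge\mathbf{1}$, which trades your separation argument for a different (but equally nontrivial) black box; your route has the advantage of making the only analytic subtlety, closedness, explicit.
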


If we apply this theorem to our present situation, we obtain:\\

\begin{corollary}
If $n = 4$, given a ranking $\tau\in\Sigma_{n!}$, there exists a function $f$ with a null $(1,1,1,1)$ coefficient that generates $\tau$ if and only if the system $\mathrm{F}^T\mathbf{v}= 0$ has no solution $0\ne\mathbf{v}\ge 0$.
\end{corollary}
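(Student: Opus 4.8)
The plan is to recognize this corollary as a direct specialization of Gordan's alternative theorem to the matrix $\mathrm{F}$ built above, so that almost all of the work is already done in the reduction preceding the statement; what remains is to match the two formulations cleanly and invoke Gordan.

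First I would make the reduction precise. Any function $f:\Sigma_4\to\mathbb{R}$ with $\hat{f}_{\rho_{(1,1,1,1)}}=0$ can be written as $f = a_0 f_0 + a_1 f_1 + \cdots + a_{22} f_{22}$. Since $f_0$ is the constant function $1/4!$, the term $a_0 f_0$ shifts every objective value by the same amount and hence does not change the induced ranking; I may therefore assume $a_0 = 0$ and restrict attention to functions of the form of equation (\ref{eq:comb_lineal}). The assertion that such an $f$ generates $\tau$ means exactly the chain of strict inequalities $f(\sigma_{\tau(1)}) > \cdots > f(\sigma_{\tau(24)})$, which, after subtracting consecutive terms and collecting the coefficients $a_1,\ldots,a_{22}$, is precisely the system $\mathrm{F}\mathbf{a} > 0$ recorded just before the corollary. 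Thus there exists a function with null $(1,1,1,1)$ coefficient generating $\tau$ if and only if $\mathrm{F}\mathbf{a} > 0$ admits a solution $\mathbf{a}\in\mathbb{R}^{22}$.

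Next I would apply Gordan's alternative theorem with $D = \mathrm{F}$ (here $m = 23$ rows and $k = 22$ columns) and $\mathbf{y} = \mathbf{a}$. The theorem asserts that the existence of $\mathbf{a}$ with $\mathrm{F}\mathbf{a} > 0$ is equivalent to the nonexistence of a solution of $\mathrm{F}^T\mathbf{v} = 0$, $0\ne\mathbf{v}\ge 0$. Chaining this equivalence with the reduction of the previous paragraph yields exactly the claim: the ranking $\tau$ is generable by a function with vanishing $(1,1,1,1)$ coefficient precisely when $\mathrm{F}^T\mathbf{v} = 0$ has no nonnegative nonzero solution.

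I do not expect a genuine obstacle, since the corollary is a clean application of a standard theorem of the alternative. The one point requiring care is the treatment of the constant function $f_0$: one must justify that discarding it is harmless, which follows from the invariance of rankings under global additive shifts. A secondary point worth stating explicitly is that the inequalities are \emph{strict}, reflecting the injective, tie-free setting, so that the strict version of Gordan's theorem — rather than a nonstrict Farkas-type variant — is the correct instrument.
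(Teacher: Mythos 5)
Your proposal is correct and follows essentially the same route as the paper: the reduction to the linear span of $f_1,\ldots,f_{22}$ (discarding the constant $f_0$ since additive shifts leave rankings unchanged), the translation of the ranking condition into the strict system $\mathrm{F}\mathbf{a}>0$, and the direct invocation of Gordan's alternative theorem with $D=\mathrm{F}$. Your explicit remarks on why dropping $f_0$ is harmless and on the need for the strict (Gordan) rather than nonstrict (Farkas-type) alternative make the argument, if anything, slightly more careful than the paper's own presentation.
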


\subsection{What happens when one tries to generalize the method?} 

In the end, when applying this method, we have to solve the system $\mathrm{F}\in\mathbb{R}^{m\times k}$, where $m$ is $(n!)!-1$ and $k$ is the number of ``base functions" that we choose (that is, the sum of the squares of $d_{\rho_\lambda}$, for all the non-null coefficients $\lambda$, except for the trivial coefficient $(n)$). \\

If we are using the same non-null coefficients as in the QAP, then we have only 4, if $n = 3$, and the dimensions of the system are $m\times k = 5\times 4$. If $n = 4$, the dimesions of the final system are $m\times k = 23 \times 22$. In both cases, we are lucky, because, even though the equation $\mathrm{F}^T\mathbf{v} = 0$ has an infinite number of solutions, the space of solutions has dimension $m-k = 1$, which means that it is generated by a single vector and we only have to check that.\\

But the application of the method is difficult for higher dimensions. Even for $n = 5$ it gets complicated, since $m\times k = 119\times 77$. The space of solutions of $\mathrm{F}^T\mathbf{v} = 0$ has dimension $m-k$, which becomes hardly controllable. Among these solutions, we can discard the ones involving complex numbers, but even so, in all the specific cases that we have tried, it is still too complicated to have a systematic approach. In addition, as $n$ grows, the difference between the dimensions of the system grows further, because we are only considering 4 coefficients and $(n!)!-1$ grows really rapidly. \\


\begin{acknowledgements}
This work has been partially supported by Spanish Ministry of Economy, Industry and Competitiveness (TIN2016-78365R). Jose A. Lozano is also supported by Basque Government through BERC 2018-2021 and Elkartek programs, and by Spanish Ministry of Economy and Competitiveness MINECO: BCAM Severo Ochoa excellence accreditation SEV-2017-0718. Anne Elorza holds a predoctoral grant (ref. PIF17/293) from the University of the Basque Country.
\end{acknowledgements}

\newpage
\bibliographystyle{spmpsci}      
\bibliography{bib_firstFourierPaper}
%

\end{document}